\documentclass[aps,prd,twocolumn,superscriptaddress,nofootinbib,floatfix]{revtex4-2}
\usepackage{amsmath,amssymb,bm}
\usepackage{graphicx}
\usepackage{hyperref}
\usepackage{orcidlink}
\usepackage{booktabs}

\newtheorem{theorem}{Theorem}
\newtheorem{proposition}{Proposition}
\newenvironment{proof}{\par\noindent\textit{Proof.}\ }{\hfill$\square$\par}

\newcommand{\Pc}{P_c}
\newcommand{\Z}{\mathbb{Z}}

\begin{document}

\title{Mass as pairing: an explicit Majorana mass on half of a generation does not seesaw in symmetric mass generation}

\author{Piotr S. Topa\,\orcidlink{0009-0005-2870-2463}}
\affiliation{Independent researcher}

\date{October 6, 2026}

\begin{abstract}
A fermion mass is a pairing between a field and a partner of conjugate charge: an elementary fermion (the Higgs--Dirac mass), a self-conjugate partner with a Majorana mass (the seesaw), or a three-fermion composite (symmetric mass generation, SMG). We give half of a Spin(10)-like generation an explicit Majorana mass in a four-dimensional SMG model (four reduced staggered flavours with an on-site vertex) and ask what the other half does in a phase with no condensate. The statement of the title is a symmetry statement. Theorem: with the source on, every mass-type term of the unsourced half---the light part of each of the sixteen lattice strings that anticommute with the kinetic operator, flavour-symmetric and flavour-sextet---averages to zero under the symmetry group of the sourced action, so a light mass term, and with it the light--heavy mixing a seesaw needs, requires a spontaneously broken symmetry; this is an exact lattice statement on every box carrying the symmetry and a general proposition. The simulated source is colour-democratic: a $\Delta_R$-type Majorana mass on the right-handed taste doublet of every flavour, breaking $SU(4)_{\rm PS}$ to $SO(4)$; the single-$\nu_R$ split lies outside the $\tau_2K$ (quaternion-linear) positivity class used here, and the symmetry statement covers it. The measured content, on lattices up to $8^4$, is (i) the unsourced half in the symmetric-gap phase under a source whose own linear response is screened to at most $1.1\,\%$ of free: its single-fermion readout at $p_{\min}$ stays at $0.3\,\%$ of free and its odd-part exponent at its SMG-phase value, the same for both doublets, while its pair susceptibility relative to free rises from $0.2\,\%$ to $1.2$--$1.7\,\%$ at $h=2$ on $8^4$; on like mode sets that susceptibility falls with the volume on both available pairs at $h=2$, on the $(4^4,8^4)$ pair at $h=2$ no inequivalent light mass channel grows faster than free, and symmetry breaking at that point is not resolved at $L\le8$ beyond this two-volume statement; and (ii) the detuning at the critical point, where the explicit mass collapses the critical $\sigma$ channel, raises the light pair-channel effective mass monotonically toward free as a saturating crossover rather than a power law, and moves the light half toward its symmetric-phase values. The fermion-loop curvature and the direction tests on two volumes, MIXED by their pre-registered letter, are consistent with a critical coupling rising with the mass; $y_c(h)$ is not located. The pole-versus-zero character of the light propagator at $p\to0$ is not measured at $L\le8$. The reported outcomes are readings of the data after the pre-registered tests, most of which returned no verdict, MIXED or VOID; the criteria that returned a verdict as written are named in Sec.~\ref{sec:methods}. Beside this: no first-order signature of the merged SMG transition at $L\le8$, the hand-over of a $\Z_4$-odd response from the elementary fermion to a lattice composite, and the flavour-democracy corollary for Majorana sources in the $\tau_2K$ class. Every claim is listed with its evidence class (proven, machine-checked, or measured) and a runnable check in the companion repository.
\end{abstract}

\maketitle

\section{Introduction}
\label{sec:intro}

In the Standard Model every fermion mass is a pairing: a left-handed field is paired with a right-handed partner of conjugate charge, and the Higgs condensate supplies the charge that the pairing would otherwise violate. The seesaw~\cite{Minkowski:1977sc,Gell-Mann:1979vob,Yanagida:1979as,Mohapatra:1979ia} pairs the neutrino with a partner that is its own conjugate, a right-handed neutrino with a Majorana mass. Symmetric mass generation (SMG)~\cite{Eichten:1985ft,Wang:2022ucy,Tong:2021phe,Razamat:2020kyf}, reviewed in Refs.~\cite{Hasenfratz:2025lti,Hasenfratz:2026orf}, pairs a fermion with a composite built from the fermions themselves, so that a gap opens with no condensate. This paper asks what happens when two of these meet in one multiplet.

The organising object is a $2\times2$ partner matrix. For a fermion $\chi$ with Dirac mixing $\Delta$ to a partner $\Psi$ that may carry a Majorana mass $M_p$,
\begin{equation}
G^{-1}=\begin{pmatrix} ip & \Delta\\ \Delta & ip+M_p\end{pmatrix},
\label{eq:partner}
\end{equation}
the light propagator obeys $G_{\chi\chi}(p\to0)=-M_p/\Delta^2$. For $M_p=0$ it has a zero at $p=0$ with gap $\Delta$, $|G_{\chi\chi}|=|p|/(p^2+\Delta^2)$; for $M_p\gg\Delta$ the light eigenvalue is the seesaw mass $\Delta^2/M_p$. The Higgs--Dirac mass (elementary partner, $M_p=0$), the seesaw (elementary partner, $M_p\gg\Delta$) and SMG (composite partner, $M_p=0$ forced by symmetry) are the three corners of this one matrix, and all three are known. The SMG corner, in which a Weyl fermion of the Spin(10) $\mathbf{16}$ pairs with the three-fermion composite $\bar\psi\bar\psi\psi\in\overline{\mathbf{16}}$, is the strong-coupling picture of Eichten and Preskill~\cite{Eichten:1985ft,Golterman:1992yha}, restated by Golterman and Shamir~\cite{Golterman:2023zqf,Golterman:2025boq} and by Wang and You~\cite{Wang:2022ucy,You:2017ltx}. The anomaly-free remnant of the fermion-number $U(1)$ of one generation is a $\Z_4$~\cite{Garcia-Etxebarria:2018ajm,Hsieh:2018ifc,Davighi:2019rcd,Wan:2019gqr,Razamat:2020kyf,Wang:2020xyo}, whose generator squares to fermion parity, $\Z_4^2=(-1)^F$. Every fermionic operator, elementary or composite, has odd $\Z_4$ charge, and every Majorana bilinear has charge $2$~\cite{Tong:2021phe,Razamat:2020kyf,Wang:2020xyo}. A $\Z_4$-symmetric phase therefore has no Majorana (self-conjugate) pole, while a Dirac pole pairing $\psi$ (charge $1$) with $\bar\psi\bar\psi\psi$ (charge $3\equiv-1$) is allowed. In the matrix, SMG is the composite-partner corner with $M_p=0$ forced by the $\Z_4$. The matrix is bookkeeping, the frame for the question below.

The question put to the lattice here is: give half of a Spin(10)-like generation an explicit Majorana mass and ask what the other half does in a phase with no condensate. We found no earlier lattice study of it (Sec.~\ref{sec:discussion}). In the Higgs language the answer is a selection rule---a Majorana mass for the left-handed neutrino needs $\langle H\rangle^2$~\cite{Weinberg:1979sa}---and in Spin-$\Z_4$ language a Majorana neutrino mass breaks the $\Z_4$~\cite{Razamat:2020kyf,Tong:2021phe,Wang:2020xyo,Wang:2020gqr}. The content of this paper is the statement and its test inside an un-Higgsed phase in which composite partners exist. The source we can simulate is fixed by positivity. The $\tau_2K$ (quaternion-linear) positivity class of Majorana sources, the class the RHMC uses, is flavour-democratic (Sec.~\ref{sec:model}), so the ``half'' is the right-handed taste doublet of every flavour, and the source is a $\Delta_R$-type Majorana mass with the same value for all four Pati--Salam colours, breaking $SU(4)_{\rm PS}$ to $SO(4)$. The single-$\nu_R$ split lies outside that class; the symmetry statement (Theorem~\ref{thm:zeros}, Proposition~\ref{prop:noseesaw}) covers it. We prove that no mass term can appear on the unsourced half unless a symmetry breaks spontaneously, and we measure on lattices up to $8^4$. In the symmetric-gap phase the unsourced half keeps its gap, read by its single-fermion momentum readout and odd-part exponent, with no inequivalent light mass channel growing faster than free on the $(4^4,8^4)$ pair and symmetry breaking not resolved at $L\le8$. At the critical point the explicit mass moves the unsourced half toward its symmetric-phase values, and the fermion-loop curvature together with the direction tests are consistent with a critical coupling that rises with the mass. No light mass-type term can appear without spontaneous symmetry breaking (Theorem~\ref{thm:zeros}), and the single-fermion light readouts do not move under the source at $L\le8$.

The lattice model is the four-flavour reduced-staggered model of Butt, Catterall, Hartshaw and Hasenfratz (BCHH)~\cite{Butt:2026lkt}, whose $SU(4)$-invariant on-site vertex decouples into a $\sigma$ triplet and whose four flavours carry the Pati--Salam content $(\mathbf4,\mathbf2,\mathbf1)\oplus(\bar{\mathbf4},\mathbf1,\mathbf2)$ of one $\mathbf{16}$~\cite{Pati:1974yy,Butt:2021brl,Catterall:2020fep}. Around its merged critical point $\Pc$ we also report three results of independent interest: the merged transition shows no first-order signature at $L\le8$ (a test BCHH do not report); the response to a $\Z_4$-odd field moves from the elementary fermion to its composite partner across the transition; and the $\tau_2K$ class of Majorana sources is flavour-democratic.

Every claim in the text corresponds to an entry of the claim map (Appendix~\ref{app:claims}), which gives its identifier, its evidence class---proven (symbolic or algebraic), machine-checked (exact finite verification with quoted tolerances), or measured (always with a known-answer gate run through the identical pipeline)---and its runnable check in the companion repository~\cite{repo}. Secs.~\ref{sec:model}--\ref{sec:nomass} contain the structure and the symmetry statements; Sec.~\ref{sec:methods} the instruments and the verdict ledger; Secs.~\ref{sec:order}--\ref{sec:halfmass} the measurements; Sec.~\ref{sec:discussion} the neutrino selection rules, the limits of $L\le8$ and the prior art.

\subsection{The physical picture in plain terms}
\label{sec:picture}

\paragraph{The question.} A Weyl $\mathbf{16}$ of Spin(10) can pair with itself only in the $\mathbf{10}$ or the $\mathbf{126}$; a $\mathbf{126}$ condensate in its $SU(5)$-singlet direction gives exactly one Majorana mass, the right-handed neutrino's, which is the input of the seesaw~\cite{Babu:1992ia}. In the gauged Spin(10) theory only a $\Z_4$ of the fermion-number $U(1)$ survives the anomalies; in the lattice model, which has no gauge field, the $\varepsilon$ vertex breaks the lattice $U(1)_\varepsilon$ explicitly to a $\Z_4$ (row K1.6 of Table~\ref{tab:algebra}). Its generator acts as $i$ on every field of the $\mathbf{16}$ and as $-1$ on any fermion pair. A phase that keeps the $\Z_4$ can give no fermion, elementary or composite, a self-conjugate mass, but it can still pair a fermion with a partner of opposite charge, and the composite $\bar\psi\bar\psi\psi$ is such a partner: that pairing is SMG. The question is what an explicit Majorana mass on half of the multiplet does to the other half in such a phase, where the partners are composites and there is no Higgs field.

\paragraph{The idea.} A mass term for the light half is a bilinear with definite quantum numbers under the symmetries that survive the explicit mass. When every such bilinear transforms non-trivially under an exact symmetry, its expectation value is zero unless that symmetry breaks spontaneously; a seesaw block built from symmetric couplings is such a bilinear. The light half can then be gapped only symmetrically, as a Green's-function zero, or by spontaneous breaking. The discriminator between them is the mass-type part of the light propagator at $p\to0$: pole or zero.

\paragraph{What was done.} The lattice model has four real staggered fields per site, coupled by a four-fermion vertex decoupled into a three-component boson $\sigma$. At weak coupling the fermions are massless (SYM); at strong coupling they are gapped with no condensate (SMG); the two phases meet at the critical point $\Pc$. A taste-chiral Majorana mass $hC_\chi$ gaps exactly one of the two taste doublets of every flavour in the free theory (the heavy half) and leaves the other (the light half) massless: this is model N1. The mass-type strings of the light half, those anticommuting with the kinetic operator, are enumerated and their group averages under the surviving symmetries computed exactly. The light half is then measured in the SMG phase and at $\Pc$ on boxes up to $8^4$ with $h$ up to $3$, every number read against the free theory on the same box.

\paragraph{What was found.} With $hC_\chi$ on, every mass-type term of the light half averages to zero under an exact symmetry of the sourced action (Theorem~\ref{thm:zeros}). In the SMG phase, where the source's own linear response is screened to at most $1.1\,\%$ of free, the light half keeps its gap as $h$ grows, with a small drift toward free; on the $(4^4,8^4)$ pair no inequivalent light mass channel grows faster than free, and whether a symmetry breaks there is not resolved at $L\le8$. At $\Pc$ the heavy half's mass removes it from the fermion loop that drives the critical $\sigma$ channel; the channel collapses, and the light half at fixed coupling becomes more free-like, never more gapped than free. The response is a saturating crossover with scale $h\approx2$, where the heavy mass reaches the lattice bandwidth, not a power law; the data are consistent with a critical coupling rising with $h$, and the critical line $y_c(h)$ itself is not located.

\section{One generation and its lattice proxy}
\label{sec:model}

Table~\ref{tab:algebra} collects the exact facts on which the rest stands. The group theory is classical~\cite{Slansky:1981yr,Georgi:1974my,Fritzsch:1974nn}; the anomaly computations reproduce the published conditions~\cite{Garcia-Etxebarria:2018ajm,Hsieh:2018ifc,Ibanez:1991hv}; the certified content is the enumeration, the identities and the channel fractions. The identities are standard: $\Phi_{126}\!\cdot\!\Phi_{126}\equiv0$ because $\mathbf{126}\otimes\mathbf{126}$ contains no singlet (the invariant pairs the $\mathbf{126}$ with its conjugate $\overline{\mathbf{126}}$), and the uniqueness of the quartic singlet is a plethysm count. The consequence used here is that the only $U(1)_\psi$-breaking quartic of a $\mathbf{16}$ lives in the $\mathbf{10}$ channel, $(\Phi_{126}\Phi_{126})_{54}=-10\,(\Phi_{10}\Phi_{10})_{54}$; together with the $\mathbf{10}/\mathbf{126}$ content of the lattice bilinears this fixes which continuum channel each lattice operator probes.

\begin{table*}[!t]
\caption{Exact structure of one Spin(10) generation and of its four-flavour lattice proxy. Class: P proven, M machine-checked (tolerances as asserted by the check scripts), X measured.}
\label{tab:algebra}
\begin{ruledtabular}
\begin{tabular}{p{0.75\textwidth}cl}
statement & class & claim\\
\colrule
$\mathrm{Sym}^2(\mathbf{16})=\mathbf{10}\oplus\mathbf{126}$, $\Lambda^2(\mathbf{16})=\mathbf{120}$; a generic $\mathbf{126}$ vev has rank 16, the $SU(5)$ singlet rank 1 & M & K1.1\\
Unique holomorphic quartic $\Phi_{10}\!\cdot\!\Phi_{10}$; $\Phi_{126}\!\cdot\!\Phi_{126}\equiv0$; no charge-6 invariant, one charge-8 invariant; $(\Phi_{126}\Phi_{126})_{54}=-10(\Phi_{10}\Phi_{10})_{54}$ & M & K1.2\\
Fierz basis of $\psi\psi\bar\psi\bar\psi$: $\{|\Phi_{10}|^2,|\Phi_{126}|^2\}$; $J_1\!\cdot\!J_1=-\tfrac{1}{256}|\Phi_{10}|^2-\tfrac{1}{512}|\Phi_{126}|^2$ (to $10^{-14}$) & M & K1.3\\
Dai--Freed: $\xi(\mathbb{RP}^5,\mathrm{Spin}\text{-}\Z_4)=1/16$, so $\nu\equiv0\bmod16$; Spin-$\Z_8$, $\Z_{16}$ need $\nu\equiv0\bmod32,64$; 't~Hooft vertex $\psi^4$, so $\Z_n\subset U(1)_\psi$ is anomaly-free iff $n\,|\,4$ & M & K1.4\\
Pati--Salam: $\mathbf{16}=(\mathbf4,\mathbf2,\mathbf1)\oplus(\bar{\mathbf4},\mathbf1,\mathbf2)$; the on-site $\Lambda^2(\mathbf4)$ (lattice $\varepsilon$) channel is $0.500000$ in the $\mathbf{10}$ and in the $\mathbf{126}$; the two-site $\mathrm{Sym}^2(\mathbf4)$ channel is entirely in the $\mathbf{126}$ ($\le6\times10^{-31}$ in the $\mathbf{10}$) & M & K1.5\\
Lattice $U(1)_\varepsilon$ is orthogonal to $\mathrm{spin}(10)\oplus u(1)_\psi$; the $\varepsilon$ vertex breaks it to $\Z_4$; the lattice $\Z_4$ equals the $U(1)_\psi$ $\Z_4$ up to a Pati--Salam-central Spin(10) element & M & K1.6\\
$\Z_4^2=(-1)^F$: every Majorana bilinear has charge 2; partner matrix~(\ref{eq:partner}); $\bar\psi\bar\psi\psi$ contains the $\overline{\mathbf{16}}$ partner 3 times; $\psi\cdot(\bar\psi\bar\psi\psi)$ has 2 Grassmann invariants, the quartics of K1.3 & P, M & K1.7\\
\colrule
$\tau_2K$: $D_c=K\otimes1+iy\,\sigma\!\cdot\!\tau$ anti-hermitian, spectrum paired $\pm i\mu$ ($1.1\times10^{-13}$), $\det D_c>0$; on-site block a quaternion, $\mathrm{tr}\,S(x,x)=0$ & P, M & K2.1\\
Pure $\mathrm{Sym}^2$ ($\mathbf{10}$-channel) term: no sign-free route found (HS schemes 47/43/17/45\,\% negative weights); its sum-of-squares completion is sign-free & M & K2.2, K2.3\\
Flavour democracy (corollary): the $\tau_2K$ class is $\mathbb{H}^{\mathrm{op}}\otimes\mathbb{R}^{V\times V}$ and commutes with a hidden $SU(2)$ transitive on flavour directions, so every Majorana source in that class gives all four flavours the same $|m|$; class distance of a 1-of-4 split $0.8660$, of a 2+2 split $0.7071$ & P, M & K2.4\\
$C_0$ is a nodal pairing (16 corner zero modes at every $h$, no induced corner mass to $O(h^5)$); $C_\chi$ is a Majorana mass of gap $h$ for exactly one taste doublet (8 zero + 8 at $h$) & M & K2.5\\
2+2 split: every fermion-bag weight a perfect square; a conjugate-pair HS gives $|\mathrm{Pf}|^2$ & P, M & K2.6\\
Calibration to Ref.~\cite{Butt:2026lkt}: $y=\sqrt2\,y_{\rm ref}$ (0 of 15 $L=8$ points misclassified); $\kappa_c=-0.01\pm0.006$ transferred, $\Pc=(y,\kappa)=(2.41,-0.01)$; $\kappa_c\in(-0.03,0.05)$ from our scan & X & I.3\\
\end{tabular}
\end{ruledtabular}
\end{table*}

\paragraph{The model.} Four reduced staggered flavours $\chi^a(x)$, one real Grassmann variable per site and flavour, with the on-site vertex $U\sum_x\chi^1\chi^2\chi^3\chi^4$ decoupled by a real $\sigma$ triplet (the $\varepsilon$ model); even sites carry the $\mathbf4$ of $SU(4)$, odd sites the $\bar{\mathbf4}$. The action is that of BCHH~\cite{Butt:2026lkt},
\begin{align}
S&=\tfrac12\sum_{x,y}\chi^T(x)\bigl[K_{xy}-h\,(C_\chi)_{xy}\bigr]\chi(y)\nonumber\\
&\quad+\sum_x\Bigl[\tfrac{y}{2}\,\sigma_a(x)\,\chi^T(x)\Gamma_a\chi(x)+\tfrac12|\sigma|^2+\tfrac{\lambda}{4}|\sigma|^4\Bigr]\nonumber\\
&\quad-\tfrac{\kappa}{2}\sum_x\sigma\!\cdot\!\Box\sigma,
\label{eq:action}
\end{align}
with $K$ the reduced staggered kinetic matrix and $C_\chi$ the source, both flavour-blind, $\Gamma_a$ the antisymmetric Yukawa triplet in flavour space (on the complex doublet the Yukawa term is $iy\,\sigma\!\cdot\!\tau$, row K2.1 of Table~\ref{tab:algebra}), $\Box\sigma(x)=\sum_\mu[\sigma(x+2\hat\mu)+\sigma(x-2\hat\mu)-2\sigma(x)]$ the block-lattice Laplacian of BCHH, and the quartic coupling $\lambda=1$ throughout, as in BCHH; the $hC_\chi$ term is the source of model N1 below ($h=0$ for the $\varepsilon$ model). Optionally the sign-free completion of the $\mathbf{10}$-channel plaquette term is added through link fields $(g_{10},g_6)$ on the wedge $|g_6|\le g_{10}$ (Sec.~\ref{sec:channel}). The merged point of BCHH, $\Pc=(y,\kappa)=(2.41,-0.01)$ in our units, is transferred: $\kappa_c=-0.01\pm0.006$ is BCHH's value, and our own lines bracket $\kappa_c$ in $(-0.04,+0.02)$ and $(-0.03,0.05)$ (Sec.~\ref{sec:order}).

\paragraph{Calibration.} The map between the couplings of Eq.~(\ref{eq:action}) and those of BCHH is fixed empirically (claim I.3): on the $\kappa=0.2$ line at $L=8$ the AFM onset is $y_1=2.405^{+0.004}_{-0.003}$ against $\sqrt2\times1.706=2.413$, the exit $y_2\in(3.6,3.9)$ against $\sqrt2\times2.69=3.80$, and the $\kappa$ scan of $y_2$ matches the rescaled window with $\kappa$ unchanged (no scored point misclassified); of the 15 $L=8$ points of the line, $y=\sqrt2\,y_{\rm ref}$ misclassifies none, $y=y_{\rm ref}$ six and $y=y_{\rm ref}/\sqrt2$ eight. Eq.~(1) of BCHH as printed maps onto Eq.~(\ref{eq:action}) with $\kappa$ and $\lambda$ unchanged and $y=y_{\rm ref}/\sqrt2$: the fermion sectors coincide term for term at $y=y_{\rm ref}$ (unit-entry derivative, Pauli-matrix Yukawa), the scalar norm $\tfrac12\mathrm{tr}\,\sigma^\dagger\sigma$ against our $\tfrac12|\sigma|^2$ gives $\sigma\to\sqrt2\,\sigma$, and the normalisation of the $SU(2)$ generators cancels between the Yukawa term and the norm. The data select $y=\sqrt2\,y_{\rm ref}$: a factor $2$ in the Yukawa coupling relative to the derivative that the printed action does not fix. $\kappa_c$ transfers unchanged under either map. The $y$ coordinate of $\Pc$ is located by our own measurements, the $\chi_{\Sigma,\rm stag}$ peak at $\kappa=0.2$ and the $\xi_2/L$ maximum at $y=2.41$ on $6^4$ and $8^4$ along $\kappa=-0.01$ (Sec.~\ref{sec:order}); only $\kappa_c$ is transferred.

\paragraph{Model N1.} Model N1 adds the flavour-blind taste-chiral Majorana mass $hC_\chi$ on one taste doublet (``R'', heavy) of every flavour and reads the other (``L'', light). In the free theory $K-hC_\chi$ has exactly 8 zero singular values in $P_L$ and 8 at $h$ in $P_R$, with $P_L$, $P_R$ the taste projectors of Sec.~\ref{sec:methods}. In Pati--Salam terms the two taste doublets of a flavour are the $(\mathbf4,\mathbf2,\mathbf1)$ and $(\bar{\mathbf4},\mathbf1,\mathbf2)$. $C_\chi$ with the flavour-blind $\delta_{ab}$ pairs $\bar{\mathbf4}\otimes\bar{\mathbf4}$ in the $\overline{\mathbf{10}}$ and is symmetric in taste: a $(\overline{\mathbf{10}},\mathbf1,\mathbf3)$ $\Delta_R$-type Majorana mass of equal size for $u^c$, $d^c$, $e^c$ and $\nu^c$, which breaks $SU(4)_{\rm PS}$ to $SO(4)$ and keeps $SU(2)_L$-taste. The weight stays in the $\tau_2K$ class for every flavour-blind real antisymmetric addition to $K$, so the RHMC needs a single square root and no sign.

\paragraph{Flavour democracy.} The price of positivity in the $\tau_2K$ class is flavour democracy. Operators commuting with an antiunitary $J$ with $J^2=-1$ are quaternion-linear, and the quaternion structure commutes with a hidden $SU(2)$ that acts transitively on the flavour directions, so every Majorana source in the $\tau_2K$ class gives all four flavours the same $|m|$ (row K2.4 of Table~\ref{tab:algebra}). A literal single-$\nu_R$ seesaw (1-of-4) therefore lies outside the $\tau_2K$ class, and no positive representation of it is known. The 2+2 split lies outside the class as well, but it is sign-free in the fermion-bag~\cite{Chandrasekharan:2009wc,Chandrasekharan:2013rpa} and conjugate-pair representations (row K2.6), not in the $\sigma$-triplet RHMC used here: democracy is a property of the $\tau_2K$ construction, not of sign-free simulability. Positivity itself is standard~\cite{Wu:2005zzb,Wei:2016sgb,Li:2016gte,Catterall:2015zua}; democracy is a corollary of that structure.

\section{No light mass term without symmetry breaking}
\label{sec:nomass}

\begin{theorem}[Symmetry zeros]\label{thm:zeros}
Let $hC_\chi$ act on the heavy doublet. A mass-type term is a bilinear whose corner-space string anticommutes with every kinetic $\Gamma_\mu$ (it opens a gap on the massless corner modes); there are sixteen such strings. The mass-type terms of the light doublet are: in the flavour-symmetric $\mathbf{10}$ of $SU(4)$, the three anti-self-dual plane (Majorana, $LL$) masses and the four one-link (Dirac, $LR$) masses; in the flavour sextet, $\Sigma\in\Lambda^2(\mathbf4)$, the on-site string $\varepsilon(x)\chi^a\Sigma_{ab}\chi^b$ (of $\gamma_5\otimes\xi_5$ type in spin$\,\otimes\,$taste language: a pseudoscalar, taste-non-singlet string), the taste-diagonal four-link string and the four three-link ($LR$) strings. The symmetry group of the sourced action is the stabiliser of $C_\chi$ in the group generated by the signed hypercubic maps and the one-site shifts, together with the flavour $SO(4)$ left exact by the flavour-blind source. Every one of these mass terms averages to zero under that group. The flavour-symmetric terms vanish under the hypercubic point group alone, on every box carrying it ($L_t=L$, equal boundary conditions on all axes). The sextet terms vanish under the shifts, and under $SO(4)$, on every box, and not under the point group alone. Hence the light one-point function of each term vanishes at every coupling, on every box carrying the symmetry and in the infinite-volume limit taken through such boxes, unless the corresponding symmetry (point group, shift, or $SO(4)$) breaks spontaneously.
\end{theorem}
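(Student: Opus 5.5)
The plan is to work in the standard corner (spin$\otimes$taste) representation of reduced staggered fermions. Here the $2^4$ site offsets of a hypercube label sixteen strings $\Gamma_A=\gamma_1^{a_1}\cdots\gamma_4^{a_4}$, $A\subseteq\{1,2,3,4\}$. Each string is realised on the lattice as an $|A|$-link operator dressed by a sign $\eta_A(x)$. A string anticommutes with every kinetic $\Gamma_\mu$ exactly when it opens a gap on all corner modes, and I will count those directly. With one real Grassmann variable per site, the reduced condition pairs even and odd sites. The mass-type strings are therefore the odd-link ones times $\varepsilon(x)$ together with the even-link ones with the appropriate parity, which gives sixteen in all. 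Next I fix the taste projectors $P_L,P_R$ (the $\varepsilon$-graded chiral taste projectors used for $C_\chi$) and split each mass-type string into its $LL$, $LR$ and $RR$ blocks. Combining this with the flavour structure (symmetric $\delta_{ab}$ versus antisymmetric $\Sigma_{ab}$), which is fixed by Grassmann antisymmetry of $\chi^T M\chi$, should reproduce the listed content. The flavour-symmetric part comprises the three anti-self-dual plane masses and the four one-link masses. The sextet part comprises the on-site $\varepsilon\Sigma$ term, the four-link term and the four three-link terms. This step is a finite enumeration, and I would check it against the machine-checked row K2.5, which gives 8 zero plus 8 gapped corner modes.

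Second, I determine the stabiliser of $C_\chi$. The signed hypercubic maps and the one-site shifts act on the corner strings by signed permutations: shifts multiply $\Gamma_A$ by characters $(-1)^{\langle\mu,A\rangle}$ and flip $\varepsilon$, while rotations and reflections permute the indices in $A$. I compute which elements fix $hC_\chi$, possibly combined with a compensating flavour or $\varepsilon$ sign. For each light mass string $O$ I then exhibit an element $g$ of the stabiliser with $gOg^{-1}=-O$, or, for multiplets, show that $O$ lies in a nontrivial irreducible representation so that the group average $\frac1{|G|}\sum_g g\cdot O$ vanishes. For the flavour-symmetric terms, the plane masses form an anti-self-dual triplet and the one-link masses form a vector. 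I expect the point-group elements preserving $C_\chi$ still to act as a subgroup whose restriction to these representations has no invariant vector, for example a $\pi/2$ rotation within a plane together with its compensating reflection. This uses only the point group, so it needs a box invariant under it, which requires $L_t=L$ and equal boundary conditions. For the sextet terms, $\Sigma\in\Lambda^2(\mathbf4)$ is the nontrivial $\mathbf6$ of $SO(4)$, so the $SO(4)$ average vanishes. Independently, a surviving shift composite should flip the sign of the $\varepsilon$-dressed string, and because shifts are available on any periodic box this holds on every box. To show the sextet terms are not killed by the point group alone, I exhibit the invariant combination, which is a pseudoscalar taste-non-singlet structure fixed by the rotations.

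Third, the conclusion follows from the invariance of the sourced measure. The Boltzmann weight, including the $\sigma$ sector (with $\Gamma_a$ transforming covariantly under $SO(4)$), is invariant under the stabiliser, so $\langle O\rangle=\langle gO\rangle$, and the group average gives zero at every coupling on every finite box carrying the symmetry. For the infinite-volume statement, I take limits through such boxes. A nonzero limit then requires the limit state to fail to be invariant, which is spontaneous breaking by definition, and adding symmetric infinitesimal fields makes this precise.

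The main obstacle will be the second step. I must show that the stabiliser of $C_\chi$ is still large enough to act nontrivially on every light string, even though the source breaks the point group and the shifts in a correlated way. This requires tracking the sign cocycles $\eta_A(x)$ and the $\varepsilon$ grading carefully. I would first compute the stabiliser explicitly as a finite group of signed permutation matrices on the sixteen corner strings, and then verify the character sums numerically before writing the argument symbolically.
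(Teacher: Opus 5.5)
Your route is the paper's: take the stabiliser of $C_\chi$ among the signed hypercubic maps and one-site shifts, times flavour $SO(4)$. For each light mass string, find an element $g$ with $g\cdot O=-O$, so the average over $\{e,g\}$, and hence over the group, vanishes. Use $SO(4)$ for the whole sextet, and invariance of the sourced measure for the one-point functions.

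The gap is that the decisive step, a concrete flipping element for each string, is left as ``I expect'' and ``should'', and your concrete hints would not supply it. Since, as you say, the elements act on the strings by signed permutations, an element fixing $C_\chi=C_{(03)}+C_{(12)}$ either fixes both plane strings or exchanges them. Only an exchange reverses $C_{(03)}-C_{(12)}$. A $\pi/2$ rotation in one coordinate plane, with or without axis reversals, does one of two things:
\begin{itemize}
\item if it is a rotation inside $(0,3)$ or $(1,2)$, it maps each plane of $C_\chi$ to itself, and then fixes $C_{(03)}-C_{(12)}$ whenever it fixes the source;
\item otherwise it moves $C_\chi$ onto another pair of planes, which no axis reversal undoes.
\end{itemize}
The elements the paper uses are these. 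The axis exchange $(0\leftrightarrow1,\,2\leftrightarrow3)$ swaps the two planes of $C_\chi$ and reverses $C_{(03)}-C_{(12)}$. Reversing both axes of one plane of $C_\chi$ reverses $C_{(01)}-C_{(23)}$, $C_{(02)}-C_{(13)}$, and the one-link string along $\mu$ (for the link, the plane containing $\mu$). The inversion $x\to-x$ reverses all one- and three-link strings.

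Your shift step has the same problem. A bare one-site shift reverses one plane string of $C_\chi$ and not the other, so it is not in the stabiliser. Neither compensation you name repairs this: flavour rotations act trivially on the flavour-blind $\delta_{ab}$, $\chi\to\varepsilon\chi$ reverses $K$, and the $\Z_4$ phase $i^{\varepsilon(x)}$ reverses both plane strings at once. The compensation is a reversal of one axis of the \emph{other} plane of $C_\chi$. The composite has odd total shift, so $\varepsilon(x+\hat\mu)=-\varepsilon(x)$ reverses the on-site sextet string, on which the sign field squares to one.

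Three smaller points need fixing.
\begin{itemize}
\item The theorem concerns the \emph{light} part of each string. You need that $P_L$ and $P_R$, being built from $K$ and $C_\chi$, commute with every element of the stabiliser. Only then does the $LL$ or $LR$ block of a reversed string reverse with it.
\item The clause ``on every box'' covers the all-antiperiodic production boxes. These have no exact corner modes, and an element reversing an antiperiodic axis carries an extra $-1$ on the boundary slice. The paper's observation is that this factor is already in the covariant definition of the strings. Each element therefore maps a string to $\pm$ its image with a sign independent of $L$ and of the boundary condition, which is what lets a corner-space or $4^4$ computation speak for every box. The shifts also exist on those boxes, not only on the periodic ones you state your shift argument for.
\item ``Not under the point group alone'' holds only for the on-site and four-link sextet strings. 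The three-link sextet strings are reversed by the inversion, so a nonzero point-group average exists only for that taste-diagonal pair.
\end{itemize}
Finally, the auxiliary fields that make the infinite-volume statement precise are symmetry-breaking ones, removed after the limit, not symmetric ones.
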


\begin{proof}
The symmetry group of the free reduced staggered action is known in closed form~\cite{Golterman:1984cy,Kilcup:1986dg}: it is generated by the one-site shifts $\chi(x)\to\zeta_\mu(x)\chi(x+\hat\mu)$ with $\zeta_\mu(x)=(-1)^{\sum_{\nu>\mu}x_\nu}$, by the hypercubic rotations and axis reversals $\chi(x)\to s_g(x)\chi(gx)$ with their standard sign fields $s_g$ built from the phases $\eta_\mu$ and $\zeta_\mu$, and by $U(1)_\varepsilon$. Every element is a site map with a flavour-blind sign field $s(x)=\pm1$; it leaves the $\varepsilon$ vertex ($s^4=1$), the Yukawa and $\sigma$ terms ($s^2=1$ on every on-site bilinear) and the $\kappa$ term invariant, so the symmetry group of the sourced action is the subgroup fixing $C_\chi$, extended by the flavour $SO(4)$. On periodic boxes the sign fields and the phases $\zeta_S(x)$ of the strings are functions of the coordinates modulo 2; on an antiperiodic axis the sign field of an element reflecting that axis carries in addition the factor $-1$ on the boundary slice $x_\mu=0$, the boundary-crossing sign already contained in the covariant definition of the strings (every one of the 384 point-group sign fields and the four shifts is of this form on the periodic and the all-antiperiodic $4^4$, $6^4$ and $8^4$ boxes). An element therefore acts on a string in the same way on every even-$L$ box that carries the element with the same boundary conditions: it maps the translation-covariant bilinear with phase $\zeta_S$ and displacement $D$ to $\pm$ the bilinear with displacement $gD$, with an $L$-independent sign. The taste projectors $P_L$, $P_R$ are built from $K$ and $C_\chi$ and commute with every element of the stabiliser, so the light part of a string transforms as the string. If an element $g$ of the stabiliser maps a string to its own negative, the average over the coset $\{e,g\}$ vanishes, hence the average over the group, hence the one-point function in every state invariant under $g$. Such an element exists for each class. (a) Flavour-symmetric strings, point group: with $C_\chi$ in the planes $(0,3)$ and $(1,2)$, the anti-self-dual component $C_{(03)}-C_{(12)}$ is reversed by the axis exchange $(0\leftrightarrow1,\,2\leftrightarrow3)$, which exchanges the two planes of $C_\chi$ and fixes it; the components $C_{(01)}-C_{(23)}$ and $C_{(02)}-C_{(13)}$ are reversed by the reversal of both axes of one plane of $C_\chi$ ($x_{1,2}\to-x_{1,2}$ or $x_{0,3}\to-x_{0,3}$), which fixes $C_\chi$; the one-link string along $\mu$ is reversed by the reversal of both axes of the plane of $C_\chi$ that contains $\mu$; the four one-link and the four three-link strings are all reversed by the inversion $x\to-x$. (b) The on-site sextet string: the one-site shift along any axis composed with the reversal of one axis of the other plane of $C_\chi$ fixes $C_\chi$ and has odd total shift; $\varepsilon(x+\hat\mu)=-\varepsilon(x)$ reverses the string while the sign field squares to one on an on-site bilinear. (c) Every sextet string: for $\Sigma=E_{ab}-E_{ba}$ the flavour rotation $\mathrm{diag}(\pm1)$ with $-1$ on $a$ and on one index $c\notin\{a,b\}$ lies in $SO(4)$, leaves the flavour-blind action and source invariant and maps $\Sigma\to-\Sigma$; $\Lambda^2(\mathbf4)$ contains no $SO(4)$ invariant. The elements of (a) exist on every box with $L_t=L$ and equal boundary conditions on all axes, those of (b) and (c) on every box.
\end{proof}

The group averages are computed exactly on $4^4$ as the machine check of the proof (Table~\ref{tab:strings}, claim K5.12; the elements named in (a) and (b) are verified on $4^4$, $6^4$ and $8^4$). There the point group has 384 elements, of which 32 fix $C_\chi$, and each anti-self-dual component is flipped by a 16-element coset of that stabiliser. With the one-site shifts the stabiliser has 8192 elements, of which the 4096 with an odd total shift flip the on-site sextet string.

With $\Sigma$ in the Yukawa triplet $\Gamma_a$ it is $\sum_x\varepsilon(x)\,\tfrac14\mathrm{tr}[\Gamma_aG(x,x)]$, the fermionic staggered-$\sigma$ density conjugate to $\Sigma_{\rm stag}$: the light sextet Majorana mass in the $(\mathbf3,\mathbf1)$ of $SO(4)=SU(2)_\sigma\times SU(2)_{\rm hidden}$ is the antiferromagnetic (AFM) order parameter itself. Inside the AFM phase the light doublet therefore carries a taste-diagonal sextet Majorana mass $\propto\langle\Sigma_{\rm stag}\rangle$, bounded by the measured $|\Sigma_{\rm stag}|$ and $S(\pi)$ (Sec.~\ref{sec:order}); with $h$ on, $S(\pi)$ at $\Pc$ falls to the SMG value by $h=2$ (Sec.~\ref{sec:critical}).

Domain of the box statement: on $L^3\times L_t$ boxes and on boxes whose boundary conditions single out the time axis, every element flipping the $(0,3)$ anti-self-dual component exchanges a spatial axis with time and is absent. On such boxes that one component is an allowed shape-induced one-point function, present in the free theory ($-0.00085$ per flavour on $6^3\times12$ at $h=2$) and read against it; the other fifteen strings remain zeros there. The infinite-volume limit through boxes carrying the full point group inherits all sixteen zeros.

\begin{table*}[!t]
\caption{The sixteen mass-type strings of the light doublet (the corner-space strings anticommuting with every kinetic $\Gamma_\mu$) and the symmetry that kills each: group averages of the light part on $4^4$, $h=0.5$, under the 32-element point-group stabiliser of $C_\chi$ (pg) and the 8192-element stabiliser with one-site shifts (shift); $SO(4)$ by representation theory ($\mathbf4\otimes\mathbf4$ has one invariant, $\delta_{ab}$; $\Lambda^2(\mathbf4)$ none). The self-dual rows are the heavy ($RR$) directions, averaged as whole strings: $C_\chi$ itself is the source and is invariant; its light block on each box is given in Table~\ref{tab:lightblock}. On the periodic $4^4$ box 240 of the 256 modes are zone-boundary modes.}
\label{tab:strings}
\begin{ruledtabular}
\begin{tabular}{lccccl}
strings & fl. & taste & pg & shift & killed by\\
\colrule
3 anti-self-dual planes & $\mathbf{10}$ & $LL$ & $0$ & $0$ & point group\\
4 one-link & $\mathbf{10}$ & $LR$ & $0$ & $0$ & point group\\
self-dual $(03)+(12)=C_\chi$ & $\mathbf{10}$ & $RR$ & $1$ & $1$ & allowed (the source)\\
2 other self-dual planes & $\mathbf{10}$ & $RR$ & $0$ & $0$ & point group\\
on-site $\varepsilon\chi\Sigma\chi$ & $\mathbf6$ & $LL{+}RR$ & $1.000$ & $0$ & shift; $\Z_4$; $SO(4)$\\
four-link $((0,2),(0,1,2,3))$ & $\mathbf6$ & $LL{+}RR$ & $0.354^{\rm a}$ & $0$ & shift; $SO(4)$\\
4 three-link & $\mathbf6$ & $LR$ & $0$ & $0$ & point group; $SO(4)$\\
\multicolumn{6}{l}{$^{\rm a}$ all-antiperiodic box; $1.000$ on the periodic $4^4$ box, whose light sector is its 16 corner modes.}\\
\end{tabular}
\end{ruledtabular}
\end{table*}

\begin{proposition}[No seesaw without SSB]\label{prop:noseesaw}
Let $K_L$ be the free light kinetic operator, $G_M$ the symmetry group left exact by the explicit perturbation (acting linearly on the fields and commuting with $K_L$), and $\mathcal M_L$ the space of mass-type light bilinears, those whose low-energy restriction anticommutes with $K_L$. Assume (A1) $\mathcal M_L$ contains no $G_M$-invariant element; (A2) the state is $G_M$-invariant (a unique ground state, or a finite-volume Gibbs state with $G_M$-symmetric boundary conditions and no symmetry-breaking field, the infinite-volume limit taken afterwards). Then (i) the mass-type projection of every light two-point function vanishes, and likewise for the light 1PI function and for the pairing of any operator carrying the light quantum numbers with itself or with $\psi_L$; (ii) the mass-type projection of the seesaw block $\Delta M_P^{-1}\Delta^T$ of any partner matrix with light block $K_L$, mixing $\Delta$, partner block $K_P+M_P$ and $G_M$-invariant couplings vanishes (the block may carry $G_M$-invariant $K$-type terms); (iii) if in addition (A3) any light gap is a gap of the full $G_M$-symmetric spectrum in the light channel, and (A4) $G_M$ contains the lattice rotations and reflections, so that every $G_M$-invariant element of the light corner algebra analytic at $p=0$ is mass-type or odd in $p$, then a gapped light channel has $G_{LL}(0)=0$: a Green's-function zero, never a mass-type pole.
\end{proposition}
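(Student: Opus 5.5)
The argument is a Schur-type averaging that generalises the proof of Theorem~\ref{thm:zeros} from the sixteen lattice strings to an arbitrary $G_M$ and to the partner-matrix setting. I would first fix the algebraic frame. Let $\rho$ be the representation of $G_M$ on the light fields. Since each $g\in G_M$ commutes with $K_L$, conjugation by $\rho(g)$ preserves the splitting of the corner algebra into the part commuting with $K_L$ and the part anticommuting with it. It therefore maps $\mathcal M_L$ to itself. Let $\Pi=|G_M|^{-1}\sum_g \rho(g)\,\cdot\,\rho(g)^{-1}$ be the group average; for a compact group, use Haar measure. By (A1), $\Pi$ annihilates $\mathcal M_L$. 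Because averaging commutes with projecting onto $\mathcal M_L$, the mass-type projection of any $G_M$-invariant operator-valued object vanishes.

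For (i), I would use (A2). The light two-point function $G_{LL}(x,y)=\langle\psi_L(x)\psi_L^T(y)\rangle$ in a $G_M$-invariant state satisfies $\rho(g)G_{LL}\rho(g)^T=G_{LL}$, so it is a fixed point of $\Pi$ and its mass-type projection is zero. The 1PI function is $G_{LL}^{-1}-K_L$ up to sign, with inversion taken on the light block or through the full propagator. It inherits the invariance, because inverses and Schur complements of $G_M$-equivariant matrices are equivariant. For any operator $O$ transforming like $\psi_L$, the correlators $\langle OO^T\rangle$ and $\langle O\psi_L^T\rangle$ are invariant in the same way. The infinite-volume statement follows because the projection is linear and continuous and is applied to each finite-volume Gibbs state before the limit is taken.

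For (ii), invariant couplings mean $\rho_L(g)\Delta\rho_P(g)^{-1}=\Delta$ and $\rho_P(g)(K_P+M_P)\rho_P(g)^{-1}=K_P+M_P$. It then follows that $\Delta(K_P+M_P)^{-1}\Delta^T$ is equivariant; the case $M_P^{-1}$ alone is the leading term. So the block is $\Pi$-fixed, and its mass-type part vanishes by (A1), while $G_M$-invariant $K$-type parts survive. This is the abstract form of Eq.~(\ref{eq:partner}): the seesaw entry $-M_p/\Delta^2$ is a mass-type light block, so it is forced to zero.

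For (iii), I would expand the invariant $G_{LL}(p)$ near $p=0$. Analyticity holds because the gap is a gap of the full spectrum in that channel (A3), so $G_{LL}$ is analytic there. By (A4), every invariant element analytic at $p=0$ is either mass-type or odd in $p$. The mass-type part is zero by (i), and odd functions vanish at $p=0$, so $G_{LL}(0)=0$. A mass-type pole would require a nonzero invariant mass-type residue, which is excluded. I expect the main obstacle to be justifying (A4) together with the ``low-energy restriction'' wording: the argument needs the lattice point group to act on corner momenta so that invariants split cleanly into even mass-type and odd kinetic pieces. The care goes into showing that the projection onto corner modes commutes with $\rho(g)$, as was done via $P_L,P_R$ in Theorem~\ref{thm:zeros}.
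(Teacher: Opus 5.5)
Your parts (i) and (ii) follow the paper's argument. In (i) you use invariance of the two-point function under (A2), commutation of the mass-type projection with the $G_M$ action, then (A1), with invariance preserved under inversion. In (ii) your explicit equivariance of $\Delta(K_P+M_P)^{-1}\Delta^T$ is the tree-level form of the paper's ``effective light action after integrating out $P$, a $G_M$-invariant operation.''

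The gap is in (iii). You apply (A4) directly to $G_{LL}(p)$, but (A4) cannot hold for all $G_M$-invariant elements of the light corner algebra. The identity string times $\delta_{ab}$ is invariant under every group, constant in $p$, and commutes with $K_L$, so it is neither mass-type nor odd in $p$. The paper's enumeration (claim K5.14) finds it, together with the taste-chirality string $X_{13}Z_{0123}$, among the invariants that survive the full stabiliser with shifts. Symmetry does not remove such terms from $G_{LL}(0)$; the Grassmann structure does. For real Grassmann fields $G(p)^T=-G(-p)$, so the even-in-$p$ part of $G_{LL}$, and in particular $G_{LL}(0)$, is an antisymmetric matrix on the light corner space tensored with flavour. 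This excludes the symmetric-kernel invariants in the flavour-symmetric channel, and their sextet versions are removed by $SO(4)$. Only then does $G_{LL}(0)$ fall into the class that (A4) controls, namely invariant elements with antisymmetric kernel, which is what the enumeration actually verifies. (A4) then makes it mass-type, and (i) kills it. Without this step your chain ``invariant and analytic $\Rightarrow$ mass-type or odd'' either rests on a hypothesis that never holds or does not apply to $G_{LL}$.

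Your anticipated obstacle also points the wrong way. Rotations and reflections do not split the invariants cleanly into even mass-type and odd kinetic pieces. Under the point-group stabiliser of $C_\chi$, 20 (periodic) or 36 (all-antiperiodic) even, antisymmetric-kernel, non-mass-type invariants survive, and twelve still survive its $\Z_4$-composed extension; only the one-site shifts remove them. That concerns verifying (A4) in N1 rather than proving the abstract statement.

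A smaller point: you define $\Pi$ by conjugation $\rho(g)X\rho(g)^{-1}$, but the two-point function transforms as $\rho(g)G\rho(g)^T$. These coincide only for orthogonal $\rho$, which holds for the sign-field site maps and $SO(4)$ of N1. Antiunitary elements need the complex conjugation the paper includes.
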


\begin{proof}
(i) For $g\in G_M$ with representation $U_g$, (A2) gives $\langle\psi_L\psi_L^T\rangle=U_g\langle\psi_L\psi_L^T\rangle U_g^T$ (with complex conjugation for antiunitary $g$), so the two-point function is $G_M$-invariant; the projection onto $\mathcal M_L$ commutes with the $G_M$ action because $G_M$ commutes with $K_L$, so it is a $G_M$-invariant element of $\mathcal M_L$, zero by (A1). Inversion and Dyson resummation preserve invariance, which covers the 1PI function; for an operator in the light representation the argument is identical. (ii) is (i) applied to the effective light action after integrating out $P$, a $G_M$-invariant operation. (iii) With (A3) the light two-point function is analytic at $p=0$. For real Grassmann fields $G(p)^T=-G(-p)$, so the even-in-$p$ part of $G_L$ is an antisymmetric matrix on the light corner space tensored with flavour, and its value at $p=0$ is a $G_M$-invariant antisymmetric element of that algebra; by (A4) it is mass-type, by (i) it vanishes, and the remainder is odd in $p$. On the lattice (A4) is a corner-space statement under a finite group. In N1 it holds with $G_M$ the stabiliser of $C_\chi$ with the one-site shifts, times $SO(4)$, and not under the point group alone; the statement is the machine-checked enumeration of all 256 corner strings as lattice bilinears on $4^4$ (claim K5.14). Under the 32-element point-group stabiliser alone 20 light invariants with antisymmetric kernel, even in $p$ and not mass-type, survive on the periodic $4^4$ box and 36 on the all-antiperiodic one; under its 64-element extension by the $\Z_4$-composed $C_\chi$-flipping elements twelve remain on the periodic box (four taste-diagonal self-dual tensor strings of $\Gamma_\mu\Gamma_\nu$ type and eight $LR$ strings), so (A4) fails without the shifts. Under the 8192-element stabiliser with shifts all twelve vanish, each flipped by half the group, and the light invariants that remain are the identity and the taste-chirality string $X_{13}Z_{0123}$, both with symmetric kernel (excluded from the even-in-$p$ part of the flavour-symmetric channel by $G(p)^T=-G(-p)$ and from the sextet by $SO(4)$), together with eight invariants with antisymmetric kernel, the four kinetic strings $\Gamma_\mu$ and four three-link strings, all odd in $p$ and so allowed by (A4). The mass-type strings themselves are removed by Theorem~\ref{thm:zeros}, which alone does not give (A4): it concerns only the strings anticommuting with every $\Gamma_\mu$.
\end{proof}

The proposition has two instances here. In N1, (A1) is Theorem~\ref{thm:zeros} with $G_M$ the lattice stabiliser with shifts times $SO(4)$; at $h=0$ it is the charge-2 statement of Sec.~\ref{sec:intro}, every Majorana bilinear carrying $\Z_4$ charge $2$ (row K1.7 of Table~\ref{tab:algebra}). The second instance is a $0{+}1$-dimensional model with 8 light and 8 heavy Majoranas, Fidkowski--Kitaev quartics~\cite{Fidkowski:2009dba,Fidkowski:2010jmn,Tong:2019bbk} and a heavy mass $M$, solved by exact diagonalisation: the light mass-type Green's function is $\le5.1\times10^{-15}$ for every $M\le100$ and coupling while the residual antiunitary symmetry $T'$ is exact, and $\ge0.16$ as soon as one $T'$-odd bilinear is added. The pre-registered verdict of that check is void by its letter (its sabotage threshold assumed an $O(U)$ light gap; the Fidkowski--Kitaev gap is $14U$); the check is cited for these exact zeros only.

What the proposition does not fix is the size or the $M$-dependence of the symmetric gap. In the same $0{+}1$D model a Higgs-like light--heavy pairing gives the seesaw slope $d\ln\Delta_L/d\ln M=-0.881$ at $M\le100$ (asymptotically $-1$). A light half gapped only through virtual heavy excitations also has a gap scaling as $1/M$ (slopes $-1.09$, $-1.13$), but with mass-type part zero. The two are told apart by the mass-type part of the propagator at $p\to0$, pole versus zero, not by the size of the gap. Whether the gapped symmetric option exists at all is decided by the 't~Hooft anomaly of $G_M$ on the light representation~\cite{Catterall:2022jky}; in N1 the residual symmetry acts on the 8 light Weyl fermions as a real representation, which excludes a perturbative anomaly. The exact residual group contains order-4 elements, $C_\chi$-flipping point-group elements composed with the $\Z_4$ generator $\chi\to i^{\varepsilon(x)}\chi$, that act on the 8 light zero modes of the periodic $4^4$ box with eigenvalues $\pm i$ only (20 of the 32 flipping elements; charges $1,1,1,1,3,3,3,3$, with $U^2=-1$ on the light doublet): a Spin-$\Z_4$-type action under which every light mode carries odd charge. Its discrete anomaly is not computed here.

Remark: for the seesaw corner of Eq.~(\ref{eq:partner}) the light eigen-gap $\lambda=\sqrt{\Delta^2+M_p^2/4}-|M_p|/2\le\Delta$ decreases strictly with $|M_p|$ (eigenvalues of a symmetric $2\times2$ matrix). Its premise, a tree-level Dirac mixing of the light field with a Majorana-massive partner, is absent in N1, so the remark bounds the seesaw corner only.

\section{Methods}
\label{sec:methods}

\paragraph{Instruments.} Every number comes from the rational hybrid Monte Carlo (RHMC) of the companion package and from exact (dense) or stochastic fermion measurements on stored configurations. Each instrument is certified against an independent exact computation with a sabotage control (rows I.1--I.11 of the claim map, Appendix~\ref{app:claims}): the RHMC and its Hasenbusch split against an exact-determinant Metropolis (largest pull $1.36\sigma$; sabotage rejected at $6.9\sigma$); the link-field HMC against the all-orders fermion-bag expansion~\cite{Chandrasekharan:2009wc,Chandrasekharan:2013rpa} ($Z$ to $9.4\times10^{-15}$); the $\mathbf{10}/\mathbf{6}$-channel estimator; the pairing sources and the Pfaffian sign by three routes; the N1 RHMC against exact-determinant Metropolis on all-antiperiodic $4^4$ (20 observables, largest pull $2.08\sigma$; the phase-flipped pattern rejected at $107\sigma$) and the taste observables.

\paragraph{Observables.} Table~\ref{tab:obs} defines the observables used in Secs.~\ref{sec:order}--\ref{sec:halfmass}. $P_L$ and $P_R$ are the taste projectors onto the light and heavy doublets, and $G_L$ is the light-projected fermion propagator. ``Free'' always means the same observable of the free theory on the same $L$, $L_t$, boundary condition and $h$, and every interacting light-sector number is quoted as a ratio to it. $C_L(t)$ has no plateau at $L_t\le12$ (Fig.~\ref{fig:noplateau}): $m$ is an effective mass at the stated $t$, never a single-fermion mass. The single-fermion readouts are $g$ and $\alpha_L$, at $p_{\min}$ only.

\begin{table}[!htb]
\caption{Observables. Every light-sector quantity is read as a ratio to the free theory on the same $L$, $L_t$, boundary condition and $h$.}
\label{tab:obs}
\begin{ruledtabular}
\begin{tabular}{p{0.17\columnwidth}p{0.40\columnwidth}p{0.35\columnwidth}}
symbol & definition & reads\\
\colrule
$C_L(t)$ & $\sum_{\vec x}G_L^2$, zero-momentum light pair channel & positive two-fermion correlator; no plateau at $L_t\le12$\\
$m$ & cosh effective mass of $C_L$ at the midpoint $t=L_t/2-1$ & distance from criticality; not a single-fermion mass\\
$g$ & $G_L(p_{\min})/{\rm free}$ at the smallest twisted momentum & single-fermion readout: $1$ free massless, small when gapped\\
$c$ & $\chi_L/{\rm free}$, $\chi_L=V\langle\phi_L^2\rangle$ & light pair susceptibility\\
$e$ & $d\ln\chi_L/d\ln V$ on a volume pair & free $0.39$--$0.45$ on $(6^4,8^4)$, $0.12$--$0.14$ on $(4^4,8^4)$; $1$ at the SSB scale\\
$R_{\rm peak}$ & $[\chi_L(0)/\chi_L(p_{\min})]/{\rm free}$ & $p=0$ peak of the light pair susceptibility\\
$\phi_R$, $\phi_{T_R}$ & elementary and composite amplitudes in the $C_\chi$ direction & composite takeover, $\rho=|\phi_{T_R}/\phi_R|$\\
$S(\pi)$, $\xi_{2,\rm stag}$ & $V\langle m_{\rm stag}^2\rangle$; second-moment length of the staggered $\sigma$ & the AFM (critical) $\sigma$ channel\\
$O_4$ & $\varepsilon$-vertex density & weight of the four-fermion vertex\\
$\alpha_L$ & odd-part frequency exponent of the light corner block & $-1$ for a pole at $\vec p=0$, $\to+1$ for a Luttinger zero\\
\end{tabular}
\end{ruledtabular}
\end{table}

\paragraph{Taste coverage and the light block of the source.} The light/heavy split exists only at momenta with every $\cos p_\mu\neq0$, which on all-antiperiodic boxes fails iff $L\equiv2\bmod4$: it exists on every momentum of $4^4$, $4^3\times8$, $8^4$ and $12^4$, on $(4/6)^4=19.8\,\%$ of the momenta of $6^4$ (1040 of 1296 modes are zone-boundary modes) and on $29.6\,\%$ of those of $6^3\times12$. On $L=6$ the surviving momenta are the four per axis nearest a corner, so every light-sector number on $6^4$ is built from the IR-most fifth of the fermion modes, on $8^4$ from all of them. A second, independent classification is whether the source has a light block. In the corner block of each momentum $C_\chi=\cos p_0\cos p_3\,A+\cos p_1\cos p_2\,B$ with constant corner strings $A$, $B$, and $\|P_LC_\chi P_Lv_p\|=0.354\,h\,\bigl||\cos p_0\cos p_3|-|\cos p_1\cos p_2|\bigr|$ on every momentum of every box (claim K5.15): the light block vanishes wherever all $|\cos p_\mu|$ are equal. Table~\ref{tab:lightblock} gives its weight per box. On $4^4$ and $6^4$ the light doublet is source-free on every momentum; on $8^4$ the source dresses 2560 of the 4096 light momenta, with $14\,\%$ of its Frobenius weight in the light doublet, and on $4^3\times8$ and $6^3\times12$ every light momentum. This is why the free light pair-channel mass depends on $h$ on $8^4$ (Table~\ref{tab:pc}: $m_{\rm free}$ from $1.519$ to $1.731$ for $h=0\to3$) and not on $6^4$ ($0.962$ at every $h$). The free normalisation removes the free counting of a box's mode set, not the interacting weight of the modes it lacks. Every $(6^4,8^4)$ light-sector ratio carries the mode-set systematic, marked $\dagger$ in the tables; the $(4^4,8^4)$ pair has the split on every momentum of both boxes but compares a source-free light doublet with a partly dressed one. The like-for-like comparisons are those of Sec.~\ref{sec:smg}, with the $8^4$ light projector restricted to its source-free or to its IR-most momenta.

\begin{table}[!htb]
\caption{Taste split and light block of the source per all-antiperiodic box (claims K5.13, K5.15): fraction of the momenta on which the light/heavy split exists; light block $\|P_LC_\chi P_L\|/\|C_\chi\|$, the Frobenius fraction of the source in the light doublet; number of light momenta on which that block vanishes (source-free).}
\label{tab:lightblock}
\begin{ruledtabular}
\begin{tabular}{lccc}
box & split on & light block & source-free light momenta\\
\colrule
$4^4$ & $100\,\%$ & $0$ & 256 of 256\\
$6^4$ & $19.8\,\%$ & $0$ & 256 of 256\\
$8^4$ & $100\,\%$ & $0.368$ & 1536 of 4096\\
$4^3\times8$ & $100\,\%$ & $0.195$ & 0 of 512\\
$6^3\times12$ & $29.6\,\%$ & $0.173$ & 0 of 768\\
\end{tabular}
\end{ruledtabular}
\end{table}

\begin{figure}[!htb]
\includegraphics[width=\columnwidth]{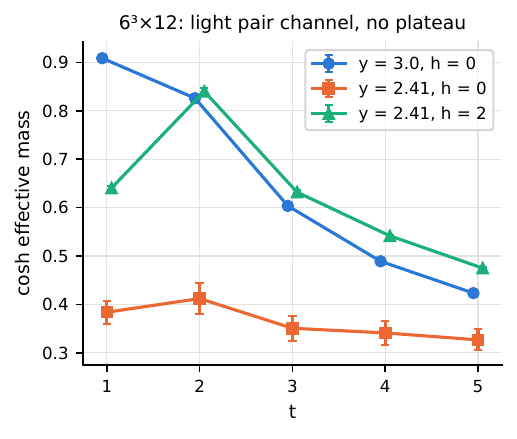}
\caption{Cosh effective mass of the light pair channel $C_L(t)$ on $6^3\times12$: no plateau in the SMG phase ($y=3.0$, falling at every step) or at $(\Pc,h{=}2)$; 14 of 15 single-cosh windows at $(3.0,0)$ have $\chi^2/\mathrm{dof}>3$. The $(\Pc,h{=}0)$ curve is flat from $t=3$ within $0.7\sigma$ ($0.351\pm0.026$, $0.341\pm0.025$, $0.327\pm0.022$ at $t=3$--$5$; the $t=2$ value $0.412\pm0.031$ lies $2.2\sigma$ above $t=5$). Errors: blocked jackknife with the error-model inflation.}
\label{fig:noplateau}
\end{figure}

\paragraph{Error model and samples.} Errors are blocked jackknives, $\sigma=\max\{\sigma_{\rm block},\sigma_{\rm naive}\sqrt{2\tau_{\rm eff}}\}$ with $\tau_{\rm eff}=\max(\tau_{\rm int},\tau_B/\Delta)$, $\tau_B$ the Madras--Sokal time of the per-trajectory bosonic series and $\Delta$ the measurement spacing; pooled rows carry each member's inflation. Dense fermion measurements on $8^4$ are taken every 20 trajectories (45 per 1000-trajectory chain, 100 per extension); the $8^4$ $\Pc$, $h=0$ chain (45 measurements, acceptance $0.88$, $\tau_B=6.3$) is the denominator of every $8^4$ ratio. Replica consistency on independent seeds gives $p\ge0.16$ at $8^4$ and $\ge0.07$ at $6^4$. Every volume comparison is a $(6^4,8^4)$ pair except the light susceptibility, which has the $(4^4,8^4)$ pair at $y=3.0$ and at $\Pc$ and the box $4^3\times8$ at $y=3.0$ (Secs.~\ref{sec:smg}, \ref{sec:critical}); every quoted ``$\gamma/\nu_{\rm eff}$'', ``$e$'' or ``$d\ln\chi/d\ln V$'' is a two-volume finite-size ratio, not a scaling dimension.

\paragraph{Protocol.} Each verdict has a written criterion, recorded in the ledger (Table~\ref{tab:prereg}) with the claim that applies it. The first-round criteria of stages 0 and 1, of K8 tier 1 and of the K7 controls proved mis-specified or were changed after a first look, as the table states row by row; the N1 outcomes (SMG class (a) by the ledger, K8 tier 2, the K8.3 onset, the direction tests) are second-round criteria, fixed before the trajectories they were applied to but after the first-round data at the same points had been read. The stage-1c replication on independent seeds reproduces the stage-1b numbers and class (a); it does not test the criterion, whose SSB threshold was fixed after the stage-1 row had been read. Criteria are applied as written, including the mis-specified ones. The outcomes reported in Secs.~\ref{sec:order}--\ref{sec:halfmass} are readings of the data after these tests, most of which returned no verdict, MIXED or VOID. The pre-registered criteria that returned a verdict as written in the direction of a conclusion of this paper are K3.1 and K3.3 (no $(\mathbf{10},\mathbf3,\mathbf1)$ condensate; the pure $\mathbf{10}$ term irrelevant), K4.1--K4.3 (composite takeover: SMG $\rho>4$ with $\phi_R/h\le0.05\times$free, $\Pc$ $\rho<0.2$), M.1 (the moonshot trigger not met, the fall alert silent), K8.2 T1, T4 and T5 (the light gap rises monotonically with $h$; no SSB at $\Pc$ in the $(0,3)$ channel of the clauses; the volume check), K8.5 P3 ($(3.0,3)$ deep SMG) and the $8^4$ rule of K7.2 (continuous, consistent with). The SSB clauses at $y=3.0$, the onset exponent, the direction tests and the single-power tests returned no verdict or a mixed one; the drift toward free, the crossover reading and the large-$N$ comparison are readings.

\begin{table*}[!t]
\caption{Verdict ledger. ``Claim'' is the companion-repository check that applies the criterion; the last column records every criterion applied literally, with each mis-specification disclosed.}
\label{tab:prereg}
\begin{ruledtabular}
\begin{tabular}{p{0.11\textwidth}p{0.10\textwidth}p{0.65\textwidth}}
verdict & claim & as written, and what proved mis-specified\\
\colrule
K3 null, K3.3, K4, M & K3.1, K3.3, K4.1--K4.3, M.1 & Applied as written ($\chi_{10}$ (6,8) ratio $+2\sigma<0.5$ with $\xi_{10}/L$ not rising; $q=R(8)/R(6)\le1+2\sigma_q$, $q+2\sigma_q<1.78$; SMG $\rho>4$ with $\phi_R/h\le0.05\times$free, $\Pc$ $\rho<0.2$; moonshot trigger and fall alert). The $8^4$ chain at $(2.41,0.02,0)$ dropped its first 600 trajectories (acceptance 0.04--0.32) before any readout.\\
K5 stage 0 gates & K5.3 & G1 pass; G2 (SYM control) fails at $h=2$ on $r_L=1.020\pm0.006$ ($+3.35\sigma$, a 2\,\% shift); G3 presumed a free-like heavy response and is ill-posed at $h=0.5$ and at $y=3.0$. Not a verdict.\\
K5 stage 1 & K5.5 & NO VERDICT at $L\le8$: the gap floor was placed on the $t^*$ log-ratio proxy ($\approx m^2/2$, not the gap); the SSB threshold of clause (b), $e-2\sigma>0.5$, sat $0.1$ above the free growth ratio; the free normalisation of $r_L$ is void on $6^3\times12$. Clause (b) fired by its letter at $(3.0,h{=}2)$: $e=0.786\pm0.129$, $e-2\sigma=0.527$ on $(6^4,8^4)$.\\
K5 stage 1b & K5.8 & Midpoint cosh mass as estimator; thresholds $0.7$, $0.1$; SSB clause re-specified relative to free, (b1) $e-e_{\rm free}-2\sigma>0.5$ with (b2), (b3) on $R_{\rm peak}$ and $\chi_L/$free; fixed before the new trajectories but after the stage-1 row that fired (b) had been read: a replication on extended chains. Outcome (a); partial outcome ``growth relative to free'' recorded. S7$'$ flagged $3\sigma$ per-mille drifts on two chains.\\
K5 stage 1c & K5.11, K8.3 & Independent $8^4$ seeds (two per $h$); replicates (a) and its numbers, not the criterion; the (6,8) growth clause compares unlike taste coverage (Sec.~\ref{sec:smg}). The onset bands $[1.6,2.4]$ (quadratic) and $[0.7,1.3]$ (linear) for $p$ in $m^2-m_0^2=(ch^p)^2$ were transplanted from the $8^4$ regime $m\gg m_0$; on $6^4$, where $\delta m\ll m_0$, they test $\delta m\propto h^4$ against $h^2$; the quadratic control also returns UNDECIDED. Restated in Sec.~\ref{sec:critical}.\\
K6 readout & K6.2, K6.3 & Label-swap control fails as written wherever SMG screens chirality (every $y=3.0$ row). K6 is an instrument.\\
K7 order & K7.1, K7.2 & After the controls were run: the injected-jump control had the wrong sign (fixed); the bimodality bandwidths over-smoothed the $4\sigma$ control mixture ($5\sigma$ mixture substituted); H3 has no power on this grid. The $8^4$ rule first returned UNDECIDED at 1000 trajectories per start; the starts were extended to 2000 and the unchanged rule applied.\\
K8 tier 1 & K8.1 & Two $\sigma$-channel thresholds widened after a first look; disclosed. The $h$-scaling was registered separately.\\
K8 tier 2 & K8.2 & T1 PASS; T2: no acceptable single power on any lattice (by the letter FAIL on $6^3\times12$, UNDECIDED on $6^4$ and $8^4$); T3 UNDECIDED; T4 PASS (reads the $(0,3)$ channel; every channel in Sec.~\ref{sec:smg}); T5 PASS (T4 and T5 are $(6^4,8^4)$ light-sector ratios, $\dagger$); T3's pass clause lacks a resolution requirement (affects no outcome); the $8^4$ $h=3$ chain lies outside the letter.\\
K8.5 direction & K8.5 & P1 ($6^4$, three clauses): MIXED by the letter, the $S(\pi)$ clause at $1.9\sigma$ and direction-blind by construction; P3 ($(3.0,3)$ critical-like?): DEEP SMG.\\
K8.6 large-$N$ fit & K8.6, K8.8 & Post hoc (shape parameter-free, amplitude and offset fitted); weighted accordingly (Sec.~\ref{sec:critical}).\\
K5.10 $0{+}1$D & K5.10 & Verdict VOID: the zero-sabotage threshold assumed an $O(U)$ light gap (Fidkowski--Kitaev gap $14U$). Cited for its exact zeros only.\\
K8.9 $8^4$ direction & K8.9 & P1 clauses unchanged on $8^4$ rows, fixed before the $8^4$ pair was run; MIXED as on $6^4$: $g$, $c$ toward SYM, $S(\pi)$ clause $1.5\sigma$ (Sec.~\ref{sec:critical}).\\
\end{tabular}
\end{ruledtabular}
\end{table*}

\section{No first-order signature at \texorpdfstring{$\Pc$}{Pc} for \texorpdfstring{$L\le8$}{L<=8}}
\label{sec:order}

BCHH read the merged transition as continuous from susceptibility powers on $8^4$; earlier work in the same line leaned the same way~\cite{Butt:2018nkn,Catterall:2016dzf,Schaich:2017czc,Ayyar:2016lxq,Catterall:2017ogi,Butt:2024kxi}. We add the explicit first-order test (Table~\ref{tab:order}, Fig.~\ref{fig:xi}): 44 chains of the $\varepsilon$ model at $L=6$ and $8$ around $\Pc$ (the 39-chain pilot set of claim K7.1 and the five $8^4$ chains of K7.2), including antiferromagnetic, cold and hot starts at $\Pc$ on both volumes, 2000 trajectories per $8^4$ start. The pre-registered rule returns ``continuous (consistent with, $L\le8$)'', and three sabotaged copies (an ordered-start memory, a two-state telegraph signal, a late drift) never return it. The afm-start $8^4$ chain carries a slow mode: $m^2=(5.69\pm0.77)\times10^{-3}$ over trajectories 100--1000 against $(8.38\pm0.73)\times10^{-3}$ over 1000--2000 ($2.5\sigma$; 200-trajectory windows between $4.0$ and $10.4\times10^{-3}$), inside the rule's $3\sigma$ stationarity threshold and read as an excursion that ended, not an ordered-start memory; the injected ordered-start memory of the sabotage is flagged at $7.4\sigma$ by the same rule. Two volumes cannot establish continuity; a weak first-order transition with a correlation length far above 8 lattice spacings is not excluded; and the $\xi_2/L$ crossing is a $+2.4\sigma$ statement on two volumes (the jackknife of claim K7.2; the rounded pooled values of Table~\ref{tab:order} give $2.3\sigma$). Walking versus a power law is not decidable at $L\le12$: with $\xi\le L_{\max}/3$ and a free $y_c$, the expected $\Delta\chi^2$ between the two scenarios~\cite{Gorbenko:2018ncu,Kaplan:2009kr} is $\le1.2$ for $L_{\max}\le12$ even with seven points at 1\,\%.

\paragraph{The transferred $\kappa_c$.} $\kappa_c=-0.01\pm0.006$ is BCHH's value. BCHH state that their ``simulations suffer from extremely long autocorrelation times near $P_c$'' and that it ``has been very difficult to extract reliable results there''~\cite{Butt:2026lkt}; they read $\Pc$ as a multicritical point and conjecture, without confirmation, a description by condensed Hopf defects. The transferred point carries that uncertainty. Our own lines bracket $\kappa_c$ in $(-0.04,+0.02)$. At $\kappa=-0.04$ no $y$ shows order: on the scored chains $|\Sigma_{\rm stag}|_6/|\Sigma_{\rm stag}|_8=1.75$ at $y=2.028$ and $1.78$ at $y=2.452$, against the pure-fluctuation value $\sqrt{V_8/V_6}=1.78$, and $\xi_2/L$ falls with $L$ at every $y$ of the line; the $6^4$ chains of this line at $y=2.283$, $2.368$ and $2.41$ are unscored by the rule of claim K7.1 (non-stationary or $\tau_{\rm int}>N/50$, low acceptance) and enter no number here. At $\kappa=+0.02$ the scored $8^4$ chain at $y=2.452$ has $R_4=1.085\pm0.008$ (AFM $1.09$); the $6^4$ member of this pair is unscored ($\tau_{\rm int}=100>N/50$, non-stationary, low acceptance), so the ratio $1.01$ and $\gamma/\nu_{\rm eff}=3.89\pm0.51$ formed with it carry no weight, and the line is AFM-edged at $y=2.41$ on scored chains (ratio $1.30$, $\Delta(\xi_2/L)=+3.6\sigma$). The upper edge of the bracket $(-0.04,+0.02)$ rests on the scored $8^4$ $R_4$ at $y=2.452$ and on the $y=2.41$ pair. The $\kappa$ scan of the calibration (claim I.3) brackets it in $(-0.03,0.05)$. At $(2.41,-0.01)$ itself the $L\le8$ indicators exclude developed AFM order: $|\Sigma_{\rm stag}|_6/|\Sigma_{\rm stag}|_8=1.37$ unpooled and $1.42$ pooled (ordered $1.01$, disordered $1.78$), $\gamma/\nu_{\rm eff}=1.58\pm0.16$ (AFM $3.9$), $R_4=1.42$--$1.51$ (AFM $1.09$), $\xi_2/L$ growth $1.07\pm0.03$ (ordered $\ge1.33$), and no start dependence. A thin AFM sliver at $(2.41,-0.01)$, ordered only beyond $L=8$, is not excluded: along $\kappa=-0.01$, $\Delta(\xi_2/L)=(L{=}8)-(L{=}6)$ is $\ge0$ over $y\in[2.368,2.452]$ ($+0.016\pm0.015$ at $2.368$; $+0.028\pm0.042$ at $2.452$, where the $8^4$ chain is unscored), which is consistent with a finite-size-broadened crossing and with a thin ordered band alike. Every result ``at $\Pc$'' is a result at the transferred point. Inside a sliver the $h=0$ anchor would be a weakly ordered state at $L\le8$, and the light doublet would carry the sextet Majorana mass of Theorem~\ref{thm:zeros}, bounded by $|\Sigma_{\rm stag}|_8=0.082$. The fall of $S(\pi)$ with $h$ and the moves of $g$, $c$ toward free (Sec.~\ref{sec:critical}) are measured at the transferred point; the $y=3.0$ results do not depend on $\Pc$.

\begin{table*}[!t]
\caption{First-order signatures at $\Pc=(2.41,-0.01)$. First 100 trajectories cut; blocked jackknife with block $2\tau_{\rm int}(m^2)$; $\tau_{\rm int}(m^2)=10$--$17$ at $8^4$ (110--180 independent samples per start).}
\label{tab:order}
\begin{ruledtabular}
\begin{tabular}{ll}
signal & result\\
\colrule
start dependence: 15 pulls per volume on $\{m^2,\sigma^2,O_4,S(\pi),S(\pi{+}p_{\min})\}$ & largest $2.66\sigma$ ($6^4$), $2.42\sigma$ ($8^4$)\\
two-peak histograms, 132 series of 44 chains & none\\
energy-cumulant deficit ratio $(2/3-V_e)_8/(2/3-V_e)_6$ & $0.28\pm0.02$ ($1/V$ law $0.32$; latent heat: constant)\\
$\gamma/\nu_{\rm eff}$ from $S(\pi)$ on (6,8), pooled replicas & $1.58\pm0.16$, $16\sigma$ below the first-order value 4\\
$\xi_{2,\rm stag}/L$ at $y=2.41$, $L=6$ / $8$, pooled & $0.373\pm0.007$ / $0.399\pm0.009$ ($+2.4\sigma$)\\
same, unpooled & $0.377\pm0.010$ / $0.415\pm0.019$ ($+1.8\sigma$)\\
growth of $\xi_2/L$ at $\Pc$, $8^4/6^4$ & $1.07\pm0.03$ (first-order peak $\propto L$: $1.33$)\\
Binder $\Delta R_4$ at $y=2.41$ & $+0.052\pm0.035$\\
\end{tabular}
\end{ruledtabular}
\end{table*}

\begin{figure*}[!t]
\includegraphics[width=\textwidth]{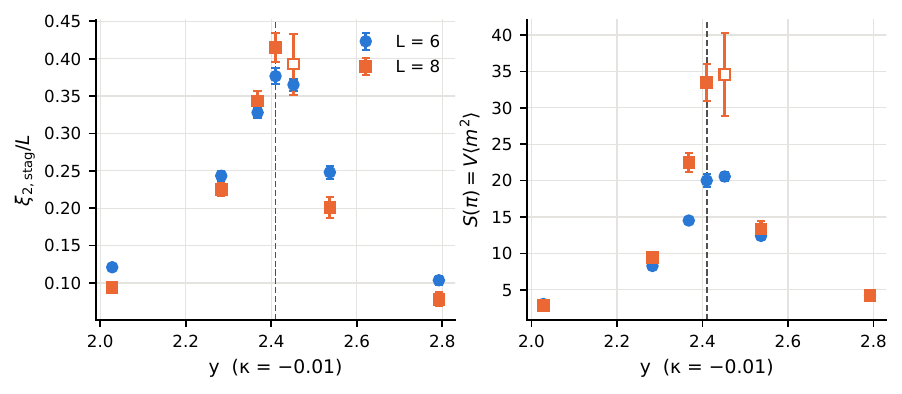}
\caption{$\xi_{2,\rm stag}/L$ (left) and $S(\pi)$ (right) along $\kappa=-0.01$ on $6^4$ and $8^4$; the dashed line is $y=2.41$. Open symbols: the unscored $8^4$ point at $y=2.452$ ($\tau_{\rm int}>N/50$). $\Delta(\xi_2/L)$, $8^4$ minus $6^4$, is $+2.4\sigma$ at $y=2.41$, $+0.016\pm0.015$ ($+1.0\sigma$) at $y=2.368$ and negative ($-1.7\sigma$ to $-3.3\sigma$) at the four outer $y$: the $(6,8)$ crossing lies in $[2.368,2.452]$, slightly above $2.41$. Errors: blocked jackknives.}
\label{fig:xi}
\end{figure*}

\section{The \texorpdfstring{$\Z_4$}{Z4}-odd channel and the composite takeover}
\label{sec:channel}

\paragraph{The elementary $\mathbf{126}$ channel does not condense.} The two-site $\mathrm{Sym}^2(\mathbf4)$ bilinear, the $(\mathbf{10},\mathbf3,\mathbf1)\subset\mathbf{126}$ channel of the lattice model, is the only $\Z_4$-odd channel beside the on-site one. On every sign-free model that carries its interaction (Table~\ref{tab:k3}) the complete $SU(4)$-invariant correlator $\chi_{10}$ does not grow with the volume up to $L=8$, and at $\Pc$ it is free-like and suppressed. The finite-$g$ response to the pure plaquette term does not grow from $L=6$ to 8 either, while the columnar dimer channel of the positive control grows on the same pipeline. A flavour-blind same-parity pairing source (a nodal pairing field, not a Majorana mass) finds a finite, volume-independent susceptibility at $0.3$--$0.4\times$ free and no spontaneous breaking. Where the $U(1)_\varepsilon$-exact family orders, it is a bond crystal, not the $\mathbf{126}$. All of these are finite-size statements with (4,6) and (6,8) ratios only. Earlier four-dimensional studies~\cite{Ayyar:2016lxq,Catterall:2016dzf,Butt:2026lkt} measured four-fermion condensates and found bilinears to vanish; the two-site channel had not been examined.

\begin{table*}[!t]
\caption{The $\Z_4$-odd elementary channel at $L\le8$ (blocked jackknife; $8^4$ from the stochastic estimator with 8 noise vectors). Ratios are $d\ln X/d\ln V$ on (6,8).}
\label{tab:k3}
\begin{ruledtabular}
\begin{tabular}{p{0.45\textwidth}p{0.38\textwidth}l}
quantity & value & claim\\
\colrule
$\chi_{10}$ (6,8) ratio at five wedge points & $-0.12\ldots-0.03$, $+2\sigma\le0.10$ & K3.1\\
positive control, dimer $V\langle D_\parallel^2\rangle$ at $(2.41,0.1,+0.1)$, (4,6) / (6,8) & $0.71\pm0.01$ / $0.80\pm0.01$ & K3.1\\
$(\chi_{10}+2\sigma)/\chi_{10}^{\rm free}$ at $\Pc$, largest & $0.54$ (free $1.6766$, $1.0875$, $0.9365$ at $L=4,6,8$) & K3.2\\
response to the pure $\mathbf{10}$ term, $R(8)/R(6)$ & $1.022\pm0.045$ ($g=0.01$), $0.973\pm0.023$ ($g=0.02$) & K3.3\\
pairing source: (6,8) ratio of $\langle\Phi_{\rm src}\rangle/h$ & $-0.055\pm0.055$, $-0.030\pm0.048$ & K3.4\\
$-$edge at $g_j=0.2$: bond-crystal (6,8) ratio & $2.33\pm0.12$, $2.17\pm0.10$ (crossover) & K3.5\\
$+$edge $(2.41,0.1)$: $V\langle D_\parallel^2\rangle$ at $L=4,6,8$ & $5.11\pm0.07$, $15.65\pm0.11$, $39.19\pm0.16$ & K3.6\\
same point, independent replica chains (exponents $0.71$ / $0.80$) & $4.97\pm0.10$, $15.72\pm0.11$, $39.28\pm0.17$ & K3.1\\
\end{tabular}
\end{ruledtabular}
\end{table*}

\paragraph{Composite takeover.} Across SYM $\to\Pc\to$ SMG the response to a $\Z_4$-odd field moves from the elementary fermion to its $\varepsilon$-vertex partner $\Psi^a=\varepsilon_{abcd}\chi^b\chi^c\chi^d$ (Fig.~\ref{fig:takeover}). Two sources are used. With a nodal source on one flavour ($4^4$, $6^4$, production boundary conditions) the composite-to-elementary ratio is $\le0.0055$ in SYM and $\ge5.8$ in SMG, where the composite responds linearly, $\phi_T/h=-0.0877$ ($4^4$) and $-0.0896$ ($6^4$). With $C_\chi$ on all-antiperiodic $8^4$, $6^4$ and $6^3\times12$ (the stage-1 rows, tabulated in Appendix~\ref{app:rows}) the SMG elementary amplitude is $\phi_R/h\le0.011\times$ free, while the composite is $\phi_{T_R}/h\approx-0.00195$, $h$-independent within 0.9\,\% and lattice-independent within 1.9\,\%; the ratio $\rho=|\phi_{T_R}/\phi_R|$ exceeds $4$ on all nine SMG rows. At $\Pc$ on all-antiperiodic boxes $\rho=0.004$--$0.012$, the elementary side. The $\Pc$ ratio is boundary-condition dependent ($0.0378\pm0.0034$ on $6^4$ pppa against $0.0087\pm0.0004$ aaaa at $h=0.5$) while the light momentum readout is not; this is a finite-box fact. The measurement is, in lattice flavour, the hand-over described by the Eichten--Preskill picture and by Golterman--Shamir's conjectured composite interpolating fields~\cite{Eichten:1985ft,Golterman:2025boq}. Two differences from the continuum picture: the source is a nodal pairing field rather than a Majorana mass, and $\Psi^a$ is an on-site three-fermion operator in the lattice $\bar{\mathbf4}$, not the $\bar\psi\bar\psi\psi\in\overline{\mathbf{16}}$ ($U(1)_\varepsilon\neq U(1)_\psi$). It is a small-quasiparticle-weight statement at $L\le8$, not a proof that the composite partner exists as a state.

\begin{figure*}[!t]
\includegraphics[width=\textwidth]{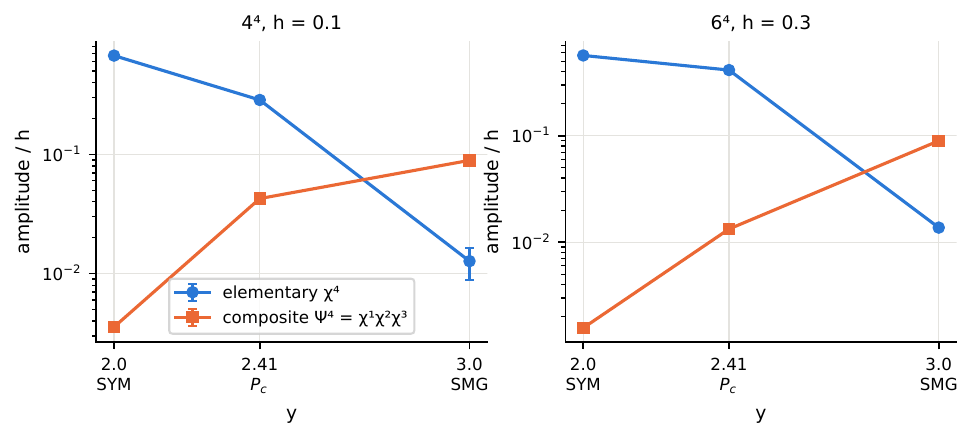}
\caption{Composite takeover with a nodal $\Z_4$-odd source on one flavour: elementary ($\chi^4$) and composite ($\Psi^4=\chi^1\chi^2\chi^3$) amplitudes per unit $h$ in the source direction at $y=2.0$ (SYM), $2.41$ ($\Pc$) and $3.0$ (SMG) on $4^4$ ($h=0.1$) and $6^4$ ($h=0.3$), production boundary conditions; sign-reweighted means, 8-block errors.}
\label{fig:takeover}
\end{figure*}

\paragraph{The zero readout.} In the $\tau_2K$ class the frequency winding of the corner-block propagator vanishes identically (every loop step within $4.1\times10^{-14}$ of $\{0,\pi\}$), and its quantisation would need gaps above the bandwidth at $L_t\le8$: no integer order parameter here, in line with the Green's-function-zero literature~\cite{Gurarie:2011,You:2014cka,You:2017ltx,Xu:2021ztz,Lu:2022qtc,Lu:2023cev,Golterman:2023zqf} and with the Nielsen--Ninomiya constraints on SMG spectra derived from propagator zeros~\cite{Golterman:2026sox}. The odd-part exponent $\alpha$ separates SMG from SYM configuration by configuration at $L=6$ and $8$, and, taste-projected, reads the light half in SMG at $\alpha_L=0.928$--$0.977$ at every $h$ and lattice (free $0.05$--$0.58$). This is the SMG-phase value of either doublet, not an identification of a zero specific to the light half: $\alpha_R=\alpha_L$ within $1.8\sigma$ on every SMG row, the sourced doublet included (claim K6.3). On all-antiperiodic boxes $\alpha$ cannot tell a pole from a zero of the same gap below the loop resolution ($0.77$ at $L_t=8$), so at $\Pc$ it reads ``pole-side, moving toward free with $h$'', not a mass.

\section{An explicit Majorana mass on half of the generation}
\label{sec:halfmass}

Model N1 is the first lattice study of an explicit Majorana bilinear on half of an SMG multiplet; the nearest prior settings gap part of a multiplet by interactions (mirror decoupling~\cite{Eichten:1985ft,Chen:2012di,Poppitz:2010at,Kikukawa:2017gvk,Catterall:2020fep,Zeng:2022grc,Araki:2026mqi}), split a three-dimensional SMG point by a symmetric four-fermion coupling~\cite{Maiti:2025ley,Maiti:2026log}, or set the light scale of a near-conformal system by heavy flavours~\cite{Brower:2015owo,Hasenfratz:2016gut}. In the free theory the gate holds exactly: $G_L(p_{\min})=1$ on every all-antiperiodic box at every $h$ while $G_R(p_{\min})$ falls ($0.9279$, $0.4457$ at $h=0.5$, $2$ on $8^4$); every interacting light number is read as a ratio to its free value at the same $(L,L_t,{\rm bc},h)$. Stage 0 ($4^4$, $4^3\times8$) and stage 1 ($8^4$, $6^4$, $6^3\times12$; $y\in\{2.41,3.0\}$, $h\in\{0,0.5,1,2\}$) are reported as written in Table~\ref{tab:prereg}: stage 0 is not a verdict and stage 1 returns none; the outcomes come from stages 1b and 1c.

\subsection{The SMG phase}
\label{sec:smg}

At $y=3.0$ the stage-1b clauses on the new samples and, on two fresh $8^4$ seeds per $h$, read class (a) by the ledger: gap readouts unchanged, with a small drift toward free, and the SSB clause at its $2\sigma$ edge under the re-specified threshold (Table~\ref{tab:smg}). In this phase the source's own linear response is screened: the elementary heavy amplitude is $\phi_R/h\le0.011\times$free on every SMG row (largest $0.011$ at $h=2$; claim K4.3), $\phi_R=7.5\times10^{-4}$ against $0.072$ free at $h=2$ on $8^4$, while the composite carries the response (Sec.~\ref{sec:channel}). The gap statement rests on the single-fermion readouts. The momentum readout is $g=0.0030$--$0.0034$ on $8^4$, $0.3$\,\% of free ($1.3$--$1.5$\,\% on $6^4$, $\dagger$), and the odd-part exponent on the new $8^4$ configurations is $\alpha_L=0.955$--$0.959$, its SMG-phase value, equal to $\alpha_R$ within $1.8\sigma$. The pair susceptibility moves while $g$ does not: on the $8^4$ stage-1 chains $g$ goes from $0.0028$ to $0.0033$ and $c=\chi_L/{\rm free}$ from $0.0022\pm0.0011$ to $0.0121\pm0.0015$ for $h=0\to2$ (claim K8.5), from $0.2\,\%$ to $1.2\,\%$ of free ($1.7\,\%$ pooled over the stage-1c seeds). The light pair-channel effective mass relative to $h=0$ and to the free ratio, $r_L^{\cosh}=0.977\pm0.002$ and $0.921\pm0.002$ at $h=1$ and $2$ on $8^4$, is supporting evidence only: the raw midpoint mass itself moves by $-0.5\,\%$ on $8^4$ ($0.7163$ to $0.7125$ for $h=0\to2$), the $8\,\%$ of $r_L^{\cosh}$ being relative to the free mass, which rises with $h$ on $8^4$ through the source's light block (Sec.~\ref{sec:methods}); it is an effective mass without a plateau, and at $L=6$, $h=0$ it changes from $0.763$ to $0.423$ between $L_t=6$ and $12$.

Every SSB clause of the ledger reads one channel: $\chi_L=V\langle\phi_L^2\rangle$ with $\phi_L$ the amplitude along $P_LC_{\rm asd}(0,3)P_L$, the anti-self-dual $(0,3)$ light Majorana channel, the one light mass direction lying in the planes of the source (Table~\ref{tab:strings}). Symmetry breaking at $(3.0,2)$ is not resolved at $L\le8$ by any of those clauses; every inequivalent light mass channel is read on the $(4^4,8^4)$ pair below (Table~\ref{tab:channels}). The stage-1 SSB clause fired by its letter at this point: $e=0.786\pm0.129$ on $(6^4,8^4)$ against a free value $0.39$, so $e-2\sigma=0.527>0.5$. The stage-1b clause, re-specified relative to free, was fixed before the new trajectories and after that row had been read (Table~\ref{tab:prereg}). The re-specified $(6,8)$ growth is $e-e_{\rm free}=0.600\pm0.104$, pooled over the two independent stage-1c seeds; the growth itself is $e=0.99\pm0.10$, consistent with $e=1$ and $5.8\sigma$ above free, with its $2\sigma$ edge at $0.39$, below the clause's $0.5$. The stage-1c seeds replicate this number; they do not test the criterion. Beside it, $R_{\rm peak}$ is below free on $8^4$ ($0.880\pm0.083$) and rises from $6^4$ ($0.614\pm0.049$) by $2.8\sigma$, and $\chi_L$ is at $1.7$\,\% of free.

The $(6^4,8^4)$ pair compares the IR-most fifth of the $6^4$ modes with every mode of $8^4$, and the $(4^4,8^4)$ pair a source-free light doublet with a partly dressed one (Sec.~\ref{sec:methods}). On $(4^4,8^4)$ the same quantity is $e-e_{\rm free}=-0.32\pm0.04$ at $h=2$ and $-0.44\pm0.07$ at $h=1$ (claim K5.13, whose reader gives every number of this paragraph), $8.9\sigma$ from the $(6,8)$ value at $h=2$ read with the same reader, $+0.513\pm0.085$ ($6^4$ stage 1b against the pooled $8^4$; the stage-1c reader gives the $0.600\pm0.104$ above). Over $4^4$, $4^3\times8$ and $8^4$ the susceptibility $\chi_L/$free falls monotonically ($0.037$, $0.026$, $0.015$ at $h=2$), and the $6^4$ value ($0.008$) lies below all three. The like-for-like control (Table~\ref{tab:modeset}, claim K5.15) recomputes $\chi_L$ on 90 stored $8^4$ configurations at $(3.0,2)$, every second configuration from trajectory 100 of the stage-1c seeds rA and rB, with $P_L$ restricted to the 1536 source-free momenta and, separately, to the 256 IR-most momenta (all $|\cos p_\mu|=\cos(\pi/8)$, the $8^4$ analogue of the $6^4$ mode set), each read against the free value of the same restricted observable ($0.541$ and $0.448$ of the free $\chi_L$, $h$-independent) with the error model of K5.13. On the same 90 configurations the full $P_L$ gives $\chi_L/{\rm free}=0.0156\pm0.0021$, $e-e_{\rm free}=-0.31\pm0.06$ on $(4,8)$ and $+0.54\pm0.13$ on $(6,8)$; the source-free class gives $0.0056\pm0.0006$, $-0.68\pm0.05$ and $-0.35\pm0.11$; the IR class $0.0024\pm0.0003$, $-0.98\pm0.05$ and $-1.08\pm0.11$. On like mode sets $\chi_L/$free falls with the volume on both pairs at $h=2$: the $(6,8)$ growth is a mode-set effect of the $6^4$ projector, measured at $h=2$ (the $h=1$ and $h=0$ rows of the control are not measured), and the partial outcome ``growth relative to free'' of stages 1b and 1c is a comparison across unlike mode sets. The source-dressed modes carry more of the interacting $\chi_L$ than their free share (per configuration the IR class carries $5\,\%$ of $\chi_L$ against $45\,\%$ in the free theory), and the $8^4$ IR class at $p=\pi/8$ sits closer to the corner than the $6^4$ modes at $\pi/6$, in the direction of stronger screening, so $-1.08\pm0.11$ bounds the like-for-like $(6,8)$ fall rather than measuring it. No SSB-scale growth appears on any like-mode-set pair. The $4^4$ box is three correlation lengths across; a $12^4$ point at $(3.0,2)$ would settle it.

Table~\ref{tab:channels} (claim K5.16) gives the susceptibility of every inequivalent light mass-type channel of Table~\ref{tab:strings}, computed from the exact propagator on the stored configurations with the full $8^4$ light projector $P_L$: the three anti-self-dual Majorana channels (one orbit each under the stabiliser), the four one-link $LR$ channels (the Dirac light--heavy mixing a seesaw needs; one orbit), the on-site sextet (the fermionic light part of the AFM order parameter), the four-link sextet and the four three-link $LR$ sextet channels (one orbit), each sextet in the $(\mathbf3,\mathbf1)$ and the $(\mathbf1,\mathbf3)$ of $SO(4)$. On $(4^4,8^4)$ at $(3.0,2)$ no channel grows faster than free: the three Majorana channels fall ($-0.33\pm0.05$, $-0.50\pm0.07$, $-0.38\pm0.07$; $1.5$, $0.5$, $0.6\,\%$ of free on $8^4$), the one-link $LR$ channels are free-like ($0.34$--$0.43$ of free on both boxes, $-0.055\pm0.015$), the on-site sextet is at $0.97$ and $0.93$ of free ($-0.016\pm0.030$), the four-link sextet gives $-0.22\pm0.03$ and the three-link sextets $-0.16\pm0.01$ and $-0.10\pm0.03$. At $(\Pc,2)$ on $(4^4,8^4)$ every channel lies within $+0.008$ to $+0.079$ of free scaling, the largest excess in the on-site sextet ($+0.079\pm0.026$), the AFM channel, whose $S(\pi)$ at $(\Pc,2)$ is at its SMG value (Sec.~\ref{sec:critical}). On $(6^4,8^4)$ at $(3.0,2)$ the one-link, on-site and three-link channels read $+1.08\pm0.04$, $+1.76\pm0.06$, $+0.64\pm0.02$ and $+0.85\pm0.05$, numbers of SSB size, because the $6^4$ value lies $4$--$8\times$ below both other boxes in each of them (on-site sextet $0.97$, $0.12$, $0.93$ of free on $4^4$, $6^4$, $8^4$; one-link $0.39$, $0.10$, $0.34$): the mode-set effect of the $6^4$ projector, channel by channel, so the $(6,8)$ clauses carry no information on symmetry breaking at $y=3.0$. At $\Pc$ the $(6,8)$ excess is $+0.02$ to $+0.14$, and $+0.54\pm0.06$ in the on-site sextet. These are $h=2$, two-volume statements; the free normalisation removes the free counting of the $6^4$ mode set, not its interacting weight.

\begin{table*}[!t]
\caption{Susceptibility of every inequivalent light mass-type channel at $h=2$ (claim K5.16), as a ratio to the free value on the same box, and $e-e_{\rm free}$ on $(4^4,8^4)$ and on $(6^4,8^4)$ ($\dagger$: unlike mode sets). Exact propagators on the stored configurations, full $8^4$ light projector $P_L$; orbit members averaged (they agree within errors). [$\chi_L$]: the channel of the pre-registered clauses. Samples: at $y=3.0$, $4^4$ 225 (stage 0, trajectories $\ge150$), $6^4$ 250 (stage-1b trajectories $\ge1000$), $8^4$ 139 (stage-1c seeds rA, rB, 45 each, and the stage-1 chain, 49); at $\Pc$, $4^4$ 225, $6^4$ 225 (stage 1, trajectories $\ge100$), $8^4$ 45 (stage 1). Error model of K5.13. The $(\mathbf1,\mathbf3)$ on-site and four-link sextet channels are not positive at $y=3.0$ (their interacting connected part exceeds the free value with the opposite sign) and are not read; at $\Pc$ they give $+0.001\pm0.002$ and $+0.008\pm0.002$ on $(4,8)$. Controls: the self-dual planes, with zero light part on $4^4$ and $6^4$, carry on $8^4$ $0.008$, $0.008$, $0.000$ ($y=3.0$) and $0.803$, $0.802$, $0.666$ ($\Pc$) of their free light block, the source's own light block.}
\label{tab:channels}
\begin{ruledtabular}
\begin{tabular}{llllll}
light channel & $4^4$ & $6^4$ $\dagger$ & $8^4$ & $e-e_{\rm free}$ (4,8) & $e-e_{\rm free}$ (6,8) $\dagger$\\
\colrule
\multicolumn{6}{l}{$y=3.0$ (SMG), $h=2$}\\
anti-self-dual $(0,3)$ [$\chi_L$] & $0.0365\pm0.0030$ & $0.00839\pm0.00058$ & $0.0148\pm0.0015$ & $-0.325\pm0.048$ & $+0.493\pm0.108$\\
anti-self-dual $(0,1)$ & $0.0214\pm0.0023$ & $0.00406\pm0.00039$ & $0.00529\pm0.00087$ & $-0.504\pm0.070$ & $+0.231\pm0.166$\\
anti-self-dual $(0,2)$ & $0.0166\pm0.0018$ & $0.00453\pm0.00070$ & $0.00575\pm0.00090$ & $-0.382\pm0.069$ & $+0.208\pm0.191$\\
one-link $LR$, $\mu=0$--$3$ & $0.3903\pm0.0116$ & $0.0970\pm0.0028$ & $0.3353\pm0.0104$ & $-0.055\pm0.015$ & $+1.077\pm0.036$\\
on-site sextet $(\mathbf3,\mathbf1)$, AFM & $0.9721\pm0.0572$ & $0.1235\pm0.0052$ & $0.9301\pm0.0530$ & $-0.016\pm0.030$ & $+1.755\pm0.062$\\
four-link sextet $(\mathbf3,\mathbf1)$ & $0.1410\pm0.0076$ & $0.0756\pm0.0031$ & $0.0763\pm0.0037$ & $-0.221\pm0.026$ & $+0.009\pm0.055$\\
three-link $LR$ sextet $(\mathbf3,\mathbf1)$ & $0.0294\pm0.0008$ & $0.00913\pm0.00015$ & $0.0192\pm0.0003$ & $-0.155\pm0.011$ & $+0.644\pm0.020$\\
three-link $LR$ sextet $(\mathbf1,\mathbf3)$ & $0.0172\pm0.0015$ & $0.00485\pm0.00021$ & $0.0129\pm0.0003$ & $-0.101\pm0.033$ & $+0.853\pm0.045$\\
\colrule
\multicolumn{6}{l}{$\Pc=(2.41,-0.01)$, $h=2$}\\
anti-self-dual $(0,3)$ [$\chi_L$] & $0.7942\pm0.0053$ & $0.8006\pm0.0031$ & $0.8508\pm0.0090$ & $+0.025\pm0.004$ & $+0.053\pm0.010$\\
anti-self-dual $(0,1)$ & $0.7808\pm0.0055$ & $0.7821\pm0.0028$ & $0.8184\pm0.0119$ & $+0.017\pm0.006$ & $+0.039\pm0.013$\\
anti-self-dual $(0,2)$ & $0.7815\pm0.0052$ & $0.7794\pm0.0025$ & $0.7995\pm0.0080$ & $+0.008\pm0.004$ & $+0.022\pm0.009$\\
one-link $LR$, $\mu=0$--$3$ & $1.1260\pm0.0093$ & $1.0229\pm0.0053$ & $1.2023\pm0.0208$ & $+0.024\pm0.007$ & $+0.140\pm0.016$\\
on-site sextet $(\mathbf3,\mathbf1)$, AFM & $1.9584\pm0.0624$ & $1.3019\pm0.0268$ & $2.4346\pm0.1565$ & $+0.079\pm0.026$ & $+0.544\pm0.059$\\
four-link sextet $(\mathbf3,\mathbf1)$ & $0.9322\pm0.0081$ & $1.0885\pm0.0164$ & $1.0564\pm0.0234$ & $+0.045\pm0.009$ & $-0.026\pm0.023$\\
three-link $LR$ sextet $(\mathbf3,\mathbf1)$ & $0.8069\pm0.0020$ & $0.8527\pm0.0010$ & $0.8408\pm0.0012$ & $+0.015\pm0.001$ & $-0.012\pm0.002$\\
three-link $LR$ sextet $(\mathbf1,\mathbf3)$ & $0.8000\pm0.0020$ & $0.8329\pm0.0008$ & $0.8264\pm0.0011$ & $+0.012\pm0.001$ & $-0.007\pm0.001$\\
\end{tabular}
\end{ruledtabular}
\end{table*}

The $8^4$ value of $\chi_L/$free at $h=2$ is $0.0121\pm0.0015$ on the stage-1 chain (stage-1b reader, claim K8.5, trajectories 100--1000), $0.0142\pm0.0020$ on the stage-1b extension (same reader), $0.0167\pm0.0016$ pooled over the stage-1c seeds (stage-1c reader, K5.11) and $0.0151$ pooled over all four samples (K5.13 reader, which gives $0.0116$ for the stage-1 chain alone), a $30\,\%$ spread across samples and readers. Inside SMG $g$ and $c$ rise with $h$ on all three lattices, while the pair-channel effective mass is flat on $8^4$ within $0.008$, falls on $6^4$ and rises on $6^3\times12$: the drift toward free is carried by $g$ and $c$.

\begin{table*}[!t]
\caption{Restricted-mode control at $(3.0,h{=}2)$ on $8^4$ (claim K5.15): $\chi_L$ recomputed on 90 stored stage-1c configurations (45 per seed, every second one from trajectory 100) with the light pattern $P_LC_{\rm asd}P_L$ restricted to a momentum class, as a ratio to the free value of the same restricted observable, and the resulting $e-e_{\rm free}$ against the $4^4$ stage-0 chain and the $6^4$ stage-1b trajectories. The source-free class carries $0.541$ and the IR class $0.448$ of the free $\chi_L$, independently of $h$. Error model of K5.13.}
\label{tab:modeset}
\begin{ruledtabular}
\begin{tabular}{llllll}
$8^4$ light pattern & $\chi_L/{\rm free}$, rA & rB & pooled & $e-e_{\rm free}$ (4,8) & $e-e_{\rm free}$ (6,8)\\
\colrule
full $P_L$ (stored values, same configurations) & $0.0186\pm0.0033$ & $0.0108\pm0.0027$ & $0.0156\pm0.0021$ & $-0.306\pm0.058$ & $+0.540\pm0.134$\\
source-free class (1536 momenta) & $0.0065\pm0.0009$ & $0.0041\pm0.0007$ & $0.0056\pm0.0006$ & $-0.676\pm0.049$ & $-0.352\pm0.110$\\
IR class (256 momenta, $|\cos p_\mu|=\cos\pi/8$) & $0.0029\pm0.0004$ & $0.0017\pm0.0003$ & $0.0024\pm0.0003$ & $-0.976\pm0.050$ & $-1.076\pm0.113$\\
\end{tabular}
\end{ruledtabular}
\end{table*}

\begin{table*}[!t]
\caption{SMG phase, $y=3.0$: the pre-registered clauses on new samples. Stage 1c pools two independent $8^4$ seeds per $h$ (90 measurements each); the $6^4$ side is the stage-1b extension. $r_L^{\cosh}=[m(h)/m(0)]/[m_{\rm free}(h)/m_{\rm free}(0)]$. $\dagger$: $(6^4,8^4)$ light-sector ratio, unlike mode sets (Sec.~\ref{sec:methods}). The $(4,8)$ rows use the stage-0 $4^4$ chains and all four $8^4$ samples (K5.13 reader); the $(4,8)$ pair compares a source-free light doublet with a partly dressed one, Table~\ref{tab:modeset} the like-for-like restriction.}
\label{tab:smg}
\begin{ruledtabular}
\begin{tabular}{lll}
clause & $h=1$ & $h=2$\\
\colrule
$r_L^{\cosh}(8^4)$, stage 1c pooled & $0.977\pm0.002$ & $0.921\pm0.002$\\
$r_L^{\cosh}(8^4)$, stage 1b & $0.976\pm0.002$ & $0.920\pm0.002$\\
$r_L^{\cosh}(6^4)$, stage 1b & $0.986\pm0.004$ & $0.923\pm0.004$\\
$g$, $8^4$ / $6^4$ & $0.0030$ / $0.013$ & $0.0034$ / $0.015$\\
$\alpha_L$, new $8^4$ configurations & $0.955\pm0.006$ & $0.959\pm0.006$\\
$e-e_{\rm free}$ on (6,8), 1c pooled $\dagger$ & $0.296\pm0.167$ & $0.600\pm0.104$\\
$R_{\rm peak}$, $8^4$ / $6^4$ $\dagger$ & $0.93\pm0.26$ (1b) / --- & $0.880\pm0.083$ / $0.614\pm0.049$ (1c)\\
$\chi_L/{\rm free}$, $8^4$ vs $6^4$ (1c) $\dagger$ & --- & $0.0167\pm0.0016$ vs $0.0084\pm0.0006$\\
$e-e_{\rm free}$ on (4,8), split on every momentum & $-0.44\pm0.07$ & $-0.32\pm0.04$\\
$\chi_L/{\rm free}$ on $4^4$ / $4^3\times8$ / $8^4$ & $0.0152$ / --- / $0.0045$ & $0.0365$ / $0.0255$ / $0.0151$\\
class & (a) & (a)\\
\end{tabular}
\end{ruledtabular}
\end{table*}

\subsection{The critical point}
\label{sec:critical}

At $\Pc$ the sourced half's mass detunes the critical point (Table~\ref{tab:pc}, Fig.~\ref{fig:pc}). Six facts are measured on two spatial sizes (three boxes: $8^4$, $6^4$, $6^3\times12$) for $h\le3$.
(1) The staggered $\sigma$ structure factor falls monotonically with $h$ on every lattice (every step from $h\ge0.5$ a fall at $>2.5\sigma$; $8^4$: $26.0\to2.13$ for $h=0\to3$), to $27$--$31$\,\% below the SMG ($y=3.0$, $h=0$) value of the same lattice; $|\Sigma_{\rm stag}|$ likewise.
(2) The light pair-channel effective mass rises monotonically (every step $5$--$24\sigma$) and stays below the free value on the same box, $L_t$, boundary condition and $h$ at every $h$: $m/m_{\rm free}$ rises from $0.129$ to $0.442$ on $8^4$ and from $0.407$ to $0.670$ on $6^4$. The $h=0$ value itself halves from $6^4$ to $8^4$ ($0.392\to0.197$), so the fraction of free is strongly $L$-dependent at small $h$; the $8^4/6^4$ ratio of the $t=2$ cosh mass is within $15\,\%$ of $1$ only at $h\ge1.5$ ($1.065$, $0.95$, $1.15$ at $h=2$, $1.5$, $3$; $0.50$ at $h=0$).
(3) $g$ and $c$ rise monotonically toward $1$ on all three lattices ($8^4$: $0.730\to0.867$, $0.517\to0.920$), never exceed it, and never move toward their SMG values ($0.003$, $0.002$).
(4) No symmetry-breaking signal in any inequivalent light mass channel on the $(4^4,8^4)$ pair at $h=2$ (Table~\ref{tab:channels}: $e-e_{\rm free}$ between $+0.008$ and $+0.079$, the largest in the on-site sextet), and none in the $(0,3)$ channel of the pre-registered clauses on either pair: on $(6^4,8^4)$ ($\dagger$) $e-e_{\rm free}\in[-0.09,+0.05]$ at every $h$, on $(4^4,8^4)$ (stage-0 $4^4$ chains, 300 measurements each; K5.13 reader) $e-e_{\rm free}$ between $-0.114\pm0.014$ at $h=0$ and $+0.025\pm0.004$ at $h=2$, within $0.05$ of the $(6,8)$ value at every $h$ (Table~\ref{tab:pc48}), and $R_{\rm peak}$ is within $2\sigma$ of $1$ except at $h=3$ on $8^4$ ($1.022\pm0.008$, a 2\,\% effect, recorded). The single-fermion readout loses more from $4^4$ to $8^4$ at $\Pc$ than at the symmetric point: $g(8^4)/g(4^4)=0.878$--$0.933$ against $0.955$ at $(y,h)=(2.0,0)$, while $g(8^4)/g(6^4)=0.97$--$1.01$ (symmetric point $0.985$).

\begin{table}[!htb]
\caption{Volume pairs at $\Pc$ (claim K5.15; K5.13 reader): $e-e_{\rm free}$ of $\chi_L$ in the $(0,3)$ light Majorana channel and the ratio of $g$, larger over smaller box, against the stage-0 $4^4$ chains (300 measurements) and the $6^4$ chains; symmetric-point $g$ ratios $0.955$ (8/4), $0.985$ (8/6). $\dagger$: unlike mode sets.}
\label{tab:pc48}
\begin{ruledtabular}
\begin{tabular}{lllll}
$h$ & $e-e_{\rm free}$ (4,8) & (6,8) $\dagger$ & $g$ 8/4 & 8/6 $\dagger$\\
\colrule
$0$ & $-0.114\pm0.014$ & $-0.086\pm0.034$ & $0.878$ & $0.992$\\
$0.5$ & $-0.080\pm0.008$ & $-0.049\pm0.022$ & $0.920$ & $1.008$\\
$1$ & $-0.024\pm0.004$ & $+0.021\pm0.010$ & $0.926$ & $1.006$\\
$2$ & $+0.025\pm0.004$ & $+0.053\pm0.010$ & $0.933$ & $0.972$\\
\end{tabular}
\end{ruledtabular}
\end{table}
(5) Neither response is a single power of $h$. For $m$, the pre-registered form $m^2=m_0^2+(ch^\kappa)^2$ on $h\in[0.5,3]$ gives no acceptable single power on any lattice: $\kappa=0.44\pm0.06$ on $6^3\times12$, outside the window $[0.7,1.3]$ with $\chi^2/\mathrm{dof}=1.2$, and $\kappa=0.64\pm0.03$ and $0.75\pm0.02$ with $\chi^2/\mathrm{dof}=13$ and $27$ on $6^4$ and $8^4$ (by the letter of T2: FAIL on $6^3\times12$, UNDECIDED on the two symmetric boxes). The $6^3\times12$ result is the one on the anisotropic box, where the free midpoint has no cosh solution and the $(0,3)$ light one-point function is a shape term (Sec.~\ref{sec:nomass}). The local quadrature exponent falls monotonically, $1.8\to1.3\to0.9\to0.6$ on $8^4$. For $S(\pi)$, the power form $S-S_\infty\propto h^{-x}$ has $\chi^2/\mathrm{dof}=12$--$15$ on every lattice (T3 UNDECIDED). Any saturating crossover with one scale $h_0\approx2$ fits instead: a Pad\'e form $1/S=s_0+bh^2/(1+h^2/h_0^2)$ gives $\chi^2=0.34$, $1.86$, $0.92$ on 3 dof with $h_0=2.10$, $2.21$, $2.05$ on $8^4$, $6^4$, $6^3\times12$, and $\tanh^2$ and Gaussian forms do as well. The window $h\in[0.5,3]$ is not a scaling window: the heavy doublet's mass crosses the lattice bandwidth there.
(6) The small-$h$ onset on $6^4$, from independent chains at $h=0.25$, $0.5$, $0.75$ and the stored $h=1$ point, is $p=1.14\pm0.37$ in $m^2-m_0^2=(ch^p)^2$ ($\chi^2/\mathrm{dof}=0.17$). This is the onset required by symmetry: $\varepsilon D(h,\sigma)\varepsilon=-D(-h,-\sigma)$, so every observable is even in $h$ and analytic in $h^2$ on a gapped finite box, which is $p=1$ when $\delta m\ll m_0$. The test has no power beyond that.

\begin{table*}[!t]
\caption{Model N1 at $\Pc=(2.41,-0.01)$, all-antiperiodic boxes, stored stage-1 rows ($h\in\{0,0.5,1,2\}$) and stage-1b rows ($h=1.5,3$). $m$: light pair-channel midpoint cosh mass at $t=L_t/2-1$, with the free value on the same box and $h$; $m_{\rm SMG}$: the same quantity at $y=3.0$ and the same $h$; $S(\pi)$: staggered $\sigma$ structure factor, with the SMG ($y=3.0$, $h=0$) value of the same lattice in the last column; $g=G_L(p_{\min})/$free; $c=\chi_L/$free. Errors: blocked jackknife with the error-model inflation; the $6^3\times12$ free midpoint has no cosh solution. $\dagger$: light-sector numbers on $L=6$ boxes rest on the partial taste split (Sec.~\ref{sec:methods}).}
\label{tab:pc}
\begin{ruledtabular}
\begin{tabular}{llllllllll}
lattice & & $h=0$ & $0.5$ & $1$ & $1.5$ & $2$ & $3$ & SMG\\
\colrule
$8^4$ ($t=3$) & $m$ & $0.197\pm0.009$ & $0.211\pm0.013$ & $0.332\pm0.016$ & $0.487\pm0.010$ & $0.614\pm0.005$ & $0.764\pm0.004$ & $0.716\pm0.002$\\
 & $m_{\rm free}$ & $1.519$ & $1.529$ & $1.556$ & $1.596$ & $1.643$ & $1.731$ & \\
 & $m_{\rm SMG}$ & $0.716$ & $0.716$ & $0.720$ & --- & $0.713$ & --- & \\
 & $S(\pi)$ & $25.99\pm2.34$ & $14.51\pm1.43$ & $6.39\pm0.44$ & $3.88\pm0.17$ & $2.91\pm0.12$ & $2.13\pm0.07$ & $3.02\pm0.12$\\
 & $g$ & $0.730\pm0.034$ & $0.770\pm0.029$ & $0.784\pm0.015$ & $0.790\pm0.005$ & $0.821\pm0.002$ & $0.867\pm0.002$ & $0.0028$\\
 & $c$ & $0.517$ & --- & $0.690$ & --- & $0.850$ & $0.920$ & $0.0022$\\
$6^4$ ($t=2$) $\dagger$ & $m$ & $0.392\pm0.008$ & $0.407\pm0.009$ & $0.453\pm0.004$ & $0.511\pm0.003$ & $0.569\pm0.002$ & $0.645\pm0.002$ & $0.763\pm0.002$\\
 & $m_{\rm free}$ & $0.962$ & $0.962$ & $0.962$ & $0.962$ & $0.962$ & $0.962$ & \\
 & $m_{\rm SMG}$ & $0.763$ & $0.761$ & $0.752$ & --- & $0.713$ & $0.610$ & \\
 & $S(\pi)$ & $10.20\pm0.83$ & $8.90\pm0.62$ & $5.53\pm0.40$ & $3.69\pm0.14$ & $2.87\pm0.13$ & $2.23\pm0.08$ & $3.23\pm0.14$\\
 & $g$ & $0.735\pm0.008$ & $0.765\pm0.008$ & $0.781\pm0.004$ & $0.815\pm0.003$ & $0.843\pm0.002$ & $0.887\pm0.001$ & $0.0125$\\
 & $c$ & $0.568$ & --- & $0.674$ & --- & $0.800$ & $0.880$ & $0.0019$\\
$6^3\times12$ ($t=5$) $\dagger$ & $m$ & $0.327\pm0.022$ & $0.359\pm0.011$ & $0.416\pm0.006$ & $0.449\pm0.002$ & $0.475\pm0.001$ & $0.526\pm0.002$ & $0.423\pm0.003$\\
 & $m_{\rm SMG}$ & $0.423$ & $0.421$ & $0.429$ & --- & $0.458$ & --- & \\
 & $S(\pi)$ & $12.12\pm0.85$ & $8.65\pm0.74$ & $5.15\pm0.30$ & $3.71\pm0.16$ & $2.75\pm0.11$ & $2.16\pm0.07$ & $2.95\pm0.14$\\
 & $g$ & $0.677\pm0.014$ & $0.690\pm0.010$ & $0.730\pm0.006$ & $0.776\pm0.002$ & $0.817\pm0.003$ & $0.867\pm0.002$ & $0.0059$\\
 & $c$ & $0.563$ & --- & $0.637$ & --- & $0.698$ & $0.813$ & \\
\end{tabular}
\end{ruledtabular}
\end{table*}

\begin{figure*}[!t]
\includegraphics[width=\textwidth]{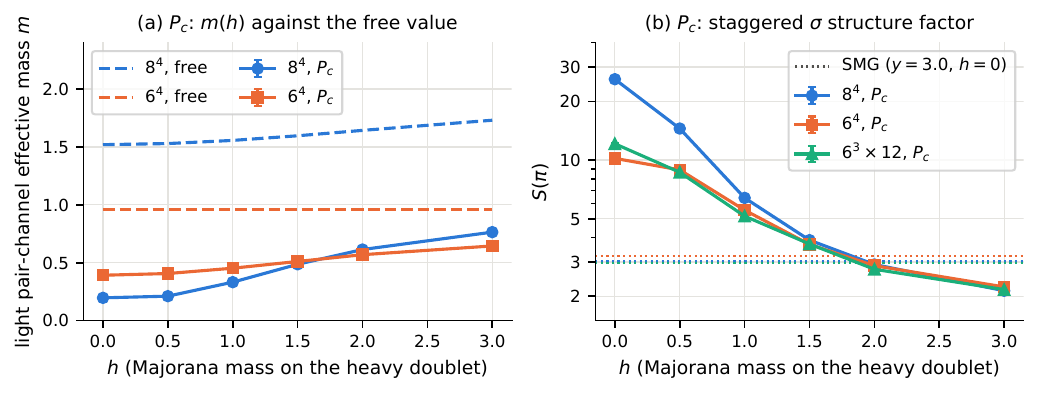}
\caption{Model N1 at $\Pc$. (a) The light pair-channel midpoint effective mass $m$ on $8^4$ and $6^4$ against $h$ (solid) and the free value on the same box, $L_t$, boundary condition and $h$ (dashed): $m$ rises toward free and never reaches it. (b) The staggered $\sigma$ structure factor $S(\pi)$ on three lattices (log scale); dotted lines are the SMG ($y=3.0$, $h=0$) values of the same lattices. Errors: blocked jackknives with the error-model inflation; generated from the committed table \texttt{results/K8\_gap\_vs\_h.csv} of Ref.~\cite{repo}.}
\label{fig:pc}
\end{figure*}

\paragraph{Reading.} The pair-channel effective mass at $\Pc$, $h=0$ is low ($0.13\times$ free on $8^4$) because the critical $\sigma$ mode dominates the two-fermion channel; its rise with $h$ is the loss of that mode. On $6^4$ it is the same fraction of free on either side of the transition ($0.82$ at $y=2.0$, $0.79$ at $y=3.0$) while $g$ is $0.914$ against $0.0125$; on $8^4$ the fractions do not coincide, $0.63$ (SYM, $h=0$), $0.65$ (SYM, $h=3$) and $0.47$ (SMG), and $\Pc$ at $h=3$ sits at $0.44$, below both phases (Table~\ref{tab:direction}). $m$ measures distance from criticality; its rise with $h$ (``anti-seesaw'' in the companion repository) is the light half going free-like, not a gap beyond free. At fixed $\sigma$ (stored $8^4$ $h=0$ configurations re-measured with $h$ in the operator) $h=2$ gives $0.34\pm0.09$ of the annealed rise, $h=1$ none.

\paragraph{Direction of the shift.} In the large-$N$ (Gaussian) approximation the fermion loop drives the staggered $\sigma$ instability, and because $\varepsilon$ anticommutes with $K$ and commutes with the same-parity $C_\chi$, the staggered curvature is $V[1-y^2\Pi(h)]$ with
\begin{align}
\Pi(h)&=\frac{2}{V}\Bigl[\mathrm{Tr}\,A^{-1}+2h^2\,\mathrm{Tr}\bigl(A^{-1}C_\chi\varepsilon\,A^{-1}C_\chi\varepsilon\bigr)\Bigr],\nonumber\\
A&=-(K-hC_\chi)^2,
\end{align}
whose second term is $-\|A^{-1/2}C_\chi\varepsilon A^{-1/2}\|^2\le0$. The closed form equals the finite-difference curvature of $\tfrac12\log\det(D^\dagger D)$ to $2\times10^{-7}$. $\Pi$ decreases strictly with $h$ on every box, so the large-$N$ critical coupling $y_{\rm MF}=\Pi^{-1/2}$ rises: $\times1.056$, $1.159$, $1.271$ at $h=1,2,3$ on $8^4$. The same sign follows from the production operator (ratios equal to the closed form within the asserted $2\times10^{-4}$, largest deviation $4.6\times10^{-5}$) and from the full finite-$\sigma_0$ mean-field potential, in which the staggered transition stays continuous and moves from $y^*=2.02$ to $2.36$ on $4^4$ ($1.88\to2.34$ on $6^4$) between $h=0$ and $3$. The explicit mass removes the heavy doublet from the loop, so a larger $y$ is needed to reach the instability. One assumption is stated, not derived: that the SYM/SMG boundary follows the $\sigma$-instability line for $h>0$ as it does at $h=0$. The large-$N$ line overshoots: it places the moved critical point at $y=2.41\times1.271=3.06$ at $h=3$, while $(y,h)=(3.0,3)$ is deep in SMG ($g=0.0204\pm0.0007$, $c=0.0164\pm0.0014$) and $(2.41,3)$ reads $g=0.887$, $c=0.880$ against $0.952$, $0.951$ at $(2.0,3)$; $y_c(3)$ itself is not located.

The direction is then read from data with a pre-registered three-clause test (P1). With $f_O=(O_{\Pc}-O_{\rm SYM})/(O_{\rm SMG}-O_{\rm SYM})$ the position of $\Pc$ between the SYM ($y=2.0$) and SMG ($y=3.0$) values of an observable, the clauses ask whether $f_g$ and $f_c$ fall from $h=0$ to $3$ and whether $S(\pi)$ at $(\Pc,h{=}3)$ lies closer to its SYM than to its SMG value. On $6^4$ the position moves toward SYM in $g$ ($f_g$: $0.198\to0.070$, $15\sigma$) and in $c$ ($f_c$: $0.363\to0.074$, $29\sigma$). The test returns MIXED because its third clause, $S(\pi)$, peaks at the critical point and falls on both sides, so it is direction-blind ($1.9\sigma$). Two further readings point the same way: with $h$-matched references at both ends the $\varepsilon$-vertex position $f_{O_4}$ falls from $0.250\pm0.005$ to $0.056\pm0.001$ (the heavy doublet's mass alone removes 46\,\% of $O_4$ at $y=2.0$), and inside SMG $g$ and $c$ rise with $h$ (Sec.~\ref{sec:smg}). On $8^4$ the same test (a $y=2.0$ pair at $h=0$ and $3$, 500 trajectories each) repeats the pattern: $g$ $0.190\to0.084$ ($2.8\sigma$), $c$ $0.419\to0.051$ ($15\sigma$), $S(\pi)$ clause $+0.30\pm0.20$ ($1.5\sigma$). MIXED by the letter on both volumes.

The volume trend at $h=3$ is split ($\dagger$; Table~\ref{tab:direction}): the $8^4/6^4$ ratio of $g$ at $\Pc$, $0.9776\pm0.0027$, lies below its SYM value $0.9936\pm0.0019$ (difference $-0.0159\pm0.0033$, $4.8\sigma$: a loss beyond the SYM dressing), while that of $c$, $1.0455\pm0.0114$ against $1.0204\pm0.0062$ (difference $+0.0251\pm0.0129$), is free-like. A light gap below the box scale $\pi/L$ is not excluded at $L\le8$. These tests are consistent with a critical coupling rising with the mass; they do not locate a critical line $y_c(h)$, and $h\in[0.25,3]$ is not a scaling window on $L\le8$.

\begin{table}[!htb]
\caption{Direction readouts at $\Pc$ on $6^4$ and $8^4$ (claims K8.5, K8.9). $m/m_{\rm free}$: light pair-channel midpoint effective mass over its free value on the same box and $h$; $f_O$: position of $\Pc$ between the SYM ($y=2.0$) and SMG ($y=3.0$) values of $O$, from $h=0$ to $3$; $S(\pi)$ clause: $|S_{\Pc}(3)-S_{\rm SMG}(0)|-|S_{\Pc}(3)-S_{\rm SYM}(3)|$; $v_O=O(8^4)/O(6^4)$ at $h=3$. $\dagger$: $6^4$ light-sector numbers rest on the partial taste split (Sec.~\ref{sec:methods}).}
\label{tab:direction}
\begin{ruledtabular}
\begin{tabular}{lll}
readout & $6^4$ $\dagger$ & $8^4$\\
\colrule
$m/m_{\rm free}$: SYM, $h{=}0$ / $3$ & $0.82$ / $0.86$ & $0.63$ / $0.65$\\
\quad SMG, $h{=}0$ & $0.79$ & $0.47$\\
\quad $\Pc$, $h{=}3$ & $0.67$ & $0.44$\\
$f_g$, $h{=}0\to3$ & $0.198\to0.070$ & $0.190\to0.084$\\
\quad significance & $15\sigma$ & $2.8\sigma$\\
$f_c$, $h{=}0\to3$ & $0.363\to0.074$ & $0.419\to0.051$\\
\quad significance & $29\sigma$ & $15\sigma$\\
$S(\pi)$ clause & $1.9\sigma$ & $+0.30\pm0.20$ ($1.5\sigma$)\\
P1 / P1$'$ as written & MIXED & MIXED\\
\colrule
$v_g$ at $\Pc$ / at SYM & \multicolumn{2}{l}{$0.9776\pm0.0027$ / $0.9936\pm0.0019$}\\
\quad difference & \multicolumn{2}{l}{$-0.0159\pm0.0033$ ($4.8\sigma$)}\\
$v_c$ at $\Pc$ / at SYM & \multicolumn{2}{l}{$1.0455\pm0.0114$ / $1.0204\pm0.0062$}\\
\quad difference & \multicolumn{2}{l}{$+0.0251\pm0.0129$}\\
\end{tabular}
\end{ruledtabular}
\end{table}

\paragraph{What the fit does not select.} With the parameter-free distance $\delta_L(h)=1-\Pi_L(h)/\Pi_L(0)$ on the same box, $1/S(\pi)=s_0+b\,\delta_L(h)$ fits the three lattices with $\chi^2/\mathrm{dof}=0.55$, $1.11$, $1.42$ and a lattice-independent $b$ within $0.15$, where a single power of $h$ gives $6.4$, $5.2$, $3.2$. This fit was formed after the data were known and carries little discriminating weight: generic saturating curves with $h_0\approx2$ fit as well ($\Delta\mathrm{AIC}=+0.0$, $-1.4$, $-2.8$), the shape of a mass on both doublets fits better on $6^4$ and $6^3\times12$, $\delta_L$ transfers across lattices better than a generic curve (5 of 6 directions) but not within errors (hold-out $\chi^2=8$--$28$ for 5 points), and the Gaussian $\sigma$-mode form $m^2=m_0^2+a\,\delta_L$ holds on $6^4$ ($\chi^2/\mathrm{dof}=1.3$) and not on $8^4$ ($24.8$). What survives of the large-$N$ description is the sign of the shift and the exclusion of a single power.

\subsection{The moonshot null}

A critical seesaw---the free-normalised light ratio falling with $h$ at $\Pc$ only---was pre-registered as a trigger and is not met at $L\le8$: $r_L$ rises monotonically on both volumes ($8^4$: $1.15\pm0.17$, $2.30\pm0.23$, $5.84\pm0.52$ at $h=0.5,1,2$ on the $t^*$ log-ratio), the stage-1b fall alert stays silent, an injected fall triggers both, and the free null triggers neither. No larger volume was run.

\section{Discussion}
\label{sec:discussion}

\paragraph{What the $\Z_4$ means for neutrinos.} On Standard-Model matter the $\Z_4$ of one generation is generated by $X=5(B-L)-4Y \bmod 4$, $X$ being the charge of the $U(1)$ commuting with $SU(5)$ in $SO(10)$~\cite{Wilczek:1981iz} and the $\Z_{4,X}$ reading that of Refs.~\cite{Garcia-Etxebarria:2018ajm,Wang:2020xyo}. Table~\ref{tab:neutrino} lists the charges: $X\equiv1\bmod4$ on every field of the $\mathbf{16}$, and after electroweak breaking $X'\equiv5(B-L)-4Q\bmod4$ stays exact with $X'^2=(-1)^F$. The selection rules follow by charge arithmetic. While $X'$ is exact there is no Majorana mass for $\nu_L$ or $\nu^c$, no Weinberg operator and no $|\Delta L|=2$ amplitude, so neutrinoless double-beta decay vanishes. A type-I seesaw needs a condensate with $B-L=\pm2$ and $Q=0$, a re-Higgsing of the $\nu^c$ sector that SMG with the $\Z_4$ unbroken cannot supply (Proposition~\ref{prop:noseesaw}). $\nu^c$ alone cannot be gapped symmetrically ($\nu=1\bmod16$), so a $\Z_4$-preserving heavy $\nu^c$ mass is a Dirac pairing with an $X'\equiv3$ partner $S$ (a sterile fermion or the SMG composite); the neutral mass matrix of $\{\nu_L,S\}$ against $\{\nu^c\}$ then has eigenvalues $0,\pm\sqrt{m_D^2+M^2}$, one exactly massless neutral fermion per generation, mostly $\nu_L$. These are Wang's $\Z_{4,X}$ scenarios~\cite{Wang:2020xyo,Wang:2020gqr,Wang:2020mra}, the Spin-$\Z_4$ reading of Razamat and Tong~\cite{Razamat:2020kyf,Tong:2021phe} and the $B-L\bmod4$ conservation implied by Ref.~\cite{Garcia-Etxebarria:2018ajm}, reduced to rules: consequences of the exact $\Z_4$, not results of this paper. Extra $X'\equiv1$ singlets would allow further Dirac masses, but still no Majorana mass and no $|\Delta L|=2$. An observed $|\Delta L|=2$ signal excludes an exact $\Z_4$ in the neutrino sector; a null does not establish it.

\begin{table}[!htb]
\caption{$\Z_4$ charges and selection rules of one generation in Standard-Model language (claim N.1, proven). $X=5(B-L)-4Y$ on left-handed Weyl fields; $X'=5(B-L)-4Q$ after electroweak breaking.}
\label{tab:neutrino}
\begin{ruledtabular}
\begin{tabular}{p{0.54\columnwidth}p{0.40\columnwidth}}
item & charge, rule\\
\colrule
$X$ on $Q,u^c,d^c,L,e^c,\nu^c$ & $1,1,-3,-3,1,5$ (all $\equiv1$)\\
$X$ on $H$; $X'$ on $\langle H^0\rangle$ & $-2$; $0$\\
$X'$ on left-handed / right-handed fields & $\equiv3$ / $\equiv1$\\
$\nu_L\nu_L$, $\nu^c\nu^c$, Weinberg operator & $X'\equiv2$: forbidden\\
$\nu_L\nu^c$ (Dirac) & $X'\equiv0$: allowed\\
$|\Delta L|=2$, $\Delta B=\Delta Q=0$ amplitudes & $\Delta X'\equiv2$: zero ($0\nu\beta\beta$)\\
condensate permitting $M_R$ & $B-L=\pm2$, $Q=0$ ($\Delta_R$-like)\\
$\nu^c$ alone & $\nu=1\bmod16$: no symmetric gap\\
$\{\nu_L,S\}$ against $\{\nu^c\}$, $X'(S)\equiv3$ & eigenvalues $0,\pm\sqrt{m_D^2+M^2}$\\
\end{tabular}
\end{ruledtabular}
\end{table}

\paragraph{What $L\le8$ does not determine.} (i) Any exponent: every volume comparison is a $(6,8)$ or $(4,8)$ pair, $h\in[0.25,3]$ is not a scaling window, and the onset is the analytic $h^2$ of a finite gapped box. (ii) Whether the light half at $(2.41,h>0)$ is gapless or has a small symmetric gap in infinite volume: the light ``gap'' is a pair-channel effective mass without a plateau, and $g$ and $\alpha_L$ read at $p_{\min}=\pi/L$, where a gap $\ll\pi/L$ leaves them free-like. (iii) The existence and location of a critical line $y_c(h)$: the direction tests on two volumes are MIXED by their letter, and the data show detuning toward the symmetric side, not a line. (iv) Symmetry breaking at $(3.0,h{=}2)$: on the $(4^4,8^4)$ pair no inequivalent light mass channel grows faster than free, and on like mode sets $\chi_L/$free falls with the volume on both pairs; on its stored mode sets the $(6^4,8^4)$ pair reads growth of SSB size in several channels, the $6^4$ value lying $4$--$8\times$ below both other boxes; all of this at $h=2$ on two volumes; the pole-versus-zero character of the light propagator at $p\to0$ is not measured; a $12^4$ point would settle the volume dependence. (v) The order of the merged transition beyond ``consistent with continuous'': the first-order exclusion and the $\xi_2/L$ crossing need a third volume, and a thin AFM sliver at $\Pc$, ordered only beyond $L=8$, is not excluded.

\paragraph{Open points of the setup.} (vi) The taste split exists on $19.8\,\%$ of the momenta on $6^4$ and $29.6\,\%$ on $6^3\times12$, so every $(6^4,8^4)$ light-sector ratio compares unlike mode sets, and the $(4^4,8^4)$ pair has a $4^4$ box three correlation lengths across and compares a source-free light doublet with a partly dressed one; the restricted-mode control exists at $h=2$ only. (vii) The $8^4$ $\Pc$ anchor chain has 45 measurements. (viii) $\kappa_c=-0.01\pm0.006$ is transferred from BCHH, our lines bracketing it in $(-0.04,+0.02)$ with the upper edge resting on the scored $8^4$ chain at $(+0.02,2.452)$ and the $y=2.41$ pair. (ix) The lattice $\Z_4$ equals the $U(1)_\psi$ $\Z_4$ only up to a Pati--Salam-central Spin(10) element, and $U(1)_\varepsilon$ is not $U(1)_\psi$.

\paragraph{Prior art and what is ours.} The ingredients of the framework are known: the Spin(10) Majorana channels, the anomaly-free $\Z_4$ with $\Z_4^2=(-1)^F$~\cite{Garcia-Etxebarria:2018ajm,Hsieh:2018ifc,Razamat:2020kyf,Tong:2021phe}, SMG as the pairing of an elementary fermion with a three-fermion composite~\cite{Eichten:1985ft,Golterman:1992yha,Golterman:2025boq,Wang:2022ucy,You:2017ltx} and the lattice $SO(10)$ and $16n$ constructions built on it~\cite{Wen:2013ppa,You:2014oaa,Kikukawa:2017ngf,Wang:2018ugf,Wang:2018jkc}, the Pati--Salam reading of reduced staggered fermions~\cite{Butt:2021brl,Catterall:2020fep}, the positivity theorems~\cite{Wu:2005zzb,Wei:2016sgb,Li:2016gte}, the Green's-function-zero picture~\cite{Gurarie:2011,You:2014cka,Xu:2021ztz,Lu:2022qtc,Lu:2023cev,Golterman:2023zqf,Golterman:2026sox}, the three- and four-dimensional staggered SMG studies~\cite{Ayyar:2014eua,Catterall:2015zua,Ayyar:2015lrd,Ayyar:2016lxq,Catterall:2016dzf,Butt:2018nkn,Butt:2021koj,Hasenfratz:2022qan,Butt:2024kxi,Butt:2026lkt}, and the Higgs-language selection rule~\cite{Weinberg:1979sa}. A mass at a critical point is a relevant perturbation, so some light response is generic; the generic argument does not give the sign (toward free), the outcome at $L\le8$ (no symmetry-breaking signal in any inequivalent light mass channel at $\Pc$ on the $(4^4,8^4)$ pair, unlike mass-split systems~\cite{Brower:2015owo,Hasenfratz:2016gut} and the three-dimensional split with an intermediate broken phase~\cite{Maiti:2025ley}), or the shape (a crossover at the lattice bandwidth). Ours are the lattice statement that the discrete remnant of $SU(2)_L$-taste, the shifts and $SO(4)$ forbid every light mass term, the selection rule tested inside an un-Higgsed phase with composite partners present, the flavour-democracy corollary, the $\mathbf{10}/\mathbf{126}$ content of the lattice channels and the null of the two-site channel, the first-order test at the merged point, the composite-takeover measurement, and the half-multiplet explicit-mass study; the partner matrix and the neutrino rules are the frame and the consequence. We found no earlier study of an explicit bilinear on part of an SMG multiplet in any dimension, including the one-dimensional chains~\cite{Fidkowski:2009dba,Fidkowski:2010jmn}, the 3-4-5-0 model~\cite{Zeng:2022grc}, the bilayer honeycomb and Hubbard models~\cite{Slagle:2014vma,He:2016sbs,He:2026oaf}, and the perturbative treatment~\cite{Martin:2025hic}; the closest earlier discussion of a right-handed neutrino in the Eichten--Preskill setting is its decoupling in chiral lattice theories~\cite{Golterman:1989df}, which concerns a singlet that is removed, not an explicit mass on part of the multiplet.

\paragraph{Data and code availability.} The companion repository~\cite{repo} (github.com/PiotrTopa/mass-as-pairing, tag \texttt{v1.0.1}, archived on Zenodo with DOI 10.5281/zenodo.23178329) holds the lattice code, the derived data, every claim as a directory with a self-asserting \texttt{check.py}, and the scripts that produce the tables and figures; \texttt{make check} runs every check. The raw chain archive is deposited on Zenodo (DOI 10.5281/zenodo.23117210); code under the MIT license, data under CC~BY~4.0.

\begin{acknowledgments}
Derivations, code, verification protocols and drafts were prepared with
substantial assistance from the AI system Claude (Anthropic); all results
were machine-verified as described, and the author takes full
responsibility for the content.
\end{acknowledgments}

\appendix

\begin{table*}[!t]
\caption{Stage-1 rows of model N1, all-antiperiodic boxes (chains cut at trajectory 100). $r_L$: light ratio, $t^*$ log-ratio estimator of the stage-1 letter (raw ratio in brackets on $6^3\times12$, where the free normalisation is void); $g=G_L(p_{\min})/$free; $\phi_R/h$ in units of free; $\phi_{T_R}/h$ in units of $10^{-3}$; $\rho=|\phi_{T_R}/\phi_R|$ (lower bound where $|\phi_R|<2\sigma$); $\alpha_L$ [free value]. $\dagger$: partial taste split.}
\label{tab:rows}
\begin{ruledtabular}
\begin{tabular}{lllllllll}
lattice & $y$ & $h$ & $r_L$ & $g$ & $\phi_R/h$ & $\phi_{T_R}/h$ & $\rho$ & $\alpha_L$ [free]\\
\colrule
$8^4$ & 2.41& 0.5 & $1.15\pm0.17$ & $0.770\pm0.029$ & 0.593 & $-0.359\pm0.025$ & 0.012 & $0.211\pm0.011$ [0.098]\\
 & & 1 & $2.30\pm0.23$ & $0.784\pm0.015$ & 0.716 & $-0.214\pm0.013$ & 0.007 & $0.138\pm0.008$ [0.088]\\
 & & 2 & $5.84\pm0.52$ & $0.821\pm0.002$ & 0.844 & $-0.117\pm0.001$ & 0.004 & $0.070\pm0.002$ [0.048]\\
$6^4$ $\dagger$ & 2.41& 0.5 & $1.04\pm0.04$ & $0.765\pm0.009$ & 0.633 & $-0.249\pm0.011$ & 0.009 & $0.245\pm0.009$ [0.185]\\
 & & 1 & $1.24\pm0.04$ & $0.781\pm0.004$ & 0.703 & $-0.183\pm0.005$ & 0.006 & $0.214\pm0.003$ [0.164]\\
 & & 2 & $1.84\pm0.06$ & $0.843\pm0.002$ & 0.833 & $-0.118\pm0.002$ & 0.004 & $0.115\pm0.003$ [0.081]\\
$6^3\times12$ $\dagger$ & 2.41& 0.5 & $1.15\pm0.06$ [1.12] & $0.690\pm0.010$ & 0.625 & $-0.245\pm0.017$ & 0.008 & $0.632\pm0.005$ [0.577]\\
 & & 1 & $1.48\pm0.07$ [1.31] & $0.730\pm0.006$ & 0.712 & $-0.169\pm0.005$ & 0.005 & $0.615\pm0.006$ [0.576]\\
 & & 2 & $2.94\pm0.13$ [1.64] & $0.817\pm0.003$ & 0.840 & $-0.119\pm0.002$ & 0.004 & $0.587\pm0.003$ [0.575]\\
\colrule
$8^4$ & 3.0& 0.5 & $0.989\pm0.005$ & $0.0030\pm0.0001$ & 0.001 & $-1.952\pm0.014$ & $>12.9$ & $0.944\pm0.009$ [0.098]\\
 & & 1 & $0.973\pm0.004$ & $0.0032\pm0.0001$ & 0.004 & $-1.969\pm0.016$ & $11.5\pm3.2$ & $0.936\pm0.007$ [0.088]\\
 & & 2 & $0.879\pm0.004$ & $0.0033\pm0.0001$ & 0.011 & $-1.955\pm0.015$ & $5.1\pm0.6$ & $0.957\pm0.011$ [0.048]\\
$6^4$ $\dagger$ & 3.0& 0.5 & $0.998\pm0.007$ & $0.0126\pm0.0001$ & $-0.001$ & $-1.945\pm0.012$ & $>18.1$ & $0.943\pm0.009$ [0.185]\\
 & & 1 & $0.976\pm0.006$ & $0.0128\pm0.0001$ & 0.005 & $-1.956\pm0.010$ & $9.4\pm1.8$ & $0.931\pm0.011$ [0.164]\\
 & & 2 & $0.884\pm0.005$ & $0.0146\pm0.0002$ & 0.011 & $-1.938\pm0.010$ & $4.9\pm0.4$ & $0.928\pm0.007$ [0.081]\\
$6^3\times12$ $\dagger$ & 3.0& 0.5 & $1.03\pm0.03$ [1.00] & $0.0060\pm0.0001$ & 0.002 & $-1.949\pm0.021$ & $>9.7$ & $0.965\pm0.003$ [0.577]\\
 & & 1 & $1.17\pm0.03$ [1.04] & $0.0060\pm0.0001$ & 0.004 & $-1.932\pm0.016$ & $11.9\pm5.7$ & $0.968\pm0.004$ [0.576]\\
 & & 2 & $2.10\pm0.05$ [1.17] & $0.0070\pm0.0001$ & 0.010 & $-1.945\pm0.016$ & $5.4\pm0.6$ & $0.968\pm0.005$ [0.575]\\
\end{tabular}
\end{ruledtabular}
\end{table*}

\section{Stage-1 rows of model N1}
\label{app:rows}

Table~\ref{tab:rows} lists, per stored stage-1 chain on the all-antiperiodic boxes, the numbers that Secs.~\ref{sec:channel} and \ref{sec:halfmass} quote as ranges: $r_L$ and $g$ (check K5.5), $\phi_R$, $\phi_{T_R}$ and $\rho$ (check K4.3), and $\alpha_L$ with its free reference (check K6.3).

\section{Claim map}
\label{app:claims}

Every machine-checked or measured claim is an assertion in a runnable script; the tolerances quoted in the text are the assertion thresholds, and each script fails loudly rather than reporting. Tables~\ref{tab:claims} and \ref{tab:claims2} list every claim behind Secs.~\ref{sec:model}--\ref{sec:discussion} with its evidence class.

\begin{table*}[!t]
\caption{Claim map, part 1: instruments (I), structure (K1), positivity (K2), the $\Z_4$-odd channel (K3), the composite takeover (K4). Each ID has a self-asserting check script \texttt{claims/<ID>/check.py} in Ref.~\cite{repo}, run by \texttt{make check}. Class: P proven, M machine-checked, X measured.}
\label{tab:claims}
\begin{ruledtabular}
\footnotesize
\begin{tabular}{llp{0.88\textwidth}}
ID & class & statement\\
\colrule
I.1 & M & RHMC simulates the measure exactly: free limit, forces, reversibility, $\delta\tau^2$, exact Metropolis\\
I.2 & M & Hasenbusch split and nested integrator agree with exact Metropolis; sabotage rejected at $6.9\sigma$\\
I.3 & X & calibration to BCHH, $y=\sqrt2\,y_{\rm ref}$; $\kappa_c=-0.01\pm0.006$ transferred, $\Pc=(2.41,-0.01)$; $\kappa$ scan: $\kappa_c\in(-0.03,0.05)$\\
I.4 & M & link-field HMC exact against the all-orders fermion-bag expansion ($Z$ to $9.4\times10^{-15}$)\\
I.5 & M & link-field HMC resonance at $\omega_j\tau\approx n\pi$; trajectory-length jitter removes it\\
I.6 & M & complete $\mathbf{10}/\mathbf{6}$-channel estimator: dense and stochastic agree; fermionic boundary sign\\
I.7 & M & same-parity pairing source: sign-free, exact HMC, linear response\\
I.8 & M & flavour-selective source: $|\mathrm{Pf}|$ RHMC and its observables exact\\
I.9 & M & Pfaffian sign of the flavour-selective source by three routes; the matrix pencil locates the flips\\
I.10 & M & N1 RHMC against exact Metropolis on $4^4$ (20 observables); phase-flipped pattern rejected at $107\sigma$\\
I.11 & M & light/heavy taste observables of N1 agree with dense exact computation\\
K1.1 & M & $\mathrm{Sym}^2(\mathbf{16})=\mathbf{10}\oplus\mathbf{126}$, $\Lambda^2=\mathbf{120}$; ranks of the $\mathbf{126}$ and $\mathbf{10}$ mass matrices\\
K1.2 & M & unique quartic $\Phi_{10}^2$; $\Phi_{126}^2\equiv0$; no charge-6, one charge-8 invariant\\
K1.3 & M & Fierz basis $\{|\Phi_{10}|^2,|\Phi_{126}|^2\}$ of $\psi\psi\bar\psi\bar\psi$; $J_1\!\cdot\!J_1$ decomposed\\
K1.4 & M & $\Z_4$ the unique anomaly-free remnant of $U(1)_\psi$ ($\eta$ invariants; $\psi^4$ vertex)\\
K1.5 & M & Pati--Salam embedding; the on-site $\varepsilon$ channel is half $\mathbf{10}$, half $\mathbf{126}$\\
K1.6 & M & lattice channel algebra and symmetries; $U(1)_\varepsilon\neq U(1)_\psi$; lattice $\Z_4$ = $U(1)_\psi$ $\Z_4$ up to a central element\\
K1.7 & P, M & $\Z_4^2=(-1)^F$; partner-matrix algebra; the composite partner $\bar\psi\bar\psi\psi$ and its two quartic invariants\\
K2.1 & P, M & $\tau_2K$ class: $\det D_c>0$ configuration by configuration; on-site block a quaternion\\
K2.2 & M & pure $\mathrm{Sym}^2$ term: no sign-free route found; its sum-of-squares completion is sign-free\\
K2.3 & M & structure of the completed term; the wedge $|g_6|\le g_{10}$ is its sign-free region; a sign-free same-parity source\\
K2.4 & P, M & the $\tau_2K$ class is quaternion-linear, hence flavour-democratic; class distances of the 1-of-4 and 2+2 splits\\
K2.5 & M & $C_0$ a nodal pairing (corner zero modes at every $h$); $C_\chi$ the taste-chiral Majorana mass of one doublet\\
K2.6 & P, M & fermion-bag expansion; the 2+2 split is positive in the bag and conjugate-pair representations\\
K3.1 & X & $\chi_{10}$ does not grow with $V$ at five wedge points, $L\le8$; the dimer positive control grows\\
K3.2 & X & $\chi_{10}$ at $\Pc$ free-like and suppressed ($\le0.5\times$ free)\\
K3.3 & X & response to the pure $\mathbf{10}$ term flat in $V$: $R(8)/R(6)=1.02\pm0.05$, $0.97\pm0.02$\\
K3.4 & X & no SSB under the same-parity pairing source, $L\le8$; finite, volume-independent susceptibility\\
K3.5 & X & the $U(1)_\varepsilon$-exact family exits through a bond crystal or a gapless fermion, never the $(\mathbf{10},\mathbf3,\mathbf1)$\\
K3.6 & X & columnar bond crystal on the $+$edge of the wedge\\
K4.1 & X & composite takeover SYM $\to\Pc\to$ SMG under a nodal source on one flavour ($4^4$, $6^4$)\\
K4.2 & X & stage 0: SMG screens the elementary $C_\chi$ response; the composite carries it\\
K4.3 & X & takeover with $C_\chi$ on three boxes: composite linear in $h$, lattice-independent to 2\,\%; elementary $\le1.1\,\%$ of free\\
K4.4 & X & production-bc continuity at $\Pc$: $6^4$ pppa against aaaa; $\rho$ bc-dependent, $g$ not\\
\end{tabular}
\end{ruledtabular}
\end{table*}
\begin{table*}[!t]
\caption{Claim map, part 2: model N1 (K5), the corner-block readout (K6), the order of the transition (K7), the critical point under the explicit mass (K8), the moonshot (M), the neutrino rules (N); conventions as in Table~\ref{tab:claims}.}
\label{tab:claims2}
\begin{ruledtabular}
\footnotesize
\begin{tabular}{llp{0.88\textwidth}}
ID & class & statement\\
\colrule
K5.1 & M & N1 instrument: taste projectors, free baselines, the light gate $G_L(p_{\min})=1$\\
K5.2 & P, M & symmetry zeros of the seven flavour-symmetric light masses; the box-shape domain\\
K5.3 & X & stage 0 gates G1--G4 as written, with controls; G2, G3 mis-specified\\
K5.4 & X & stage 1 integrity: cadence, error model, free baselines\\
K5.5 & X & stage 1 by the letter: no verdict (gap floor on the $t^*$ proxy); clause (b) fired at $(3.0,2)$\\
K5.6 & X & control S3: the phase-flipped pattern gaps neither doublet in the free theory\\
K5.7 & X & stage 1b integrity; S7$'$ drifts; no plateau in $C_L(t)$ (E8); box-shape one-point function (E9)\\
K5.8 & X & stage 1b at $y=3.0$: class (a) at $h=1,2$; partial outcome ``growth relative to free'' on (6,8) recorded\\
K5.9 & X & stage 1c integrity; replica consistency on independent seeds ($p\ge0.16$ at $8^4$, $\ge0.07$ at $6^4$)\\
K5.10 & M & $0{+}1$D instance of Prop.~\ref{prop:noseesaw} by exact diagonalisation: exact zeros; pre-registered verdict VOID\\
K5.11 & X & stage 1c: two independent $8^4$ seeds per $h$ reproduce class (a); $e-e_{\rm free}$ on (6,8) pooled\\
K5.12 & M & every light mass-type string vanishes under the full group of the sourced action; killing symmetry per string\\
K5.13 & X & taste coverage $19.8\,\%$ on $6^4$, split on every momentum of $4^4$, $4^3\times8$, $8^4$; on $(4^4,8^4)$ $\chi_L/$free falls with $V$ in SMG\\
K5.14 & M & every stabiliser-invariant light bilinear of the corner algebra enumerated on $4^4$: (A4) fails under the point-group stabiliser (20 invariants on the periodic box, 36 all-antiperiodic; twelve under its $\Z_4$ extension), holds with the shifts (identity, chirality string, eight odd-in-$p$ strings)\\
K5.15 & M, X & light block of the source per box; restricted-mode control at $(3.0,2)$ on $8^4$: $\chi_L/$free falls with $V$ on like mode sets on (4,8) and (6,8); the $(4^4,8^4)$ pair at $\Pc$\\
K5.16 & X & every inequivalent light mass-type channel at $(3.0,2)$ and $(\Pc,2)$ from exact propagators on stored configurations: on $(4^4,8^4)$ no channel grows faster than free; the $(6,8)$ excess is the $6^4$ mode set, channel by channel\\
K6.1 & P, M & no frequency winding of the corner-block propagator in the $\tau_2K$ class; quantisation scale above the bandwidth\\
K6.2 & X & odd-part exponent $\alpha$ validated on free references; SYM and SMG separated per configuration\\
K6.3 & X & $\alpha_L$ on the stage-1 configurations: at its SMG-phase value at every $h$, equal to $\alpha_R$; pole-side at $\Pc$, moving toward free\\
K7.1 & X & no first-order signature at $\Pc$, $L\le8$ (pilot set, 39 chains); the $\kappa$ lines\\
K7.2 & X & $8^4$ with three starts, pooled $\Pc$: $\gamma/\nu_{\rm eff}=1.58\pm0.16$, $\xi_2/L$ crossing $+2.4\sigma$\\
K7.3 & M & $\xi(y)$ at $L\le12$ cannot separate walking from a power law ($\Delta\chi^2\le1.2$)\\
K8.1 & P, X & the bound $\lambda(M_p)\le\Delta$ of the seesaw corner (bookkeeping, not a result); at $\Pc$ the $\sigma$ channel collapses and $m$ rises with $h$; quench at fixed $\sigma$\\
K8.2 & X & tier 2: $m$ and $S(\pi)$ monotone in $h$; no acceptable single power of $h$ on any lattice (T2, T3)\\
K8.3 & X & small-$h$ onset on $6^4$: $p=1.14\pm0.37$, consistent with the symmetry-required $h^2$\\
K8.4 & X & free comparison at $\Pc$: $m<m_{\rm free}$; $g,c\to1$, never above; $m$ measures distance from criticality\\
K8.5 & X & direction toward SYM on $6^4$ in $g$, $c$ ($15\sigma$, $29\sigma$); P1 MIXED by the $S(\pi)$ clause; P3 at $(3.0,3)$ deep SMG\\
K8.6 & M, X & large-$N$ sign of the shift; same sign from the production operator and at finite $\sigma_0$; the $\delta_L$ fit\\
K8.7 & X & $\varepsilon$-vertex position at $\Pc$ with $h$-matched references: $f_{O_4}$ $0.250\to0.056$\\
K8.8 & X & weight of the large-$N$ fit: generic crossovers with $h_0\approx2$ fit as well; $\delta_L$ does not transfer within errors\\
K8.9 & X & $8^4$ direction pair (P1$'$): $g$, $c$ toward SYM; $S(\pi)$ clause at $1.5\sigma$; MIXED as on $6^4$\\
M.1 & X & moonshot trigger not met: $r_L$ rises at $\Pc$; an injected fall triggers it, the free null does not\\
N.1 & P & $\Z_4=5(B-L)-4Y\bmod4$ on matter; neutrino selection rules; one massless neutral state per generation\\
\end{tabular}
\end{ruledtabular}
\end{table*}

\bibliographystyle{apsrev4-2}
\bibliography{refs,repo}

\begin{thebibliography}{82}%
\makeatletter
\providecommand \@ifxundefined [1]{%
 \@ifx{#1\undefined}
}%
\providecommand \@ifnum [1]{%
 \ifnum #1\expandafter \@firstoftwo
 \else \expandafter \@secondoftwo
 \fi
}%
\providecommand \@ifx [1]{%
 \ifx #1\expandafter \@firstoftwo
 \else \expandafter \@secondoftwo
 \fi
}%
\providecommand \natexlab [1]{#1}%
\providecommand \enquote  [1]{``#1''}%
\providecommand \bibnamefont  [1]{#1}%
\providecommand \bibfnamefont [1]{#1}%
\providecommand \citenamefont [1]{#1}%
\providecommand \href@noop [0]{\@secondoftwo}%
\providecommand \href [0]{\begingroup \@sanitize@url \@href}%
\providecommand \@href[1]{\@@startlink{#1}\@@href}%
\providecommand \@@href[1]{\endgroup#1\@@endlink}%
\providecommand \@sanitize@url [0]{\catcode `\\12\catcode `\$12\catcode
  `\&12\catcode `\#12\catcode `\^12\catcode `\_12\catcode `\%12\relax}%
\providecommand \@@startlink[1]{}%
\providecommand \@@endlink[0]{}%
\providecommand \url  [0]{\begingroup\@sanitize@url \@url }%
\providecommand \@url [1]{\endgroup\@href {#1}{\urlprefix }}%
\providecommand \urlprefix  [0]{URL }%
\providecommand \Eprint [0]{\href }%
\providecommand \doibase [0]{https://doi.org/}%
\providecommand \selectlanguage [0]{\@gobble}%
\providecommand \bibinfo  [0]{\@secondoftwo}%
\providecommand \bibfield  [0]{\@secondoftwo}%
\providecommand \translation [1]{[#1]}%
\providecommand \BibitemOpen [0]{}%
\providecommand \bibitemStop [0]{}%
\providecommand \bibitemNoStop [0]{.\EOS\space}%
\providecommand \EOS [0]{\spacefactor3000\relax}%
\providecommand \BibitemShut  [1]{\csname bibitem#1\endcsname}%
\let\auto@bib@innerbib\@empty
\bibitem [{\citenamefont {Minkowski}(1977)}]{Minkowski:1977sc}%
  \BibitemOpen
  \bibfield  {author} {\bibinfo {author} {\bibfnamefont {P.}~\bibnamefont
  {Minkowski}},\ }\href {https://doi.org/10.1016/0370-2693(77)90435-X}
  {\bibfield  {journal} {\bibinfo  {journal} {Phys. Lett. B}\ }\textbf
  {\bibinfo {volume} {67}},\ \bibinfo {pages} {421} (\bibinfo {year}
  {1977})}\BibitemShut {NoStop}%
\bibitem [{\citenamefont {Gell-Mann}\ \emph {et~al.}(1979)\citenamefont
  {Gell-Mann}, \citenamefont {Ramond},\ and\ \citenamefont
  {Slansky}}]{Gell-Mann:1979vob}%
  \BibitemOpen
  \bibfield  {author} {\bibinfo {author} {\bibfnamefont {M.}~\bibnamefont
  {Gell-Mann}}, \bibinfo {author} {\bibfnamefont {P.}~\bibnamefont {Ramond}},\
  and\ \bibinfo {author} {\bibfnamefont {R.}~\bibnamefont {Slansky}},\
  }\href@noop {} {\bibfield  {journal} {\bibinfo  {journal} {Conf. Proc. C}\
  }\textbf {\bibinfo {volume} {790927}},\ \bibinfo {pages} {315} (\bibinfo
  {year} {1979})},\ \Eprint {https://arxiv.org/abs/1306.4669} {arXiv:1306.4669
  [hep-th]} \BibitemShut {NoStop}%
\bibitem [{\citenamefont {Yanagida}(1979)}]{Yanagida:1979as}%
  \BibitemOpen
  \bibfield  {author} {\bibinfo {author} {\bibfnamefont {T.}~\bibnamefont
  {Yanagida}},\ }\href@noop {} {\bibfield  {journal} {\bibinfo  {journal}
  {Conf. Proc. C}\ }\textbf {\bibinfo {volume} {7902131}},\ \bibinfo {pages}
  {95} (\bibinfo {year} {1979})}\BibitemShut {NoStop}%
\bibitem [{\citenamefont {Mohapatra}\ and\ \citenamefont
  {Senjanovic}(1980)}]{Mohapatra:1979ia}%
  \BibitemOpen
  \bibfield  {author} {\bibinfo {author} {\bibfnamefont {R.~N.}\ \bibnamefont
  {Mohapatra}}\ and\ \bibinfo {author} {\bibfnamefont {G.}~\bibnamefont
  {Senjanovic}},\ }\href {https://doi.org/10.1103/PhysRevLett.44.912}
  {\bibfield  {journal} {\bibinfo  {journal} {Phys. Rev. Lett.}\ }\textbf
  {\bibinfo {volume} {44}},\ \bibinfo {pages} {912} (\bibinfo {year}
  {1980})}\BibitemShut {NoStop}%
\bibitem [{\citenamefont {Eichten}\ and\ \citenamefont
  {Preskill}(1986)}]{Eichten:1985ft}%
  \BibitemOpen
  \bibfield  {author} {\bibinfo {author} {\bibfnamefont {E.}~\bibnamefont
  {Eichten}}\ and\ \bibinfo {author} {\bibfnamefont {J.}~\bibnamefont
  {Preskill}},\ }\href {https://doi.org/10.1016/0550-3213(86)90207-5}
  {\bibfield  {journal} {\bibinfo  {journal} {Nucl. Phys. B}\ }\textbf
  {\bibinfo {volume} {268}},\ \bibinfo {pages} {179} (\bibinfo {year}
  {1986})}\BibitemShut {NoStop}%
\bibitem [{\citenamefont {Wang}\ and\ \citenamefont
  {You}(2022)}]{Wang:2022ucy}%
  \BibitemOpen
  \bibfield  {author} {\bibinfo {author} {\bibfnamefont {J.}~\bibnamefont
  {Wang}}\ and\ \bibinfo {author} {\bibfnamefont {Y.-Z.}\ \bibnamefont {You}},\
  }\href {https://doi.org/10.3390/sym14071475} {\bibfield  {journal} {\bibinfo
  {journal} {Symmetry}\ }\textbf {\bibinfo {volume} {14}},\ \bibinfo {pages}
  {1475} (\bibinfo {year} {2022})},\ \Eprint {https://arxiv.org/abs/2204.14271}
  {arXiv:2204.14271 [cond-mat.str-el]} \BibitemShut {NoStop}%
\bibitem [{\citenamefont {Tong}(2022)}]{Tong:2021phe}%
  \BibitemOpen
  \bibfield  {author} {\bibinfo {author} {\bibfnamefont {D.}~\bibnamefont
  {Tong}},\ }\href {https://doi.org/10.1007/JHEP07(2022)001} {\bibfield
  {journal} {\bibinfo  {journal} {JHEP}\ }\textbf {\bibinfo {volume} {07}},\
  \bibinfo {pages} {001}},\ \Eprint {https://arxiv.org/abs/2104.03997}
  {arXiv:2104.03997 [hep-th]} \BibitemShut {NoStop}%
\bibitem [{\citenamefont {Razamat}\ and\ \citenamefont
  {Tong}(2021)}]{Razamat:2020kyf}%
  \BibitemOpen
  \bibfield  {author} {\bibinfo {author} {\bibfnamefont {S.~S.}\ \bibnamefont
  {Razamat}}\ and\ \bibinfo {author} {\bibfnamefont {D.}~\bibnamefont {Tong}},\
  }\href {https://doi.org/10.1103/PhysRevX.11.011063} {\bibfield  {journal}
  {\bibinfo  {journal} {Phys. Rev. X}\ }\textbf {\bibinfo {volume} {11}},\
  \bibinfo {pages} {011063} (\bibinfo {year} {2021})},\ \Eprint
  {https://arxiv.org/abs/2009.05037} {arXiv:2009.05037 [hep-th]} \BibitemShut
  {NoStop}%
\bibitem [{\citenamefont {Hasenfratz}\ and\ \citenamefont
  {Witzel}(2026)}]{Hasenfratz:2025lti}%
  \BibitemOpen
  \bibfield  {author} {\bibinfo {author} {\bibfnamefont {A.}~\bibnamefont
  {Hasenfratz}}\ and\ \bibinfo {author} {\bibfnamefont {O.}~\bibnamefont
  {Witzel}} (\bibinfo {collaboration} {Lattice Strong Dynamics}),\ }\href
  {https://doi.org/10.22323/1.485.0467} {\bibfield  {journal} {\bibinfo
  {journal} {PoS}\ }\textbf {\bibinfo {volume} {EPS-HEP2025}},\ \bibinfo
  {pages} {467} (\bibinfo {year} {2026})},\ \Eprint
  {https://arxiv.org/abs/2511.22678} {arXiv:2511.22678 [hep-lat]} \BibitemShut
  {NoStop}%
\bibitem [{\citenamefont {Hasenfratz}\ and\ \citenamefont
  {Xu}(2026)}]{Hasenfratz:2026orf}%
  \BibitemOpen
  \bibfield  {author} {\bibinfo {author} {\bibfnamefont {A.}~\bibnamefont
  {Hasenfratz}}\ and\ \bibinfo {author} {\bibfnamefont {C.}~\bibnamefont
  {Xu}},\ }\href {https://doi.org/10.1016/j.aop.2026.170730} {\bibfield
  {journal} {\bibinfo  {journal} {Annals Phys.}\ }\textbf {\bibinfo {volume}
  {495}},\ \bibinfo {pages} {170730} (\bibinfo {year} {2026})},\ \Eprint
  {https://arxiv.org/abs/2604.02424} {arXiv:2604.02424 [hep-lat]} \BibitemShut
  {NoStop}%
\bibitem [{\citenamefont {Golterman}\ \emph {et~al.}(1993)\citenamefont
  {Golterman}, \citenamefont {Petcher},\ and\ \citenamefont
  {Rivas}}]{Golterman:1992yha}%
  \BibitemOpen
  \bibfield  {author} {\bibinfo {author} {\bibfnamefont {M.~F.~L.}\
  \bibnamefont {Golterman}}, \bibinfo {author} {\bibfnamefont {D.~N.}\
  \bibnamefont {Petcher}},\ and\ \bibinfo {author} {\bibfnamefont
  {E.}~\bibnamefont {Rivas}},\ }\href
  {https://doi.org/10.1016/0550-3213(93)90049-U} {\bibfield  {journal}
  {\bibinfo  {journal} {Nucl. Phys. B}\ }\textbf {\bibinfo {volume} {395}},\
  \bibinfo {pages} {596} (\bibinfo {year} {1993})},\ \Eprint
  {https://arxiv.org/abs/hep-lat/9206010} {arXiv:hep-lat/9206010} \BibitemShut
  {NoStop}%
\bibitem [{\citenamefont {Golterman}\ and\ \citenamefont
  {Shamir}(2024)}]{Golterman:2023zqf}%
  \BibitemOpen
  \bibfield  {author} {\bibinfo {author} {\bibfnamefont {M.}~\bibnamefont
  {Golterman}}\ and\ \bibinfo {author} {\bibfnamefont {Y.}~\bibnamefont
  {Shamir}},\ }\href {https://doi.org/10.1103/PhysRevLett.132.081903}
  {\bibfield  {journal} {\bibinfo  {journal} {Phys. Rev. Lett.}\ }\textbf
  {\bibinfo {volume} {132}},\ \bibinfo {pages} {081903} (\bibinfo {year}
  {2024})},\ \Eprint {https://arxiv.org/abs/2311.12790} {arXiv:2311.12790
  [hep-lat]} \BibitemShut {NoStop}%
\bibitem [{\citenamefont {Golterman}\ and\ \citenamefont
  {Shamir}(2026{\natexlab{a}})}]{Golterman:2025boq}%
  \BibitemOpen
  \bibfield  {author} {\bibinfo {author} {\bibfnamefont {M.}~\bibnamefont
  {Golterman}}\ and\ \bibinfo {author} {\bibfnamefont {Y.}~\bibnamefont
  {Shamir}},\ }\href {https://doi.org/10.1103/qd1t-wzy1} {\bibfield  {journal}
  {\bibinfo  {journal} {Phys. Rev. D}\ }\textbf {\bibinfo {volume} {113}},\
  \bibinfo {pages} {014503} (\bibinfo {year} {2026}{\natexlab{a}})},\ \Eprint
  {https://arxiv.org/abs/2505.20436} {arXiv:2505.20436 [hep-lat]} \BibitemShut
  {NoStop}%
\bibitem [{\citenamefont {You}\ \emph {et~al.}(2018)\citenamefont {You},
  \citenamefont {He}, \citenamefont {Xu},\ and\ \citenamefont
  {Vishwanath}}]{You:2017ltx}%
  \BibitemOpen
  \bibfield  {author} {\bibinfo {author} {\bibfnamefont {Y.-Z.}\ \bibnamefont
  {You}}, \bibinfo {author} {\bibfnamefont {Y.-C.}\ \bibnamefont {He}},
  \bibinfo {author} {\bibfnamefont {C.}~\bibnamefont {Xu}},\ and\ \bibinfo
  {author} {\bibfnamefont {A.}~\bibnamefont {Vishwanath}},\ }\href
  {https://doi.org/10.1103/PhysRevX.8.011026} {\bibfield  {journal} {\bibinfo
  {journal} {Phys. Rev. X}\ }\textbf {\bibinfo {volume} {8}},\ \bibinfo {pages}
  {011026} (\bibinfo {year} {2018})},\ \Eprint
  {https://arxiv.org/abs/1705.09313} {arXiv:1705.09313 [cond-mat.str-el]}
  \BibitemShut {NoStop}%
\bibitem [{\citenamefont {Garc{\'\i}a-Etxebarria}\ and\ \citenamefont
  {Montero}(2019)}]{Garcia-Etxebarria:2018ajm}%
  \BibitemOpen
  \bibfield  {author} {\bibinfo {author} {\bibfnamefont {I.}~\bibnamefont
  {Garc{\'\i}a-Etxebarria}}\ and\ \bibinfo {author} {\bibfnamefont
  {M.}~\bibnamefont {Montero}},\ }\href
  {https://doi.org/10.1007/JHEP08(2019)003} {\bibfield  {journal} {\bibinfo
  {journal} {JHEP}\ }\textbf {\bibinfo {volume} {08}},\ \bibinfo {pages}
  {003}},\ \Eprint {https://arxiv.org/abs/1808.00009} {arXiv:1808.00009
  [hep-th]} \BibitemShut {NoStop}%
\bibitem [{\citenamefont {Hsieh}(2018)}]{Hsieh:2018ifc}%
  \BibitemOpen
  \bibfield  {author} {\bibinfo {author} {\bibfnamefont {C.-T.}\ \bibnamefont
  {Hsieh}},\ }\href@noop {} {\bibinfo {title} {{Discrete gauge anomalies
  revisited}}} (\bibinfo {year} {2018}),\ \Eprint
  {https://arxiv.org/abs/1808.02881} {arXiv:1808.02881 [hep-th]} \BibitemShut
  {NoStop}%
\bibitem [{\citenamefont {Davighi}\ \emph {et~al.}(2020)\citenamefont
  {Davighi}, \citenamefont {Gripaios},\ and\ \citenamefont
  {Lohitsiri}}]{Davighi:2019rcd}%
  \BibitemOpen
  \bibfield  {author} {\bibinfo {author} {\bibfnamefont {J.}~\bibnamefont
  {Davighi}}, \bibinfo {author} {\bibfnamefont {B.}~\bibnamefont {Gripaios}},\
  and\ \bibinfo {author} {\bibfnamefont {N.}~\bibnamefont {Lohitsiri}},\ }\href
  {https://doi.org/10.1007/JHEP07(2020)232} {\bibfield  {journal} {\bibinfo
  {journal} {JHEP}\ }\textbf {\bibinfo {volume} {07}},\ \bibinfo {pages}
  {232}},\ \Eprint {https://arxiv.org/abs/1910.11277} {arXiv:1910.11277
  [hep-th]} \BibitemShut {NoStop}%
\bibitem [{\citenamefont {Wan}\ and\ \citenamefont {Wang}(2020)}]{Wan:2019gqr}%
  \BibitemOpen
  \bibfield  {author} {\bibinfo {author} {\bibfnamefont {Z.}~\bibnamefont
  {Wan}}\ and\ \bibinfo {author} {\bibfnamefont {J.}~\bibnamefont {Wang}},\
  }\href {https://doi.org/10.1007/JHEP07(2020)062} {\bibfield  {journal}
  {\bibinfo  {journal} {JHEP}\ }\textbf {\bibinfo {volume} {07}},\ \bibinfo
  {pages} {062}},\ \Eprint {https://arxiv.org/abs/1910.14668} {arXiv:1910.14668
  [hep-th]} \BibitemShut {NoStop}%
\bibitem [{\citenamefont {Wang}(2020{\natexlab{a}})}]{Wang:2020xyo}%
  \BibitemOpen
  \bibfield  {author} {\bibinfo {author} {\bibfnamefont {J.}~\bibnamefont
  {Wang}},\ }\href@noop {} {\bibinfo {title} {{Anomaly and Cobordism
  Constraints Beyond the Standard Model: Topological Force}}} (\bibinfo {year}
  {2020}{\natexlab{a}}),\ \Eprint {https://arxiv.org/abs/2006.16996}
  {arXiv:2006.16996 [hep-th]} \BibitemShut {NoStop}%
\bibitem [{\citenamefont {Weinberg}(1979)}]{Weinberg:1979sa}%
  \BibitemOpen
  \bibfield  {author} {\bibinfo {author} {\bibfnamefont {S.}~\bibnamefont
  {Weinberg}},\ }\href {https://doi.org/10.1103/PhysRevLett.43.1566} {\bibfield
   {journal} {\bibinfo  {journal} {Phys. Rev. Lett.}\ }\textbf {\bibinfo
  {volume} {43}},\ \bibinfo {pages} {1566} (\bibinfo {year}
  {1979})}\BibitemShut {NoStop}%
\bibitem [{\citenamefont {Wang}(2020{\natexlab{b}})}]{Wang:2020gqr}%
  \BibitemOpen
  \bibfield  {author} {\bibinfo {author} {\bibfnamefont {J.}~\bibnamefont
  {Wang}},\ }\href@noop {} {\bibinfo {title} {{Anomaly and Cobordism
  Constraints Beyond Grand Unification: Energy Hierarchy}}} (\bibinfo {year}
  {2020}{\natexlab{b}}),\ \Eprint {https://arxiv.org/abs/2008.06499}
  {arXiv:2008.06499 [hep-th]} \BibitemShut {NoStop}%
\bibitem [{\citenamefont {Butt}\ \emph {et~al.}(2026)\citenamefont {Butt},
  \citenamefont {Catterall}, \citenamefont {Hartshaw},\ and\ \citenamefont
  {Hasenfratz}}]{Butt:2026lkt}%
  \BibitemOpen
  \bibfield  {author} {\bibinfo {author} {\bibfnamefont {N.}~\bibnamefont
  {Butt}}, \bibinfo {author} {\bibfnamefont {S.}~\bibnamefont {Catterall}},
  \bibinfo {author} {\bibfnamefont {G.}~\bibnamefont {Hartshaw}},\ and\
  \bibinfo {author} {\bibfnamefont {A.}~\bibnamefont {Hasenfratz}},\
  }\href@noop {} {\bibinfo {title} {{Searching for symmetric mass generation
  with staggered fermions in four dimensions}}} (\bibinfo {year} {2026}),\
  \Eprint {https://arxiv.org/abs/2608.18239} {arXiv:2608.18239 [hep-lat]}
  \BibitemShut {NoStop}%
\bibitem [{\citenamefont {Pati}\ and\ \citenamefont
  {Salam}(1974)}]{Pati:1974yy}%
  \BibitemOpen
  \bibfield  {author} {\bibinfo {author} {\bibfnamefont {J.~C.}\ \bibnamefont
  {Pati}}\ and\ \bibinfo {author} {\bibfnamefont {A.}~\bibnamefont {Salam}},\
  }\href {https://doi.org/10.1103/PhysRevD.10.275} {\bibfield  {journal}
  {\bibinfo  {journal} {Phys. Rev. D}\ }\textbf {\bibinfo {volume} {10}},\
  \bibinfo {pages} {275} (\bibinfo {year} {1974})},\ \bibinfo {note} {[Erratum:
  Phys.Rev.D 11, 703--703 (1975)]}\BibitemShut {NoStop}%
\bibitem [{\citenamefont {Butt}\ \emph
  {et~al.}(2021{\natexlab{a}})\citenamefont {Butt}, \citenamefont {Catterall},
  \citenamefont {Pradhan},\ and\ \citenamefont {Toga}}]{Butt:2021brl}%
  \BibitemOpen
  \bibfield  {author} {\bibinfo {author} {\bibfnamefont {N.}~\bibnamefont
  {Butt}}, \bibinfo {author} {\bibfnamefont {S.}~\bibnamefont {Catterall}},
  \bibinfo {author} {\bibfnamefont {A.}~\bibnamefont {Pradhan}},\ and\ \bibinfo
  {author} {\bibfnamefont {G.~C.}\ \bibnamefont {Toga}},\ }\href
  {https://doi.org/10.1103/PhysRevD.104.094504} {\bibfield  {journal} {\bibinfo
   {journal} {Phys. Rev. D}\ }\textbf {\bibinfo {volume} {104}},\ \bibinfo
  {pages} {094504} (\bibinfo {year} {2021}{\natexlab{a}})},\ \Eprint
  {https://arxiv.org/abs/2101.01026} {arXiv:2101.01026 [hep-th]} \BibitemShut
  {NoStop}%
\bibitem [{\citenamefont {Catterall}(2021)}]{Catterall:2020fep}%
  \BibitemOpen
  \bibfield  {author} {\bibinfo {author} {\bibfnamefont {S.}~\bibnamefont
  {Catterall}},\ }\href {https://doi.org/10.1103/PhysRevD.104.014503}
  {\bibfield  {journal} {\bibinfo  {journal} {Phys. Rev. D}\ }\textbf {\bibinfo
  {volume} {104}},\ \bibinfo {pages} {014503} (\bibinfo {year} {2021})},\
  \Eprint {https://arxiv.org/abs/2010.02290} {arXiv:2010.02290 [hep-lat]}
  \BibitemShut {NoStop}%
\bibitem [{\citenamefont {Topa}(2026)}]{repo}%
  \BibitemOpen
  \bibfield  {author} {\bibinfo {author} {\bibfnamefont {P.~S.}\ \bibnamefont
  {Topa}},\ }\href@noop {} {\bibinfo {title} {{mass-as-pairing: companion code,
  derived data and machine-checked claims for this paper}}},\ \bibinfo
  {howpublished} {\url{https://github.com/PiotrTopa/mass-as-pairing}, tag
  \texttt{v1.0.1}, doi:10.5281/zenodo.23178329; raw chain archive Zenodo,
  doi:10.5281/zenodo.23117210} (\bibinfo {year} {2026})\BibitemShut {NoStop}%
\bibitem [{\citenamefont {Babu}\ and\ \citenamefont
  {Mohapatra}(1993)}]{Babu:1992ia}%
  \BibitemOpen
  \bibfield  {author} {\bibinfo {author} {\bibfnamefont {K.~S.}\ \bibnamefont
  {Babu}}\ and\ \bibinfo {author} {\bibfnamefont {R.~N.}\ \bibnamefont
  {Mohapatra}},\ }\href {https://doi.org/10.1103/PhysRevLett.70.2845}
  {\bibfield  {journal} {\bibinfo  {journal} {Phys. Rev. Lett.}\ }\textbf
  {\bibinfo {volume} {70}},\ \bibinfo {pages} {2845} (\bibinfo {year}
  {1993})},\ \Eprint {https://arxiv.org/abs/hep-ph/9209215}
  {arXiv:hep-ph/9209215} \BibitemShut {NoStop}%
\bibitem [{\citenamefont {Slansky}(1981)}]{Slansky:1981yr}%
  \BibitemOpen
  \bibfield  {author} {\bibinfo {author} {\bibfnamefont {R.}~\bibnamefont
  {Slansky}},\ }\href {https://doi.org/10.1016/0370-1573(81)90092-2} {\bibfield
   {journal} {\bibinfo  {journal} {Phys. Rept.}\ }\textbf {\bibinfo {volume}
  {79}},\ \bibinfo {pages} {1} (\bibinfo {year} {1981})}\BibitemShut {NoStop}%
\bibitem [{\citenamefont {Georgi}(1975)}]{Georgi:1974my}%
  \BibitemOpen
  \bibfield  {author} {\bibinfo {author} {\bibfnamefont {H.}~\bibnamefont
  {Georgi}},\ }\href {https://doi.org/10.1063/1.2947450} {\bibfield  {journal}
  {\bibinfo  {journal} {AIP Conf. Proc.}\ }\textbf {\bibinfo {volume} {23}},\
  \bibinfo {pages} {575} (\bibinfo {year} {1975})}\BibitemShut {NoStop}%
\bibitem [{\citenamefont {Fritzsch}\ and\ \citenamefont
  {Minkowski}(1975)}]{Fritzsch:1974nn}%
  \BibitemOpen
  \bibfield  {author} {\bibinfo {author} {\bibfnamefont {H.}~\bibnamefont
  {Fritzsch}}\ and\ \bibinfo {author} {\bibfnamefont {P.}~\bibnamefont
  {Minkowski}},\ }\href {https://doi.org/10.1016/0003-4916(75)90211-0}
  {\bibfield  {journal} {\bibinfo  {journal} {Annals Phys.}\ }\textbf {\bibinfo
  {volume} {93}},\ \bibinfo {pages} {193} (\bibinfo {year} {1975})}\BibitemShut
  {NoStop}%
\bibitem [{\citenamefont {Ibanez}\ and\ \citenamefont
  {Ross}(1991)}]{Ibanez:1991hv}%
  \BibitemOpen
  \bibfield  {author} {\bibinfo {author} {\bibfnamefont {L.~E.}\ \bibnamefont
  {Ibanez}}\ and\ \bibinfo {author} {\bibfnamefont {G.~G.}\ \bibnamefont
  {Ross}},\ }\href {https://doi.org/10.1016/0370-2693(91)91614-2} {\bibfield
  {journal} {\bibinfo  {journal} {Phys. Lett. B}\ }\textbf {\bibinfo {volume}
  {260}},\ \bibinfo {pages} {291} (\bibinfo {year} {1991})}\BibitemShut
  {NoStop}%
\bibitem [{\citenamefont {Chandrasekharan}(2010)}]{Chandrasekharan:2009wc}%
  \BibitemOpen
  \bibfield  {author} {\bibinfo {author} {\bibfnamefont {S.}~\bibnamefont
  {Chandrasekharan}},\ }\href {https://doi.org/10.1103/PhysRevD.82.025007}
  {\bibfield  {journal} {\bibinfo  {journal} {Phys. Rev. D}\ }\textbf {\bibinfo
  {volume} {82}},\ \bibinfo {pages} {025007} (\bibinfo {year} {2010})},\
  \Eprint {https://arxiv.org/abs/0910.5736} {arXiv:0910.5736 [hep-lat]}
  \BibitemShut {NoStop}%
\bibitem [{\citenamefont {Chandrasekharan}(2013)}]{Chandrasekharan:2013rpa}%
  \BibitemOpen
  \bibfield  {author} {\bibinfo {author} {\bibfnamefont {S.}~\bibnamefont
  {Chandrasekharan}},\ }\href {https://doi.org/10.1140/epja/i2013-13090-y}
  {\bibfield  {journal} {\bibinfo  {journal} {Eur. Phys. J. A}\ }\textbf
  {\bibinfo {volume} {49}},\ \bibinfo {pages} {90} (\bibinfo {year} {2013})},\
  \Eprint {https://arxiv.org/abs/1304.4900} {arXiv:1304.4900 [hep-lat]}
  \BibitemShut {NoStop}%
\bibitem [{\citenamefont {Wu}\ and\ \citenamefont {Zhang}(2005)}]{Wu:2005zzb}%
  \BibitemOpen
  \bibfield  {author} {\bibinfo {author} {\bibfnamefont {C.}~\bibnamefont
  {Wu}}\ and\ \bibinfo {author} {\bibfnamefont {S.-C.}\ \bibnamefont {Zhang}},\
  }\href {https://doi.org/10.1103/PhysRevB.71.155115} {\bibfield  {journal}
  {\bibinfo  {journal} {Phys. Rev. B}\ }\textbf {\bibinfo {volume} {71}},\
  \bibinfo {pages} {155115} (\bibinfo {year} {2005})},\ \Eprint
  {https://arxiv.org/abs/cond-mat/0407272} {arXiv:cond-mat/0407272}
  \BibitemShut {NoStop}%
\bibitem [{\citenamefont {Wei}\ \emph {et~al.}(2016)\citenamefont {Wei},
  \citenamefont {Wu}, \citenamefont {Li}, \citenamefont {Zhang},\ and\
  \citenamefont {Xiang}}]{Wei:2016sgb}%
  \BibitemOpen
  \bibfield  {author} {\bibinfo {author} {\bibfnamefont {Z.~C.}\ \bibnamefont
  {Wei}}, \bibinfo {author} {\bibfnamefont {C.}~\bibnamefont {Wu}}, \bibinfo
  {author} {\bibfnamefont {Y.}~\bibnamefont {Li}}, \bibinfo {author}
  {\bibfnamefont {S.}~\bibnamefont {Zhang}},\ and\ \bibinfo {author}
  {\bibfnamefont {T.}~\bibnamefont {Xiang}},\ }\href
  {https://doi.org/10.1103/PhysRevLett.116.250601} {\bibfield  {journal}
  {\bibinfo  {journal} {Phys. Rev. Lett.}\ }\textbf {\bibinfo {volume} {116}},\
  \bibinfo {pages} {250601} (\bibinfo {year} {2016})},\ \Eprint
  {https://arxiv.org/abs/1601.01994} {arXiv:1601.01994 [cond-mat.str-el]}
  \BibitemShut {NoStop}%
\bibitem [{\citenamefont {Li}\ \emph {et~al.}(2016)\citenamefont {Li},
  \citenamefont {Jiang},\ and\ \citenamefont {Yao}}]{Li:2016gte}%
  \BibitemOpen
  \bibfield  {author} {\bibinfo {author} {\bibfnamefont {Z.-X.}\ \bibnamefont
  {Li}}, \bibinfo {author} {\bibfnamefont {Y.-F.}\ \bibnamefont {Jiang}},\ and\
  \bibinfo {author} {\bibfnamefont {H.}~\bibnamefont {Yao}},\ }\href
  {https://doi.org/10.1103/PhysRevLett.117.267002} {\bibfield  {journal}
  {\bibinfo  {journal} {Phys. Rev. Lett.}\ }\textbf {\bibinfo {volume} {117}},\
  \bibinfo {pages} {267002} (\bibinfo {year} {2016})},\ \Eprint
  {https://arxiv.org/abs/1601.05780} {arXiv:1601.05780 [cond-mat.str-el]}
  \BibitemShut {NoStop}%
\bibitem [{\citenamefont {Catterall}(2016)}]{Catterall:2015zua}%
  \BibitemOpen
  \bibfield  {author} {\bibinfo {author} {\bibfnamefont {S.}~\bibnamefont
  {Catterall}},\ }\href {https://doi.org/10.1007/JHEP01(2016)121} {\bibfield
  {journal} {\bibinfo  {journal} {JHEP}\ }\textbf {\bibinfo {volume} {01}},\
  \bibinfo {pages} {121}},\ \Eprint {https://arxiv.org/abs/1510.04153}
  {arXiv:1510.04153 [hep-lat]} \BibitemShut {NoStop}%
\bibitem [{\citenamefont {Golterman}\ and\ \citenamefont
  {Smit}(1984)}]{Golterman:1984cy}%
  \BibitemOpen
  \bibfield  {author} {\bibinfo {author} {\bibfnamefont {M.~F.~L.}\
  \bibnamefont {Golterman}}\ and\ \bibinfo {author} {\bibfnamefont
  {J.}~\bibnamefont {Smit}},\ }\href
  {https://doi.org/10.1016/0550-3213(84)90424-3} {\bibfield  {journal}
  {\bibinfo  {journal} {Nucl. Phys. B}\ }\textbf {\bibinfo {volume} {245}},\
  \bibinfo {pages} {61} (\bibinfo {year} {1984})}\BibitemShut {NoStop}%
\bibitem [{\citenamefont {Kilcup}\ and\ \citenamefont
  {Sharpe}(1987)}]{Kilcup:1986dg}%
  \BibitemOpen
  \bibfield  {author} {\bibinfo {author} {\bibfnamefont {G.~W.}\ \bibnamefont
  {Kilcup}}\ and\ \bibinfo {author} {\bibfnamefont {S.~R.}\ \bibnamefont
  {Sharpe}},\ }\href {https://doi.org/10.1016/0550-3213(87)90285-9} {\bibfield
  {journal} {\bibinfo  {journal} {Nucl. Phys. B}\ }\textbf {\bibinfo {volume}
  {283}},\ \bibinfo {pages} {493} (\bibinfo {year} {1987})}\BibitemShut
  {NoStop}%
\bibitem [{\citenamefont {Fidkowski}\ and\ \citenamefont
  {Kitaev}(2010)}]{Fidkowski:2009dba}%
  \BibitemOpen
  \bibfield  {author} {\bibinfo {author} {\bibfnamefont {L.}~\bibnamefont
  {Fidkowski}}\ and\ \bibinfo {author} {\bibfnamefont {A.}~\bibnamefont
  {Kitaev}},\ }\href {https://doi.org/10.1103/PhysRevB.81.134509} {\bibfield
  {journal} {\bibinfo  {journal} {Phys. Rev. B}\ }\textbf {\bibinfo {volume}
  {81}},\ \bibinfo {pages} {134509} (\bibinfo {year} {2010})},\ \Eprint
  {https://arxiv.org/abs/0904.2197} {arXiv:0904.2197 [cond-mat.str-el]}
  \BibitemShut {NoStop}%
\bibitem [{\citenamefont {Fidkowski}\ and\ \citenamefont
  {Kitaev}(2011)}]{Fidkowski:2010jmn}%
  \BibitemOpen
  \bibfield  {author} {\bibinfo {author} {\bibfnamefont {L.}~\bibnamefont
  {Fidkowski}}\ and\ \bibinfo {author} {\bibfnamefont {A.}~\bibnamefont
  {Kitaev}},\ }\href {https://doi.org/10.1103/PhysRevB.83.075103} {\bibfield
  {journal} {\bibinfo  {journal} {Phys. Rev. B}\ }\textbf {\bibinfo {volume}
  {83}},\ \bibinfo {pages} {075103} (\bibinfo {year} {2011})},\ \Eprint
  {https://arxiv.org/abs/1008.4138} {arXiv:1008.4138 [cond-mat.str-el]}
  \BibitemShut {NoStop}%
\bibitem [{\citenamefont {Tong}\ and\ \citenamefont
  {Turner}(2020)}]{Tong:2019bbk}%
  \BibitemOpen
  \bibfield  {author} {\bibinfo {author} {\bibfnamefont {D.}~\bibnamefont
  {Tong}}\ and\ \bibinfo {author} {\bibfnamefont {C.}~\bibnamefont {Turner}},\
  }\href {https://doi.org/10.21468/SciPostPhysLectNotes.14} {\bibfield
  {journal} {\bibinfo  {journal} {SciPost Phys. Lect. Notes}\ }\textbf
  {\bibinfo {volume} {14}},\ \bibinfo {pages} {1} (\bibinfo {year} {2020})},\
  \Eprint {https://arxiv.org/abs/1906.07199} {arXiv:1906.07199 [hep-th]}
  \BibitemShut {NoStop}%
\bibitem [{\citenamefont {Catterall}(2023)}]{Catterall:2022jky}%
  \BibitemOpen
  \bibfield  {author} {\bibinfo {author} {\bibfnamefont {S.}~\bibnamefont
  {Catterall}},\ }\href {https://doi.org/10.1103/PhysRevD.107.014501}
  {\bibfield  {journal} {\bibinfo  {journal} {Phys. Rev. D}\ }\textbf {\bibinfo
  {volume} {107}},\ \bibinfo {pages} {014501} (\bibinfo {year} {2023})},\
  \Eprint {https://arxiv.org/abs/2209.03828} {arXiv:2209.03828 [hep-lat]}
  \BibitemShut {NoStop}%
\bibitem [{\citenamefont {Butt}\ \emph {et~al.}(2018)\citenamefont {Butt},
  \citenamefont {Catterall},\ and\ \citenamefont {Schaich}}]{Butt:2018nkn}%
  \BibitemOpen
  \bibfield  {author} {\bibinfo {author} {\bibfnamefont {N.}~\bibnamefont
  {Butt}}, \bibinfo {author} {\bibfnamefont {S.}~\bibnamefont {Catterall}},\
  and\ \bibinfo {author} {\bibfnamefont {D.}~\bibnamefont {Schaich}},\ }\href
  {https://doi.org/10.1103/PhysRevD.98.114514} {\bibfield  {journal} {\bibinfo
  {journal} {Phys. Rev. D}\ }\textbf {\bibinfo {volume} {98}},\ \bibinfo
  {pages} {114514} (\bibinfo {year} {2018})},\ \Eprint
  {https://arxiv.org/abs/1810.06117} {arXiv:1810.06117 [hep-lat]} \BibitemShut
  {NoStop}%
\bibitem [{\citenamefont {Catterall}\ and\ \citenamefont
  {Schaich}(2017)}]{Catterall:2016dzf}%
  \BibitemOpen
  \bibfield  {author} {\bibinfo {author} {\bibfnamefont {S.}~\bibnamefont
  {Catterall}}\ and\ \bibinfo {author} {\bibfnamefont {D.}~\bibnamefont
  {Schaich}},\ }\href {https://doi.org/10.1103/PhysRevD.96.034506} {\bibfield
  {journal} {\bibinfo  {journal} {Phys. Rev. D}\ }\textbf {\bibinfo {volume}
  {96}},\ \bibinfo {pages} {034506} (\bibinfo {year} {2017})},\ \Eprint
  {https://arxiv.org/abs/1609.08541} {arXiv:1609.08541 [hep-lat]} \BibitemShut
  {NoStop}%
\bibitem [{\citenamefont {Schaich}\ and\ \citenamefont
  {Catterall}(2018)}]{Schaich:2017czc}%
  \BibitemOpen
  \bibfield  {author} {\bibinfo {author} {\bibfnamefont {D.}~\bibnamefont
  {Schaich}}\ and\ \bibinfo {author} {\bibfnamefont {S.}~\bibnamefont
  {Catterall}},\ }\href {https://doi.org/10.1051/epjconf/201817503004}
  {\bibfield  {journal} {\bibinfo  {journal} {EPJ Web Conf.}\ }\textbf
  {\bibinfo {volume} {175}},\ \bibinfo {pages} {03004} (\bibinfo {year}
  {2018})},\ \Eprint {https://arxiv.org/abs/1710.08137} {arXiv:1710.08137
  [hep-lat]} \BibitemShut {NoStop}%
\bibitem [{\citenamefont {Ayyar}\ and\ \citenamefont
  {Chandrasekharan}(2016{\natexlab{a}})}]{Ayyar:2016lxq}%
  \BibitemOpen
  \bibfield  {author} {\bibinfo {author} {\bibfnamefont {V.}~\bibnamefont
  {Ayyar}}\ and\ \bibinfo {author} {\bibfnamefont {S.}~\bibnamefont
  {Chandrasekharan}},\ }\href {https://doi.org/10.1007/JHEP10(2016)058}
  {\bibfield  {journal} {\bibinfo  {journal} {JHEP}\ }\textbf {\bibinfo
  {volume} {10}},\ \bibinfo {pages} {058}},\ \Eprint
  {https://arxiv.org/abs/1606.06312} {arXiv:1606.06312 [hep-lat]} \BibitemShut
  {NoStop}%
\bibitem [{\citenamefont {Catterall}\ and\ \citenamefont
  {Butt}(2018)}]{Catterall:2017ogi}%
  \BibitemOpen
  \bibfield  {author} {\bibinfo {author} {\bibfnamefont {S.}~\bibnamefont
  {Catterall}}\ and\ \bibinfo {author} {\bibfnamefont {N.}~\bibnamefont
  {Butt}},\ }\href {https://doi.org/10.1103/PhysRevD.97.094502} {\bibfield
  {journal} {\bibinfo  {journal} {Phys. Rev. D}\ }\textbf {\bibinfo {volume}
  {97}},\ \bibinfo {pages} {094502} (\bibinfo {year} {2018})},\ \Eprint
  {https://arxiv.org/abs/1708.06715} {arXiv:1708.06715 [hep-lat]} \BibitemShut
  {NoStop}%
\bibitem [{\citenamefont {Butt}\ \emph {et~al.}(2025)\citenamefont {Butt},
  \citenamefont {Catterall},\ and\ \citenamefont {Hasenfratz}}]{Butt:2024kxi}%
  \BibitemOpen
  \bibfield  {author} {\bibinfo {author} {\bibfnamefont {N.}~\bibnamefont
  {Butt}}, \bibinfo {author} {\bibfnamefont {S.}~\bibnamefont {Catterall}},\
  and\ \bibinfo {author} {\bibfnamefont {A.}~\bibnamefont {Hasenfratz}},\
  }\href {https://doi.org/10.1103/PhysRevLett.134.031602} {\bibfield  {journal}
  {\bibinfo  {journal} {Phys. Rev. Lett.}\ }\textbf {\bibinfo {volume} {134}},\
  \bibinfo {pages} {031602} (\bibinfo {year} {2025})},\ \Eprint
  {https://arxiv.org/abs/2409.02062} {arXiv:2409.02062 [hep-lat]} \BibitemShut
  {NoStop}%
\bibitem [{\citenamefont {Gorbenko}\ \emph {et~al.}(2018)\citenamefont
  {Gorbenko}, \citenamefont {Rychkov},\ and\ \citenamefont
  {Zan}}]{Gorbenko:2018ncu}%
  \BibitemOpen
  \bibfield  {author} {\bibinfo {author} {\bibfnamefont {V.}~\bibnamefont
  {Gorbenko}}, \bibinfo {author} {\bibfnamefont {S.}~\bibnamefont {Rychkov}},\
  and\ \bibinfo {author} {\bibfnamefont {B.}~\bibnamefont {Zan}},\ }\href
  {https://doi.org/10.1007/JHEP10(2018)108} {\bibfield  {journal} {\bibinfo
  {journal} {JHEP}\ }\textbf {\bibinfo {volume} {10}},\ \bibinfo {pages}
  {108}},\ \Eprint {https://arxiv.org/abs/1807.11512} {arXiv:1807.11512
  [hep-th]} \BibitemShut {NoStop}%
\bibitem [{\citenamefont {Kaplan}\ \emph {et~al.}(2009)\citenamefont {Kaplan},
  \citenamefont {Lee}, \citenamefont {Son},\ and\ \citenamefont
  {Stephanov}}]{Kaplan:2009kr}%
  \BibitemOpen
  \bibfield  {author} {\bibinfo {author} {\bibfnamefont {D.~B.}\ \bibnamefont
  {Kaplan}}, \bibinfo {author} {\bibfnamefont {J.-W.}\ \bibnamefont {Lee}},
  \bibinfo {author} {\bibfnamefont {D.~T.}\ \bibnamefont {Son}},\ and\ \bibinfo
  {author} {\bibfnamefont {M.~A.}\ \bibnamefont {Stephanov}},\ }\href
  {https://doi.org/10.1103/PhysRevD.80.125005} {\bibfield  {journal} {\bibinfo
  {journal} {Phys. Rev. D}\ }\textbf {\bibinfo {volume} {80}},\ \bibinfo
  {pages} {125005} (\bibinfo {year} {2009})},\ \Eprint
  {https://arxiv.org/abs/0905.4752} {arXiv:0905.4752 [hep-th]} \BibitemShut
  {NoStop}%
\bibitem [{\citenamefont {Gurarie}(2011)}]{Gurarie:2011}%
  \BibitemOpen
  \bibfield  {author} {\bibinfo {author} {\bibfnamefont {V.}~\bibnamefont
  {Gurarie}},\ }\href {https://doi.org/10.1103/PhysRevB.83.085426} {\bibfield
  {journal} {\bibinfo  {journal} {Phys. Rev. B}\ }\textbf {\bibinfo {volume}
  {83}},\ \bibinfo {pages} {085426} (\bibinfo {year} {2011})},\ \Eprint
  {https://arxiv.org/abs/1011.2273} {arXiv:1011.2273 [cond-mat.mes-hall]}
  \BibitemShut {NoStop}%
\bibitem [{\citenamefont {You}\ \emph {et~al.}(2014{\natexlab{a}})\citenamefont
  {You}, \citenamefont {Wang}, \citenamefont {Oon},\ and\ \citenamefont
  {Xu}}]{You:2014cka}%
  \BibitemOpen
  \bibfield  {author} {\bibinfo {author} {\bibfnamefont {Y.-Z.}\ \bibnamefont
  {You}}, \bibinfo {author} {\bibfnamefont {Z.}~\bibnamefont {Wang}}, \bibinfo
  {author} {\bibfnamefont {J.}~\bibnamefont {Oon}},\ and\ \bibinfo {author}
  {\bibfnamefont {C.}~\bibnamefont {Xu}},\ }\href
  {https://doi.org/10.1103/PhysRevB.90.060502} {\bibfield  {journal} {\bibinfo
  {journal} {Phys. Rev. B}\ }\textbf {\bibinfo {volume} {90}},\ \bibinfo
  {pages} {060502} (\bibinfo {year} {2014}{\natexlab{a}})},\ \Eprint
  {https://arxiv.org/abs/1403.4938} {arXiv:1403.4938 [cond-mat.str-el]}
  \BibitemShut {NoStop}%
\bibitem [{\citenamefont {Xu}\ and\ \citenamefont {Xu}(2021)}]{Xu:2021ztz}%
  \BibitemOpen
  \bibfield  {author} {\bibinfo {author} {\bibfnamefont {Y.}~\bibnamefont
  {Xu}}\ and\ \bibinfo {author} {\bibfnamefont {C.}~\bibnamefont {Xu}},\
  }\href@noop {} {\bibinfo {title} {{Green's function Zero and Symmetric Mass
  Generation}}} (\bibinfo {year} {2021}),\ \Eprint
  {https://arxiv.org/abs/2103.15865} {arXiv:2103.15865 [cond-mat.str-el]}
  \BibitemShut {NoStop}%
\bibitem [{\citenamefont {Lu}\ \emph {et~al.}(2023{\natexlab{a}})\citenamefont
  {Lu}, \citenamefont {Zeng}, \citenamefont {Wang},\ and\ \citenamefont
  {You}}]{Lu:2022qtc}%
  \BibitemOpen
  \bibfield  {author} {\bibinfo {author} {\bibfnamefont {D.-C.}\ \bibnamefont
  {Lu}}, \bibinfo {author} {\bibfnamefont {M.}~\bibnamefont {Zeng}}, \bibinfo
  {author} {\bibfnamefont {J.}~\bibnamefont {Wang}},\ and\ \bibinfo {author}
  {\bibfnamefont {Y.-Z.}\ \bibnamefont {You}},\ }\href
  {https://doi.org/10.1103/PhysRevB.107.195133} {\bibfield  {journal} {\bibinfo
   {journal} {Phys. Rev. B}\ }\textbf {\bibinfo {volume} {107}},\ \bibinfo
  {pages} {195133} (\bibinfo {year} {2023}{\natexlab{a}})},\ \Eprint
  {https://arxiv.org/abs/2210.16304} {arXiv:2210.16304 [cond-mat.str-el]}
  \BibitemShut {NoStop}%
\bibitem [{\citenamefont {Lu}\ \emph {et~al.}(2023{\natexlab{b}})\citenamefont
  {Lu}, \citenamefont {Zeng},\ and\ \citenamefont {You}}]{Lu:2023cev}%
  \BibitemOpen
  \bibfield  {author} {\bibinfo {author} {\bibfnamefont {D.-C.}\ \bibnamefont
  {Lu}}, \bibinfo {author} {\bibfnamefont {M.}~\bibnamefont {Zeng}},\ and\
  \bibinfo {author} {\bibfnamefont {Y.-Z.}\ \bibnamefont {You}},\ }\href
  {https://doi.org/10.1103/PhysRevB.108.205117} {\bibfield  {journal} {\bibinfo
   {journal} {Phys. Rev. B}\ }\textbf {\bibinfo {volume} {108}},\ \bibinfo
  {pages} {205117} (\bibinfo {year} {2023}{\natexlab{b}})},\ \Eprint
  {https://arxiv.org/abs/2307.12223} {arXiv:2307.12223 [cond-mat.str-el]}
  \BibitemShut {NoStop}%
\bibitem [{\citenamefont {Golterman}\ and\ \citenamefont
  {Shamir}(2026{\natexlab{b}})}]{Golterman:2026sox}%
  \BibitemOpen
  \bibfield  {author} {\bibinfo {author} {\bibfnamefont {M.}~\bibnamefont
  {Golterman}}\ and\ \bibinfo {author} {\bibfnamefont {Y.}~\bibnamefont
  {Shamir}},\ }\href {https://doi.org/10.22323/1.518.0156} {\bibfield
  {journal} {\bibinfo  {journal} {PoS}\ }\textbf {\bibinfo {volume}
  {LATTICE2025}},\ \bibinfo {pages} {156} (\bibinfo {year}
  {2026}{\natexlab{b}})},\ \Eprint {https://arxiv.org/abs/2603.15985}
  {arXiv:2603.15985 [hep-lat]} \BibitemShut {NoStop}%
\bibitem [{\citenamefont {Chen}\ \emph {et~al.}(2013)\citenamefont {Chen},
  \citenamefont {Giedt},\ and\ \citenamefont {Poppitz}}]{Chen:2012di}%
  \BibitemOpen
  \bibfield  {author} {\bibinfo {author} {\bibfnamefont {C.}~\bibnamefont
  {Chen}}, \bibinfo {author} {\bibfnamefont {J.}~\bibnamefont {Giedt}},\ and\
  \bibinfo {author} {\bibfnamefont {E.}~\bibnamefont {Poppitz}},\ }\href
  {https://doi.org/10.1007/JHEP04(2013)131} {\bibfield  {journal} {\bibinfo
  {journal} {JHEP}\ }\textbf {\bibinfo {volume} {04}},\ \bibinfo {pages}
  {131}},\ \Eprint {https://arxiv.org/abs/1211.6947} {arXiv:1211.6947
  [hep-lat]} \BibitemShut {NoStop}%
\bibitem [{\citenamefont {Poppitz}\ and\ \citenamefont
  {Shang}(2010)}]{Poppitz:2010at}%
  \BibitemOpen
  \bibfield  {author} {\bibinfo {author} {\bibfnamefont {E.}~\bibnamefont
  {Poppitz}}\ and\ \bibinfo {author} {\bibfnamefont {Y.}~\bibnamefont
  {Shang}},\ }\href {https://doi.org/10.1142/S0217751X10049852} {\bibfield
  {journal} {\bibinfo  {journal} {Int. J. Mod. Phys. A}\ }\textbf {\bibinfo
  {volume} {25}},\ \bibinfo {pages} {2761} (\bibinfo {year} {2010})},\ \Eprint
  {https://arxiv.org/abs/1003.5896} {arXiv:1003.5896 [hep-lat]} \BibitemShut
  {NoStop}%
\bibitem [{\citenamefont {Kikukawa}(2019{\natexlab{a}})}]{Kikukawa:2017gvk}%
  \BibitemOpen
  \bibfield  {author} {\bibinfo {author} {\bibfnamefont {Y.}~\bibnamefont
  {Kikukawa}},\ }\href {https://doi.org/10.1093/ptep/ptz055} {\bibfield
  {journal} {\bibinfo  {journal} {PTEP}\ }\textbf {\bibinfo {volume} {2019}},\
  \bibinfo {pages} {073B02} (\bibinfo {year} {2019}{\natexlab{a}})},\ \Eprint
  {https://arxiv.org/abs/1710.11101} {arXiv:1710.11101 [hep-lat]} \BibitemShut
  {NoStop}%
\bibitem [{\citenamefont {Zeng}\ \emph {et~al.}(2022)\citenamefont {Zeng},
  \citenamefont {Zhu}, \citenamefont {Wang},\ and\ \citenamefont
  {You}}]{Zeng:2022grc}%
  \BibitemOpen
  \bibfield  {author} {\bibinfo {author} {\bibfnamefont {M.}~\bibnamefont
  {Zeng}}, \bibinfo {author} {\bibfnamefont {Z.}~\bibnamefont {Zhu}}, \bibinfo
  {author} {\bibfnamefont {J.}~\bibnamefont {Wang}},\ and\ \bibinfo {author}
  {\bibfnamefont {Y.-Z.}\ \bibnamefont {You}},\ }\href
  {https://doi.org/10.1103/PhysRevLett.128.185301} {\bibfield  {journal}
  {\bibinfo  {journal} {Phys. Rev. Lett.}\ }\textbf {\bibinfo {volume} {128}},\
  \bibinfo {pages} {185301} (\bibinfo {year} {2022})},\ \Eprint
  {https://arxiv.org/abs/2202.12355} {arXiv:2202.12355 [cond-mat.str-el]}
  \BibitemShut {NoStop}%
\bibitem [{\citenamefont {Araki}\ \emph {et~al.}(2026)\citenamefont {Araki},
  \citenamefont {Fukaya}, \citenamefont {Onogi},\ and\ \citenamefont
  {Yamaguchi}}]{Araki:2026mqi}%
  \BibitemOpen
  \bibfield  {author} {\bibinfo {author} {\bibfnamefont {S.}~\bibnamefont
  {Araki}}, \bibinfo {author} {\bibfnamefont {H.}~\bibnamefont {Fukaya}},
  \bibinfo {author} {\bibfnamefont {T.}~\bibnamefont {Onogi}},\ and\ \bibinfo
  {author} {\bibfnamefont {S.}~\bibnamefont {Yamaguchi}},\ }\href@noop {}
  {\bibinfo {title} {{Symmetric Mass Generation for Domain-Wall Fermions}}}
  (\bibinfo {year} {2026}),\ \Eprint {https://arxiv.org/abs/2608.29963}
  {arXiv:2608.29963 [hep-lat]} \BibitemShut {NoStop}%
\bibitem [{\citenamefont {Maiti}\ \emph {et~al.}(2025)\citenamefont {Maiti},
  \citenamefont {Banerjee}, \citenamefont {Chandrasekharan},\ and\
  \citenamefont {Marinkovic}}]{Maiti:2025ley}%
  \BibitemOpen
  \bibfield  {author} {\bibinfo {author} {\bibfnamefont {S.}~\bibnamefont
  {Maiti}}, \bibinfo {author} {\bibfnamefont {D.}~\bibnamefont {Banerjee}},
  \bibinfo {author} {\bibfnamefont {S.}~\bibnamefont {Chandrasekharan}},\ and\
  \bibinfo {author} {\bibfnamefont {M.~K.}\ \bibnamefont {Marinkovic}},\
  }\href@noop {} {\bibinfo {title} {{Symmetric mass generation as a
  multicritical point with enhanced symmetry}}} (\bibinfo {year} {2025}),\
  \Eprint {https://arxiv.org/abs/2512.24836} {arXiv:2512.24836 [hep-lat]}
  \BibitemShut {NoStop}%
\bibitem [{\citenamefont {Maiti}\ \emph {et~al.}(2026)\citenamefont {Maiti},
  \citenamefont {Banerjee}, \citenamefont {Chandrasekharan},\ and\
  \citenamefont {Marinkovic}}]{Maiti:2026log}%
  \BibitemOpen
  \bibfield  {author} {\bibinfo {author} {\bibfnamefont {S.}~\bibnamefont
  {Maiti}}, \bibinfo {author} {\bibfnamefont {D.}~\bibnamefont {Banerjee}},
  \bibinfo {author} {\bibfnamefont {S.}~\bibnamefont {Chandrasekharan}},\ and\
  \bibinfo {author} {\bibfnamefont {M.~K.}\ \bibnamefont {Marinkovic}},\
  }\href@noop {} {\bibinfo {title} {{Phase diagram of a lattice fermion model
  with symmetric mass generation}}} (\bibinfo {year} {2026}),\ \Eprint
  {https://arxiv.org/abs/2602.18360} {arXiv:2602.18360 [hep-lat]} \BibitemShut
  {NoStop}%
\bibitem [{\citenamefont {Brower}\ \emph {et~al.}(2016)\citenamefont {Brower},
  \citenamefont {Hasenfratz}, \citenamefont {Rebbi}, \citenamefont {Weinberg},\
  and\ \citenamefont {Witzel}}]{Brower:2015owo}%
  \BibitemOpen
  \bibfield  {author} {\bibinfo {author} {\bibfnamefont {R.~C.}\ \bibnamefont
  {Brower}}, \bibinfo {author} {\bibfnamefont {A.}~\bibnamefont {Hasenfratz}},
  \bibinfo {author} {\bibfnamefont {C.}~\bibnamefont {Rebbi}}, \bibinfo
  {author} {\bibfnamefont {E.}~\bibnamefont {Weinberg}},\ and\ \bibinfo
  {author} {\bibfnamefont {O.}~\bibnamefont {Witzel}},\ }\href
  {https://doi.org/10.1103/PhysRevD.93.075028} {\bibfield  {journal} {\bibinfo
  {journal} {Phys. Rev. D}\ }\textbf {\bibinfo {volume} {93}},\ \bibinfo
  {pages} {075028} (\bibinfo {year} {2016})},\ \Eprint
  {https://arxiv.org/abs/1512.02576} {arXiv:1512.02576 [hep-ph]} \BibitemShut
  {NoStop}%
\bibitem [{\citenamefont {Hasenfratz}\ \emph {et~al.}(2017)\citenamefont
  {Hasenfratz}, \citenamefont {Rebbi},\ and\ \citenamefont
  {Witzel}}]{Hasenfratz:2016gut}%
  \BibitemOpen
  \bibfield  {author} {\bibinfo {author} {\bibfnamefont {A.}~\bibnamefont
  {Hasenfratz}}, \bibinfo {author} {\bibfnamefont {C.}~\bibnamefont {Rebbi}},\
  and\ \bibinfo {author} {\bibfnamefont {O.}~\bibnamefont {Witzel}},\ }\href
  {https://doi.org/10.1016/j.physletb.2017.07.058} {\bibfield  {journal}
  {\bibinfo  {journal} {Phys. Lett. B}\ }\textbf {\bibinfo {volume} {773}},\
  \bibinfo {pages} {86} (\bibinfo {year} {2017})},\ \Eprint
  {https://arxiv.org/abs/1609.01401} {arXiv:1609.01401 [hep-ph]} \BibitemShut
  {NoStop}%
\bibitem [{\citenamefont {Wilczek}\ and\ \citenamefont
  {Zee}(1982)}]{Wilczek:1981iz}%
  \BibitemOpen
  \bibfield  {author} {\bibinfo {author} {\bibfnamefont {F.}~\bibnamefont
  {Wilczek}}\ and\ \bibinfo {author} {\bibfnamefont {A.}~\bibnamefont {Zee}},\
  }\href {https://doi.org/10.1103/PhysRevD.25.553} {\bibfield  {journal}
  {\bibinfo  {journal} {Phys. Rev. D}\ }\textbf {\bibinfo {volume} {25}},\
  \bibinfo {pages} {553} (\bibinfo {year} {1982})}\BibitemShut {NoStop}%
\bibitem [{\citenamefont {Wang}(2021)}]{Wang:2020mra}%
  \BibitemOpen
  \bibfield  {author} {\bibinfo {author} {\bibfnamefont {J.}~\bibnamefont
  {Wang}},\ }\href {https://doi.org/10.1103/PhysRevD.103.105024} {\bibfield
  {journal} {\bibinfo  {journal} {Phys. Rev. D}\ }\textbf {\bibinfo {volume}
  {103}},\ \bibinfo {pages} {105024} (\bibinfo {year} {2021})},\ \Eprint
  {https://arxiv.org/abs/2012.15860} {arXiv:2012.15860 [hep-th]} \BibitemShut
  {NoStop}%
\bibitem [{\citenamefont {Wen}(2013)}]{Wen:2013ppa}%
  \BibitemOpen
  \bibfield  {author} {\bibinfo {author} {\bibfnamefont {X.-G.}\ \bibnamefont
  {Wen}},\ }\href {https://doi.org/10.1088/0256-307X/30/11/111101} {\bibfield
  {journal} {\bibinfo  {journal} {Chin. Phys. Lett.}\ }\textbf {\bibinfo
  {volume} {30}},\ \bibinfo {pages} {111101} (\bibinfo {year} {2013})},\
  \Eprint {https://arxiv.org/abs/1305.1045} {arXiv:1305.1045 [hep-lat]}
  \BibitemShut {NoStop}%
\bibitem [{\citenamefont {You}\ \emph {et~al.}(2014{\natexlab{b}})\citenamefont
  {You}, \citenamefont {BenTov},\ and\ \citenamefont {Xu}}]{You:2014oaa}%
  \BibitemOpen
  \bibfield  {author} {\bibinfo {author} {\bibfnamefont {Y.}~\bibnamefont
  {You}}, \bibinfo {author} {\bibfnamefont {Y.}~\bibnamefont {BenTov}},\ and\
  \bibinfo {author} {\bibfnamefont {C.}~\bibnamefont {Xu}},\ }\href@noop {}
  {\bibinfo {title} {{Interacting Topological Superconductors and possible
  Origin of $16n$ Chiral Fermions in the Standard Model}}} (\bibinfo {year}
  {2014}{\natexlab{b}}),\ \Eprint {https://arxiv.org/abs/1402.4151}
  {arXiv:1402.4151 [cond-mat.str-el]} \BibitemShut {NoStop}%
\bibitem [{\citenamefont {Kikukawa}(2019{\natexlab{b}})}]{Kikukawa:2017ngf}%
  \BibitemOpen
  \bibfield  {author} {\bibinfo {author} {\bibfnamefont {Y.}~\bibnamefont
  {Kikukawa}},\ }\href {https://doi.org/10.1093/ptep/ptz115} {\bibfield
  {journal} {\bibinfo  {journal} {PTEP}\ }\textbf {\bibinfo {volume} {2019}},\
  \bibinfo {pages} {113B03} (\bibinfo {year} {2019}{\natexlab{b}})},\ \Eprint
  {https://arxiv.org/abs/1710.11618} {arXiv:1710.11618 [hep-lat]} \BibitemShut
  {NoStop}%
\bibitem [{\citenamefont {Wang}\ and\ \citenamefont
  {Wen}(2018)}]{Wang:2018ugf}%
  \BibitemOpen
  \bibfield  {author} {\bibinfo {author} {\bibfnamefont {J.}~\bibnamefont
  {Wang}}\ and\ \bibinfo {author} {\bibfnamefont {X.-G.}\ \bibnamefont {Wen}},\
  }\href {https://doi.org/10.1103/PhysRevD.99.111501} {\bibfield  {journal}
  {\bibinfo  {journal} {Phys. Rev. D}\ }\textbf {\bibinfo {volume} {99}},\
  \bibinfo {pages} {111501} (\bibinfo {year} {2018})},\ \Eprint
  {https://arxiv.org/abs/1807.05998} {arXiv:1807.05998 [hep-lat]} \BibitemShut
  {NoStop}%
\bibitem [{\citenamefont {Wang}\ and\ \citenamefont
  {Wen}(2020)}]{Wang:2018jkc}%
  \BibitemOpen
  \bibfield  {author} {\bibinfo {author} {\bibfnamefont {J.}~\bibnamefont
  {Wang}}\ and\ \bibinfo {author} {\bibfnamefont {X.-G.}\ \bibnamefont {Wen}},\
  }\href {https://doi.org/10.1103/PhysRevResearch.2.023356} {\bibfield
  {journal} {\bibinfo  {journal} {Phys. Rev. Res.}\ }\textbf {\bibinfo {volume}
  {2}},\ \bibinfo {pages} {023356} (\bibinfo {year} {2020})},\ \Eprint
  {https://arxiv.org/abs/1809.11171} {arXiv:1809.11171 [hep-th]} \BibitemShut
  {NoStop}%
\bibitem [{\citenamefont {Ayyar}\ and\ \citenamefont
  {Chandrasekharan}(2015)}]{Ayyar:2014eua}%
  \BibitemOpen
  \bibfield  {author} {\bibinfo {author} {\bibfnamefont {V.}~\bibnamefont
  {Ayyar}}\ and\ \bibinfo {author} {\bibfnamefont {S.}~\bibnamefont
  {Chandrasekharan}},\ }\href {https://doi.org/10.1103/PhysRevD.91.065035}
  {\bibfield  {journal} {\bibinfo  {journal} {Phys. Rev. D}\ }\textbf {\bibinfo
  {volume} {91}},\ \bibinfo {pages} {065035} (\bibinfo {year} {2015})},\
  \Eprint {https://arxiv.org/abs/1410.6474} {arXiv:1410.6474 [hep-lat]}
  \BibitemShut {NoStop}%
\bibitem [{\citenamefont {Ayyar}\ and\ \citenamefont
  {Chandrasekharan}(2016{\natexlab{b}})}]{Ayyar:2015lrd}%
  \BibitemOpen
  \bibfield  {author} {\bibinfo {author} {\bibfnamefont {V.}~\bibnamefont
  {Ayyar}}\ and\ \bibinfo {author} {\bibfnamefont {S.}~\bibnamefont
  {Chandrasekharan}},\ }\href {https://doi.org/10.1103/PhysRevD.93.081701}
  {\bibfield  {journal} {\bibinfo  {journal} {Phys. Rev. D}\ }\textbf {\bibinfo
  {volume} {93}},\ \bibinfo {pages} {081701} (\bibinfo {year}
  {2016}{\natexlab{b}})},\ \Eprint {https://arxiv.org/abs/1511.09071}
  {arXiv:1511.09071 [hep-lat]} \BibitemShut {NoStop}%
\bibitem [{\citenamefont {Butt}\ \emph
  {et~al.}(2021{\natexlab{b}})\citenamefont {Butt}, \citenamefont {Catterall},\
  and\ \citenamefont {Toga}}]{Butt:2021koj}%
  \BibitemOpen
  \bibfield  {author} {\bibinfo {author} {\bibfnamefont {N.}~\bibnamefont
  {Butt}}, \bibinfo {author} {\bibfnamefont {S.}~\bibnamefont {Catterall}},\
  and\ \bibinfo {author} {\bibfnamefont {G.~C.}\ \bibnamefont {Toga}},\ }\href
  {https://doi.org/10.3390/sym13122276} {\bibfield  {journal} {\bibinfo
  {journal} {Symmetry}\ }\textbf {\bibinfo {volume} {13}},\ \bibinfo {pages}
  {2276} (\bibinfo {year} {2021}{\natexlab{b}})},\ \Eprint
  {https://arxiv.org/abs/2111.01001} {arXiv:2111.01001 [hep-lat]} \BibitemShut
  {NoStop}%
\bibitem [{\citenamefont {Hasenfratz}(2022)}]{Hasenfratz:2022qan}%
  \BibitemOpen
  \bibfield  {author} {\bibinfo {author} {\bibfnamefont {A.}~\bibnamefont
  {Hasenfratz}},\ }\href {https://doi.org/10.1103/PhysRevD.106.014513}
  {\bibfield  {journal} {\bibinfo  {journal} {Phys. Rev. D}\ }\textbf {\bibinfo
  {volume} {106}},\ \bibinfo {pages} {014513} (\bibinfo {year} {2022})},\
  \Eprint {https://arxiv.org/abs/2204.04801} {arXiv:2204.04801 [hep-lat]}
  \BibitemShut {NoStop}%
\bibitem [{\citenamefont {Slagle}\ \emph {et~al.}(2015)\citenamefont {Slagle},
  \citenamefont {You},\ and\ \citenamefont {Xu}}]{Slagle:2014vma}%
  \BibitemOpen
  \bibfield  {author} {\bibinfo {author} {\bibfnamefont {K.}~\bibnamefont
  {Slagle}}, \bibinfo {author} {\bibfnamefont {Y.-Z.}\ \bibnamefont {You}},\
  and\ \bibinfo {author} {\bibfnamefont {C.}~\bibnamefont {Xu}},\ }\href
  {https://doi.org/10.1103/PhysRevB.91.115121} {\bibfield  {journal} {\bibinfo
  {journal} {Phys. Rev. B}\ }\textbf {\bibinfo {volume} {91}},\ \bibinfo
  {pages} {115121} (\bibinfo {year} {2015})},\ \Eprint
  {https://arxiv.org/abs/1409.7401} {arXiv:1409.7401 [cond-mat.str-el]}
  \BibitemShut {NoStop}%
\bibitem [{\citenamefont {He}\ \emph {et~al.}(2016)\citenamefont {He},
  \citenamefont {Wu}, \citenamefont {You}, \citenamefont {Xu}, \citenamefont
  {Meng},\ and\ \citenamefont {Lu}}]{He:2016sbs}%
  \BibitemOpen
  \bibfield  {author} {\bibinfo {author} {\bibfnamefont {Y.-Y.}\ \bibnamefont
  {He}}, \bibinfo {author} {\bibfnamefont {H.-Q.}\ \bibnamefont {Wu}}, \bibinfo
  {author} {\bibfnamefont {Y.-Z.}\ \bibnamefont {You}}, \bibinfo {author}
  {\bibfnamefont {C.}~\bibnamefont {Xu}}, \bibinfo {author} {\bibfnamefont
  {Z.~Y.}\ \bibnamefont {Meng}},\ and\ \bibinfo {author} {\bibfnamefont
  {Z.-Y.}\ \bibnamefont {Lu}},\ }\href
  {https://doi.org/10.1103/PhysRevB.94.241111} {\bibfield  {journal} {\bibinfo
  {journal} {Phys. Rev. B}\ }\textbf {\bibinfo {volume} {94}},\ \bibinfo
  {pages} {241111} (\bibinfo {year} {2016})},\ \Eprint
  {https://arxiv.org/abs/1603.08376} {arXiv:1603.08376 [cond-mat.str-el]}
  \BibitemShut {NoStop}%
\bibitem [{\citenamefont {He}\ \emph {et~al.}(2026)\citenamefont {He},
  \citenamefont {You},\ and\ \citenamefont {Xu}}]{He:2026oaf}%
  \BibitemOpen
  \bibfield  {author} {\bibinfo {author} {\bibfnamefont {C.-H.}\ \bibnamefont
  {He}}, \bibinfo {author} {\bibfnamefont {Y.-Z.}\ \bibnamefont {You}},\ and\
  \bibinfo {author} {\bibfnamefont {X.~Y.}\ \bibnamefont {Xu}},\ }\href
  {https://doi.org/10.1103/dn3v-f483} {\bibfield  {journal} {\bibinfo
  {journal} {Phys. Rev. B}\ }\textbf {\bibinfo {volume} {114}},\ \bibinfo
  {pages} {165125} (\bibinfo {year} {2026})},\ \Eprint
  {https://arxiv.org/abs/2603.18278} {arXiv:2603.18278 [cond-mat.str-el]}
  \BibitemShut {NoStop}%
\bibitem [{\citenamefont {Martin}\ and\ \citenamefont
  {Grover}(2025)}]{Martin:2025hic}%
  \BibitemOpen
  \bibfield  {author} {\bibinfo {author} {\bibfnamefont {S.}~\bibnamefont
  {Martin}}\ and\ \bibinfo {author} {\bibfnamefont {T.}~\bibnamefont
  {Grover}},\ }\href@noop {} {\bibinfo {title} {{A Perturbative Approach to
  Symmetric Mass Generation}}} (\bibinfo {year} {2025}),\ \Eprint
  {https://arxiv.org/abs/2507.23032} {arXiv:2507.23032 [cond-mat.str-el]}
  \BibitemShut {NoStop}%
\bibitem [{\citenamefont {Golterman}\ and\ \citenamefont
  {Petcher}(1989)}]{Golterman:1989df}%
  \BibitemOpen
  \bibfield  {author} {\bibinfo {author} {\bibfnamefont {M.~F.~L.}\
  \bibnamefont {Golterman}}\ and\ \bibinfo {author} {\bibfnamefont {D.~N.}\
  \bibnamefont {Petcher}},\ }\href
  {https://doi.org/10.1016/0370-2693(89)91028-9} {\bibfield  {journal}
  {\bibinfo  {journal} {Phys. Lett. B}\ }\textbf {\bibinfo {volume} {225}},\
  \bibinfo {pages} {159} (\bibinfo {year} {1989})}\BibitemShut {NoStop}%
\end{thebibliography}%

\end{document}